\def\OPT{{\rm OPT}}
\def\LP{{\rm LP}}
\def\DP{{\rm DP}}
\def\Cl{\underline{C}}
\def\Cu{\overline{C}}
\def\err{{\rm err}}
\def\com{{\rm com}}
\def\uP{\overline{{\rm P}}(n;R)}
\def\lP{\underline{{\rm P}}(n;R)}
\def\mix{{\rm mix}}
\def \Xscrt{\widetilde{\mathcal{X}}}
\def \Vscr{\mathcal{V}}
\def\talpha {\widetilde{\alpha}}
\def \alphah{\widehat{\alpha}}
\def \deltah{\widehat{\delta}}
\def \Thetah{\widehat{\Theta}}
\def \Xscrh{\widehat{\mathcal{X}}}
\def \Xscrt{\widetilde{\mathcal{X}}}
\def \Xh{\widehat{X}}
\def \Xt{\widetilde{X}}
\def \Yscrh{\widehat{\mathcal{Y}}}
\def \Yscrt{\widetilde{\mathcal{Y}}}
\def \Yh{\widehat{Y}}
\def \Yt{\widetilde{Y}}
\def \ugame{\overline{\vartheta}(n;R)}
\def \lgame{\underline{\vartheta}(n;R)}
\def \Uo{\overline{\Upsilon}(R)}
\def \Uu{\underline{\Upsilon}(R)}
\def\csf{\mathsf{c}}
\def\ssf{\mathsf{s}}
\newtheorem{theorem}{Theorem}[section]
\newtheorem{lemma}[theorem]{Lemma}
\newtheorem{proposition}[theorem]{Proposition}
\newtheorem{corollary}[theorem]{Corollary}
\def\bkE{{\rm I\kern-.17em E}}
\def\bk1{{\rm 1\kern-.17em l}}
\def\bkD{{\rm I\kern-.17em D}}
\def\bkR{{\rm I\kern-.17em R}}
\def\bkP{{\rm I\kern-.17em P}}
\def\bkZ{{\bf{Z}}}
\def\bkE{{\rm I\kern-.17em E}}
\def\bk1{{\rm 1\kern-.17em l}}
\def\bkD{{\rm I\kern-.17em D}}
\def\bkR{{\rm I\kern-.17em R}}
\def\bkP{{\rm I\kern-.17em P}}
\newcommand{\pushright}[1]{\ifmeasuring@#1\else\omit\hfill$\displaystyle#1$\fi\ignorespaces}
\newcommand{\pushleft}[1]{\ifmeasuring@#1\else\omit$\displaystyle#1$\hfill\fi\ignorespaces}
\def\bkZ{{\bf{Z}}}
\def\b12{(\beta_1,\beta_2)}
\newenvironment{proofarg}[1][]{\noindent\hspace{2em}{\itshape Proof #1: }}{\hspace*{\fill}~\qed\par\endtrivlist\unskip}
\newenvironment{example}{{\noindent \bf Example}}{\hfill $\square$\hspace{-4.5pt}\vspace{6pt}}
\newcounter{example}
\renewcommand{\theexample}{\thesection.\arabic{example}}
\newcounter{remark}
\renewcommand{\theremark}{\thesection.\arabic{remark}}
\newenvironment{remarkc}[1][]{\refstepcounter{remark}
\noindent{\itshape Remark~\theremark. #1} \rmfamily}{\hspace*{\fill}~$\square$\vspace{0pt}}
\def\t{^\top}
\def\Bscr{\mathscr{B}}
\def\Xscr{\mathcal{X}}
\def\Yscr{\mathcal{Y}}
\def\Ebb{\mathbb{E}}
\newlength{\noteWidth}
\long\def\notes#1{\ifinner
{\tiny #1}
\else
\marginpar{\parbox[t]{\noteWidth}{\raggedright\tiny #1}}
\fi\typeout{#1}}
 \def\notes#1{\typeout{read notes: #1}} 
\newcommand{\I}[1]{\mathbb{I}_{\{#1\}}}
\newcommand{\ie}{i.e.\@\xspace} 
\newcommand{\eg}{e.g.\@\xspace} 
\newcommand{\etal}{et al.\@\xspace} 
\newcommand{\Real}{\ensuremath{\mathbb{R}}}
\newcommand{\minimize}[1]{\displaystyle\minim_{#1}}
\newcommand{\minim}{\mathop{\hbox{\rm min}}}
\newcommand{\maximize}[1]{\displaystyle\maxim_{#1}}
\newcommand{\maxim}{\mathop{\hbox{\rm max}}}
\def\OPT{{\rm OPT}}
\def\Ebb{\mathbb{E}}
\def\Pbb{{\mathbb{P}}}
\def\Nbb{{\mathbb{N}}}
\def\Xbb{{\mathbb{X}}}
\def\Ybb{{\mathbb{Y}}}
\def\Ibb{{\mathbb{I}}}
\def\limn{\ds \lim_{n \rightarrow \infty}}
\def\exp{\mathop{\hbox{\rm exp}}}
\def\half  {{\textstyle{1\over 2}}}
\def\spose#1{\hbox to 0pt{#1\hss}}
\def\sub#1{^{\null}_{#1}}
\def\text #1{\hbox{\quad#1\quad}}
\def\fhat{{\hat f}}
\def\xhat{{\hat x}}
\def\nthinsp{\mskip -2   mu}
\def\superstar{^{\raise 0.5pt\hbox{$\nthinsp *$}}}
\def\SUPERSTAR{^{\raise 0.5pt\hbox{$*$}}}
\def\lamstarT {\lambda^{\raise 0.5pt\hbox{$\nthinsp *$}T}}
\def\Ascr{{\cal A}}
\def\Bscr{{\cal B}}
\def\Dscr{{\cal D}}
\def\Tscr{{\cal T}}
\def\Pscr{{\cal P}}
\def\Sscr{{\cal S}}
\def\Sscrhat{\widehat{\mathcal{S}}}
\def\Vscr{{\cal V}}
\def\Zscr{{\cal Z}}
\def\Xscr{{\cal X}}
\def\Yscr{{\cal Y}}
\def\fhat{\widehat f}
\def\ghat{\skew{4.3}\widehat g}
\def\gtilde{\skew{4.5}\widetilde g}
\def\Pbar{\skew5\bar P}
\def\shat{\widehat s}
\def\Shat{\widehat S}
\def\xhat{\skew{2.8}\widehat x}
\def\xtilde{\skew3\widetilde x}
\def\yhat{\skew3\widehat y}
\def\ytilde{\skew3\widetilde y}
\def\Ybar{\skew2\bar Y}
\def\Yhat{\widehat Y}
\def\aur{\;\textrm{and}\;}
\def\where{\;\textrm{where}\;}
\def\non{\nonumber}
\let\forallnew\forall
\renewcommand{\forall}{\forallnew\ }
\let\forall\forallnew
\def\ds{\displaystyle}
		\def\bkE{{\rm I\kern-.17em E}}
		\def\bk1{{\rm 1\kern-.17em l}}
		\def\bkD{{\rm I\kern-.17em D}}
		\def\bkR{{\rm I\kern-.17em R}}
		\def\bkP{{\rm I\kern-.17em P}}
		\def\bkY{{\bf \kern-.17em Y}}
		\def\bkZ{{\bf \kern-.17em Z}}
		\def\bkC{{\bf  \kern-.17em C}}
		\def\bsp{\begin{split}}
		\def\beq{\begin{eqnarray}}
		\def\bal{\begin{align*}}
		\def\bc{\begin{center}}
		\def\be{\begin{enumerate}}
		\def\bi{\begin{itemize}}
		\def\bs{\begin{small}}
		\def\bS{\begin{slide}}
		\def\ec{\end{center}}
		\def\ee{\end{enumerate}}
		\def\ei{\end{itemize}}
		\def\es{\end{small}}
		\def\eS{\end{slide}}
		\def\eeq{\end{eqnarray}}
		\def\eal{\end{align*}}
		\def\esp{\end{split}}
		\def\qed{ \vrule height7.5pt width7.5pt depth0pt}  
	\def\maxproblemsmall#1#2#3#4{\fbox
		 {\begin{tabular*}{0.47\textwidth}
			{@{}l@{\extracolsep{\fill}}l@{\extracolsep{6pt}}l@{\extracolsep{\fill}}c@{}}
				#1 & $\maximize{#2}$ & $#3$ & $ $ \\[5pt]
					 & $\subject\ $    & $#4$ & $ $
			\end{tabular*}}
			}
	\def\cp2problem#1#2#3#4{\fbox
		 {\begin{tabular*}{0.9\textwidth}
			{@{}l@{\extracolsep{\fill}}l@{\extracolsep{6pt}}l@{\extracolsep{\fill}}c@{}}
				#1 & & $#4 $ 
			\end{tabular*}}}
		\def\bkE{{\rm I\kern-.17em E}}
		\def\bk1{{\rm 1\kern-.17em l}}
		\def\bkD{{\rm I\kern-.17em D}}
		\def\bkR{{\rm I\kern-.17em R}}
		\def\bkP{{\rm I\kern-.17em P}}
		\def\bkZ{{\bf{Z}}}
\newcommand {\beeq}[1]{\begin{equation}\label{#1}}
\newcommand {\eeeq}{\end{equation}}
\newcommand {\bea}{\begin{eqnarray}}
\newcommand {\eea}{\end{eqnarray}}
\def\texitem#1{\par\smallskip\noindent\hangindent 25pt
               \hbox to 25pt {\hss #1 ~}\ignorespaces}
\def\bsp{\begin{split}}
		\def\beq{\begin{eqnarray}}
		\def\bal{\begin{align*}}
		\def\bc{\begin{center}}
		\def\be{\begin{enumerate}}
		\def\bi{\begin{itemize}}
		\def\bs{\begin{small}}
		\def\bS{\begin{slide}}
		\def\ec{\end{center}}
		\def\ee{\end{enumerate}}
		\def\ei{\end{itemize}}
		\def\es{\end{small}}
		\def\eS{\end{slide}}
		\def\eeq{\end{eqnarray}}
		\def\eal{\end{align*}}
		\def\esp{\end{split}}
		\def\qed{ \vrule height7.5pt width7.5pt depth0pt}  
                        \newenvironment{examplee}{{ \emph{Example:} }}{\hfill $\Box$ \vspace{6pt}}
\newenvironment{proof}[1][]{{\noindent \emph {Proof} #1: }}{\hfill \qed \vspace{3pt}\\ }
\def\sub{\hbox{\rm s.t}}
			\def\problemsmalla#1#2#3#4{\fbox
		 {\begin{tabular*}{0.47\textwidth}
			{@{}l@{\extracolsep{\fill}}l@{\extracolsep{-4pt}}l@{\extracolsep{\fill}}c@{}}
				#1 &  & $\minimize{#2}$  $#3$ & $ $ \\[4pt]
					  $\sub \ $  &   & $#4$ &  $ $
			\end{tabular*}}
			}
	\def\maxproblemsmall#1#2#3#4{\fbox
		 {\begin{tabular*}{0.47\textwidth}
			{@{}l@{\extracolsep{\fill}}l@{\extracolsep{-4pt}}l@{\extracolsep{\fill}}c@{}}
				#1 &  & $\maximize {#2}$ $#3$ & $ $ \\[4pt]
					 $\sub \ $  &   & $#4$ & $ $
			\end{tabular*}}
			}
			\renewcommand{\I}[1]{\mathbb{I}\{#1\}}
\author{\IEEEauthorblockN{Sharu Theresa Jose \quad Ankur A. Kulkarni\thanks{Ankur with the Systems and Control Engineering group, 
Indian Institute of Technology Bombay,
Mumbai, 400076, India. 
 This work was done partly while Sharu was a Ph.D. student in this group. They can be contacted at sharutheresa@iitb.ac.in, kulkarni.ankur@iitb.ac.in. This work will be presented in part at the IEEE Conference on Decision and Control, to be held in December 2018~\cite{jose2018game}.}}}
\title{Shannon meets von Neumann: A Minimax Theorem for Channel Coding in the Presence of a Jammer}
\begin{document}
\maketitle
\begin{abstract}
We study the setting of  channel coding over a family of channels whose state is controlled by an adversarial jammer by  viewing it as a zero-sum game between a finite blocklength encoder-decoder team, and the jammer. The encoder-decoder team choose stochastic encoding and decoding strategies to minimize the average probability of error in transmission, while the jammer chooses a distribution on the state-space to maximize this probability. The min-max value of this game is equivalent to channel coding for a compound channel -- we call this the Shannon solution of the problem. The max-min value corresponds to finding a mixed channel with the largest value of the minimum achievable probability of error. When the min-max and max-min values are equal, the problem is said to admit a saddle-point or von Neumann solution. 
While a Shannon solution always exists, a von Neumann solution need not, owing to inherent nonconvexity in the communicating team's problem. Despite this, we show that the min-max and max-min values become equal asymptotically in the large blocklength limit, for all but finitely many rates. We explicitly characterize this limiting value as a function of the rate and obtain tight finite blocklength bounds on the min-max and max-min value. As a corollary we get an explicit expression for the $\epsilon$-capacity of a compound channel under stochastic codes -- the first such result, to the best of our knowledge.  Our results demonstrate a deeper relation between the compound channel and mixed channel than was previously known. They also show that the conventional information-theoretic viewpoint, articulated via the Shannon solution,  coincides  asymptotically with the game-theoretic one articulated via the von Neumann solution. 
 Key to our results is the derivation of new finite blocklength upper bounds on the min-max value of the game via a novel achievability scheme, and lower bounds on the max-min value obtained via the linear programming relaxation based approach we introduced in \cite{jose2016linearIT}.

\end{abstract}
\section{Introduction}
Communication theory has traditionally studied the problem of communication in the presence of a jammer, only from the communicater's perspective. The solutions sought have concentrated on what the \textit{communication system} can achieve in the worst case over all possible scenarios; we call this the \textit{Shannon solution}. 
However, alongside one could also ask a dual question, namely, what is the maximum damage that the \textit{jammer} could achieve in the presence of an intelligent communication system?
A natural vantage point for analyzing the system from both points of view together is that of a \textit{game} between the communication system and the jammer. One could  then employ solution concepts from  game theory to `solve' the game and analyze the system; we refer to the resulting solution as the \textit{von Neumann solution}. 
The focus of this paper  is a game of the above kind in the finite blocklength regime and the relation between the Shannon and von Neumann solutions of  communication problems in the presence of adversarial action.

We define the following zero-sum game between a finite blocklength communication system and a finite state jammer. The communication system attempts to communicate a discrete uniform source across a discrete memoryless channel by making $n$ uses of the channel. The channel law is determined by its state  which is controlled by the jammer. The communication system comprises of a team of two decision makers -- an encoder and a decoder -- that together attempt to \textit{minimize} the average probability of error incurred in the transmission of the source.  
\begin{wrapfigure}{r}{.30\textwidth}
\includegraphics[scale=0.30,clip=true, trim=.5in 6.1in 0.5in 2.8in]{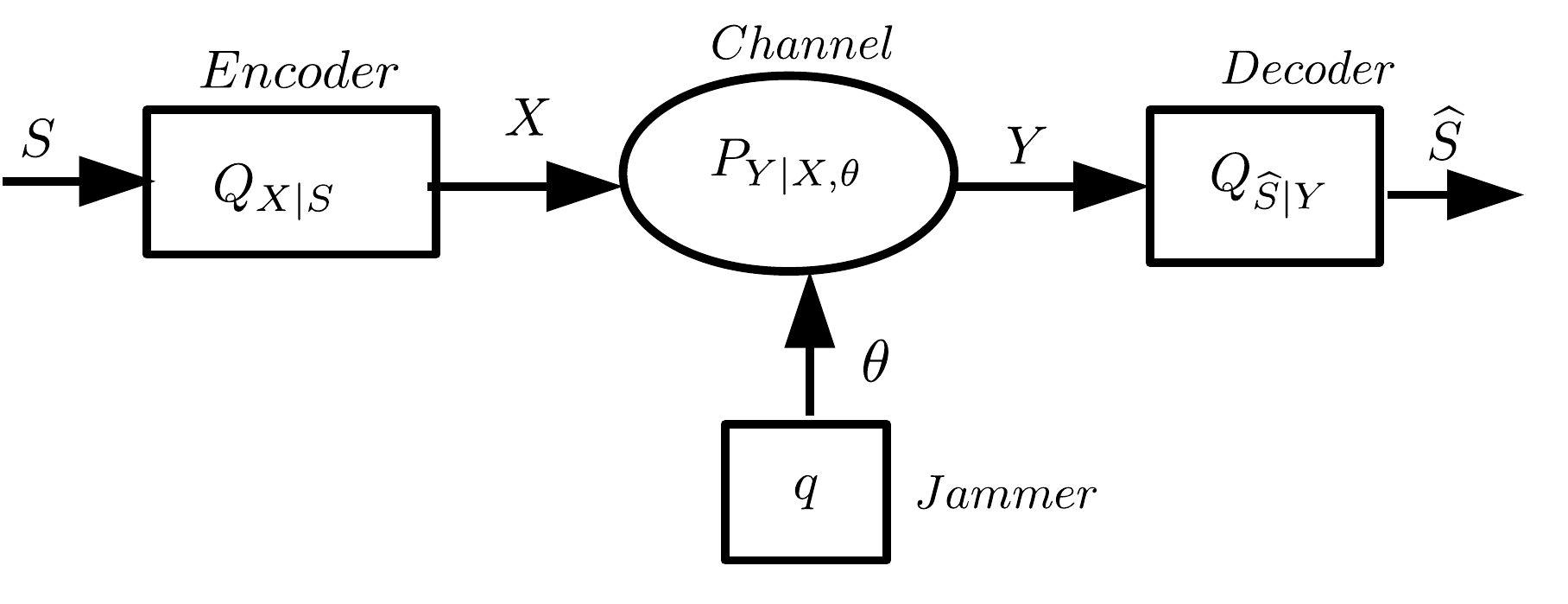}
\caption{A communication system with a finite state jammer}\label{fig:jammer} \vspace{-.5cm}
\end{wrapfigure}
The jammer chooses a state randomly so as to \textit{maximize} the error incurred in transmission. 

Following Fig~\ref{fig:jammer}, let $S$ denote a source message of rate $R$ drawn uniformly at random from $\Sscr:=\{1,\hdots , 2^{nR}\}$. Suppose the jammer chooses a channel state $\theta $ from a set $\Theta$ randomly according to a distribution $q$.  
The encoder and decoder choose \textit{stochastic} strategies. A stochastic encoder maps the source $S$ to an $n$-length channel input $X$ randomly according to a conditional distribution $Q_{X|S}$. 
The channel, in state $\theta$, outputs an $n$-length string denoted $Y$, which is decoded by a stochastic decoder that outputs $\Shat \in \Sscr$ according to a conditional distribution $Q_{\Shat|Y}$. 
We assume that the set of states, $\Theta$, is finite, and that the state once chosen remains fixed through the $n$ uses of the channel. 
The choice of the channel state by the jammer is  independent of the source message and the encoded message, and the actual state realized is not known to either encoder or decoder.

The min-max value or \textit{upper value} of this game is denoted $\ugame$ and is the optimal value of the following problem,
$$\problemsmalla{$\uP$\quad}
	{Q_{X|S},Q_{\Shat|Y}}
	{\displaystyle \max_{q}\quad \Ebb[\Ibb\{S \neq \Shat\}]}
				 {\begin{array}{r@{\ }c@{\ }l}
	Q_{X|S} \in \Pscr(\Xscr|\Sscr), Q_{\Shat|Y} \in \Pscr(\Sscrhat|\Yscr), q \in \Pscr(\Theta).
	\end{array}}$$
where $\Xscr,\Yscr$ are the sets of $n$-length channel inputs and outputs, respectively and $\Pscr(\cdot) $ is the set of probability distributions on `$\cdot$'. The expectation $\Ebb$ above is under the distribution induced by the code $(Q_{X|S},Q_{\Shat|Y})$ and the distribution $q$. 
	Problem $\uP$ thus corresponds to the minimum probability of error achievable by the encoder-decoder team in the worst case over all distributions $q$ chosen by the jammer.
The max-min value or \textit{lower value} of the game, denoted $\lgame$, is  the optimal value of the following problem,
	$$\maxproblemsmall{$\lP$\quad}
	{q}
	{\displaystyle \min_{Q_{X|S},Q_{\Shat|Y}}\quad \Ebb[\Ibb\{S \neq \Shat\}]}
				 {\begin{array}{r@{\ }c@{\ }l}
	Q_{X|S} \in \Pscr(\Xscr|\Sscr), Q_{\Shat|Y} \in \Pscr(\Sscrhat|\Yscr), q \in \Pscr(\Theta),
	\end{array}}$$
	and corresponds to the maximum probability of error achievable by the jammer in the worst case over all stochastic codes employed by the encoder-decoder team.
Note that the following relation always holds:
\begin{align}\bar{\vartheta}(n;R)\geq \underline{\vartheta}(n;R), \qquad \forall n\in \Nbb, R \in [0,\infty). \label{eq:minmax}\end{align} The above zero-sum game  is said to admit a \textit{saddle point} value, or a \textit{von Neumann solution}, if equality holds, \ie, if $\bar{\vartheta}(n;R)=\underline{\vartheta}(n;R). $

%

Having defined the game, we ask our first question. Does the game admit a saddle point? For each strategy of the jammer, the team problem of the encoder and decoder has \textit{nonclassical information structure}~\cite{witsenhausen_counterexample_1968}. As argued in~\cite{kulkarni2014optimizer}, in the space of stochastic codes, the communicating team's problem is nonconvex for each jammer strategy. 
This lack of convexity implies that the existence of a saddle point  is not guaranteed. 
Moreover, using the codes that form a saddle point (assuming one exists) may imply capacities, error exponents and suchlike that are distinct from those obtained from the Shannon solution. This leads us to our second question: do the answers obtained from the saddle point, \ie, the von Neumann solution, coincide with those obtained from the Shannon solution? 

We find that, despite the skepticism voiced above, answers to both these questions are in the affirmative in the large blocklength limit, for all but finite many values of the rate $R$ of the source. The main result in this paper shows that  
 the asymptotic value of the minimum probability of error achievable by a code of rate $R$, in the worst case over all jammer strategies, \textit{equals} the asymptotic value of the maximum probability of error that a jammer can induce, in the worst case over all possible codes of rate of $R$, for all rates $R$ barring some finitely many specific values. In other words, 
  \[\limn \ugame = \limn \lgame, \qquad \forall R \in [0,\infty) \backslash \mathscr{D},\] 
  where $\mathscr{D}$ is a finite set. 
Moreover, $\vartheta(R)$, which is defined as the value of the above limits when they are equal is given by 
\[\vartheta(R) = \max_{q \in \Pscr(\Theta)} \min_{\Theta'\subseteq \Theta} \biggl \lbrace 1 - \sum_{\theta \in \Theta'} q(\theta)| R<C(\Theta') \biggr\rbrace,\]
where $C(\Theta')$ is the capacity of the compound channel formed by $\Theta' \subseteq \Theta$, 
and $\mathscr{D}$ is precisely the set of points of discontinuity of the right-hand side above when viewed as a function of $R \in [0,\infty).$
 At a finite blocklength an \textit{approximate} minimax theorem holds, \ie, the difference $\ugame - \lgame$ becomes vanishingly small with $n$ for all $R\geq 0$ except in $\mathscr{D}$. Interestingly, the upper value, $\ugame$ also happens to equal the probability of error guaranteed by the Shannon solution. 
 Consequently, we find that as the blocklength becomes large, the Shannon solution comes in harmony with the von Neumann solution. As a corollary of these results we also obtain the $\epsilon$-capacity of a compound channel under stochastic codes, the first such result, to the best of our knowledge.

%
%

Despite the nonconvexity of the communicating team's problem argued in~\cite{kulkarni2014optimizer}, our recent results~\cite{jose2016linearIT,jose2017linearitw,jose2018linearAMAC} and \cite{jose2018improvedSW} have shown that all point-to-point problems and several network problems (without a jammer) admit a near-convexity, in the sense that they can be approximated asymptotically by a linear program. This leads us to conjecture that an `approximate' minimax theorem may be within reach, whose approximation becomes increasingly accurate as $n \rightarrow \infty.$ Our results show that this intuition is correct for almost all rates.

 Since the state once chosen is held fixed throughout the $n$ transmissions, problem $\uP$ corresponds to coding for a finite blocklength \textit{compound channel} \cite{blackwell1959capacity} under stochastic codes. Problem $\lP$ on the other hand amounts to foisting a distribution on the state space so that the resulting finite blocklength \textit{mixed} channel \cite{ahlswede1968weak} has the largest probability of error. 
To the best of our knowledge, ours is the first result showing that these are asymptotically the same. The compound and mixed channels have been known to be intimately related (see \cite{blackwell1959capacity,ahlswede1968weak}); in particular they have the same capacity. Our results show that there is an even deeper relation between them.

Interestingly, this minimax theorem breaks down if one considers the maximum probability of error criterion; there is always an interval of rates where there is a gap between the upper and lower values. Furthermore, to the best of our understanding, local randomization produced by stochastic codes seems essential and we have not been able to show similar results with deterministic codes. Local randomization by the encoder allows the code to randomly communicate with the correct codebook with a positive probability, regardless of the action of the jammer. Correctly choosing this randomization achieves the optimal performance. It is unclear if the same can be achieved with deterministic codes. Indeed,  formulating the problem in the space of stochastic codes was also key to our near-convexity results for coding problems found in~\cite{jose2016linearIT,jose2017linearitw,jose2018linearAMAC} and \cite{jose2018improvedSW}. 


There are two regimes where our result is rather easy to claim. Define,
\begin{equation}
\Cl=\max_{P_\Xbb} \min_{\theta \in \Theta}I_{P_{\Xbb}}(\Xbb;\Ybb|\theta), \ \ \Cu=\min_{\theta \in \Theta} \max_{P_{\Xbb}} I_{P_{\Xbb}}(\Xbb;\Ybb|\theta), \label{eq:clcu} 
\end{equation}
where $I_{P_{\Xbb}}(\Xbb;\Ybb|\theta)$ is the mutual information between the (single letter) channel input $\Xbb$ and channel output $\Ybb$ when the state is $\theta$ and the channel input distribution is $P_{\Xbb}$. $\Cl$ is the capacity of the compound channel (under deterministic and stochastic codes). Clearly, by definition of capacity, for $R<\Cl$, the upper value and hence the lower value of the game must go to zero as $n\rightarrow \infty$. $\Cu$ is the smallest of the capacities of individual DMCs defined by the state of the channel. For $R>\Cu$, the jammer can choose the state with smallest capacity (with probability one), whereby by the strong converse for channel coding for this DMC, the lower,  and hence upper values  must both approach unity. Thus the asymptotic  minimax theorem holds for $R< \Cl$ and $R>\Cu$. The nontrivial case is $\Cl<R<\Cu$ where we also show the minimax theorem with a new argument that exploits the stochastic encoding allowed.

The limiting saddle point value $\vartheta(R)$ happens to take a peculiar form. It is a step function wherein the points where jumps occur depend on the capacities of compound channels that correspond to subsets of $\Theta$. For example, in the case where $|\Theta|=2$, we have
\[\vartheta(R) =\begin{cases}
0 & \mbox{if} \hspace{0.1cm} R< \Cl, \\
\half & \mbox{if} \hspace{0.1cm}\Cl< R <\Cu, \\
1 & \mbox{if} \hspace{0.1cm} R > \Cu.
\end{cases}
\]
For $|\Theta|=3$ (say $\Theta=\{1,2,3\}$), assuming that the capacities of compound channels corresponding to subsets $\Theta$ satisfy, 
\begin{align*}
\Cl &<C(\{1,2\})<C(\{1,3\})<C(\{2,3\})<\overline{C}=C(\{1\}),
\end{align*} we get
\begin{align*}
\vartheta(R)=\begin{cases} 0 &\mbox{if} \hspace{0.1cm}  R < \Cl ,\\
 \frac{1}{3} &\mbox{if} \hspace{0.1cm}\Cl < R<C(\{1,2\}),\\
\frac{1}{2} &\mbox{if}\hspace{0.1cm} C(\{1,2\})< R\leq C(\{1,3\}),\\
\frac{1}{2} &\mbox{if} \hspace{0.1cm}C(\{1,3\}) < R<C(\{2,3\}),\\
\frac{2}{3} &\mbox{if} \hspace{0.1cm} C(\{2,3\})< R<\Cu, \\
1 & \mbox{if} \hspace{0.1cm} R > \Cu.
\end{cases}
\end{align*}
\begin{figure}
\includegraphics[scale=0.46,clip=true, trim=0.21in 2.5in 0in 1.1in]{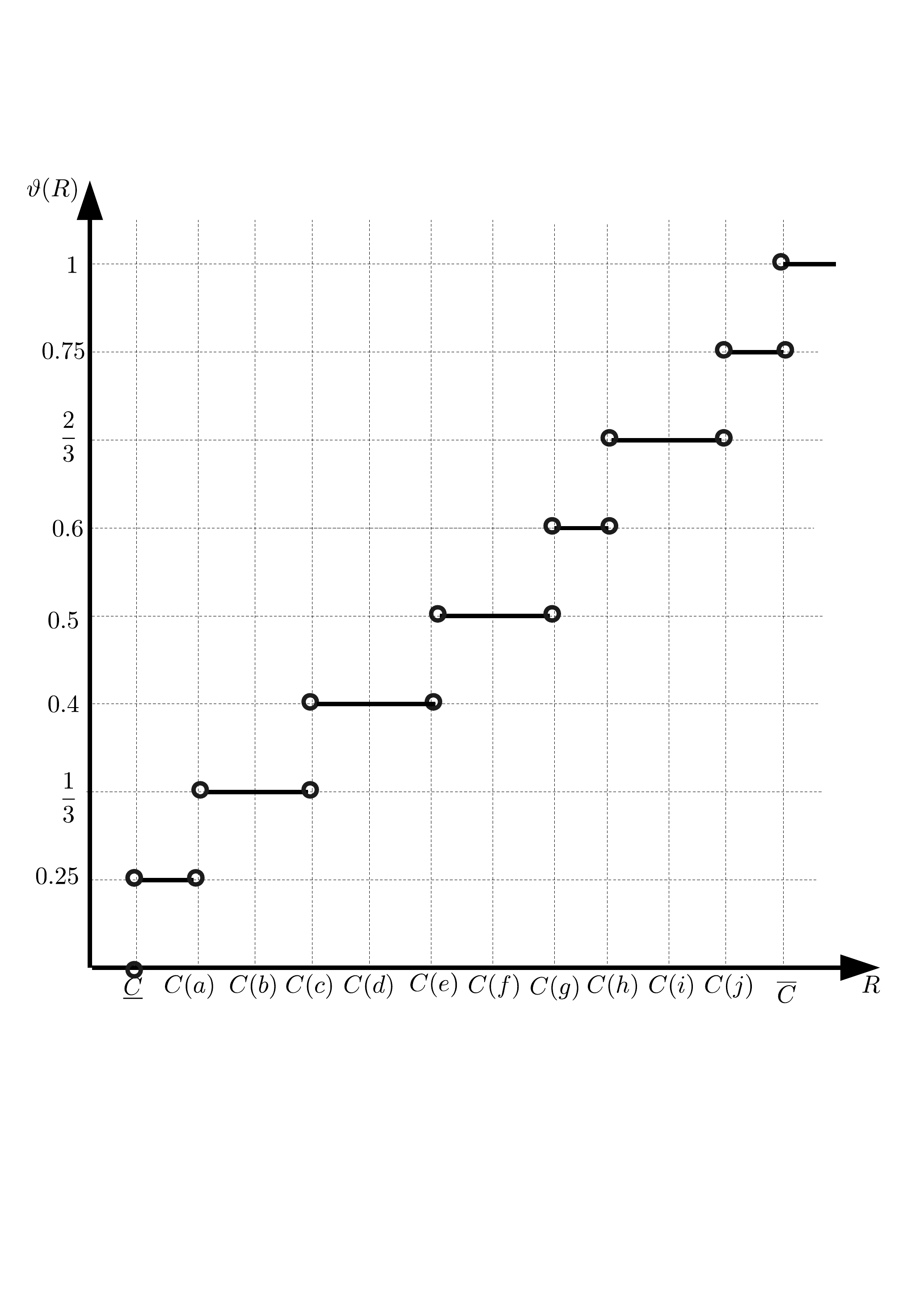}
  \caption{$\vartheta(R)$ for four-state jammer where $C(\Theta')$'s for $\Theta' \subset \Theta=\{1,2,3,4\}$ are assumed to satisfy: $\Cl<C(a)<C(b)<C(c)<C(d)<C(e)<C(f)<C(g)<C(h)<C(i)<C(j)<\Cu=C(\{1\})$ where $a=\{1,2,3\}, b=\{1,2,4\}, c=\{1,3,4\}, d=\{1,2\}, e=\{1,3\}, f=\{1,4\}, g=\{2,3,4\}, h=\{2,3\}, i=\{2,4\},j=\{3,4\}$. The hollow circle indicates an excluded end-point rate.}
  \label{fig:Theta4}
  \end{figure}

\noindent The saddle point value for $|\Theta|=4$ is shown in Fig \ref{fig:Theta4}.

To show our results, we find two categories of bounds: a lower bound on $\lgame$, which is equivalently a finite blocklength \textit{converse} for the mixed channel, and an upper bound on $\ugame$, which is equivalently a finite blocklength \textit{achievability} result for the compound channel. The former is obtained via the linear programming (LP) approach we introduced in~\cite{jose2016linearIT}.  For the latter, in the regimes $R<\Cl$ and $R>\Cu$, we employ a deterministic achievability scheme. In the intermediate region $\Cl<R<\Cu$, we employ a stochastic construction that randomly chooses between codebooks for compound channels formed by subsets of $\Theta$. This achievability scheme is novel in the sense that it was derived using the \textit{converse} as a reference, reinterpreting the latter via von Neumann's minimax theorem, then and attempting to match it, rather than the common approach which attempts to find a matching converse for an achievability scheme. 
\subsection{Related Work}
The existence of a saddle point solution for a game comprising of a communication system and a jammer has been studied multiple times in the LQG setting (\eg, \cite{basar1983gaussian}, \cite{basar1985complete}). For transmitting Gaussian sources over Gaussian channels controlled by a power-constrained jammer with access to the source, the existence of a mixed saddle point solution is shown in \cite{basar1986solutions}. An extension of the above work to general sources over additive noise channel has been considered in \cite{akyol2013optimal}. Closely related to our problem setup is the Arbitrarily Varying Channel (AVC) model (see \cite{lapidoth1998reliable}), where a jammer changes the state of the channel at each instant during the transmission of the encoded message. AVC is used to model packet-dropout adversarial channels in \cite{langbort2012one}, where the control problem over an adversarial channel is then posed as a zero-sum game between the jammer and the controller. For gaussian AVC's with power constraints, the problem of evaluating the asymptotic random coding capacity is shown to have an equivalent zero-sum game formulation between the transmitter and the jammer in \cite{hughes1987gaussian}. To the best of our knowledge ours is the only work that studies the setting we consider, where the channel is the compound channel.

\subsection{Organization}
Section~\ref{sec:infobackground} formulates the zero-sum game between the finite blocklength communication system and the finite state jammer. We also give a background on zero-sum games and 
 compound and mixed channels. In Section~\ref{sec:LP}, we present the LP-relaxation based framework to obtain a lower bound on the max-min value of the game. Section~\ref{sec:asymptextreme} presents a new upper bound on the min-max value of the game and analyses the asymptotic tightness of this bound to the lower bound obtained via LP for rates $R<\Cl$ and $R<\Cu$. In Section~\ref{sec:asymtinter}, we consider the intermediate rate region $\Cl<R<\Cu$. We derive a novel finite blocklength upper bound on the min-max value of the game and show that this bound is asymptotically equal to the LP-based lower bound for all but finitely many rates in this range. Finally, we conclude with Section~\ref{sec:conclusion}.
\subsection{Notation}
Throughout this paper, upper case letters $A,B$ represent random variables taking values in spaces represented by calligraphic letters $\Ascr, \Bscr$ respectively and lower case letters $a,b$ represent the specific values these random variables take. Let $\Ibb\{\bullet\}$ represent the indicator function which is one when $\bullet$ is true and is zero otherwise. $\Pscr(\bullet)$ denotes the set of all probability distributions on `$\bullet$'. Let
$\Zscr:=\Sscr \times \Xscr \times \Yscr \times \Sscrhat$ and $z:=(s,x,y,\shat) \in \Zscr$. Let $Q_{X|S}Q_{\Shat|Y}P_{Y|X,\theta}(z,\theta)\equiv Q_{X|S}(x|s)Q_{\Shat|Y}(\shat|y)P_{Y|X,\theta}(y|x,\theta)$. If $P$ represents an optimization problem, $\OPT(P)$ represents its optimal value. If $\Ascr$ represents a set, $\Ascr^c$ represents its compliment. Let L(R)HS represent Left (Right) Hand Side.
Let $x \in \Ascr^n$ be a $n$-length string. Then,
$$P(a|x)=\frac{\sum_{i=1}^n \Ibb\{x_i=a\}}{n}, \quad a \in \Ascr,$$ represents the type of $x$ (see \cite{csiszar2011information}). Note that $P(a|x)\geq 0$ for all $a \in \Ascr$ and $\sum_{a \in \Ascr}P(a|x)=1$, whereby $P \in \Pscr(\Ascr)$.
 Let $\Tscr^n$ represent the set of all types in $\Ascr^n$. From type counting lemma \cite{csiszar2011information}, we then have,
\begin{align}
|\Tscr^n|\leq (n+1)^{|\Ascr|}.\label{eq:typecounting}
\end{align}
 
\section{Background and Problem Formulation}\label{sec:infobackground}
\subsection{Zero-sum games}
A zero-sum game comprises of two players $P_1, P_2$, with strategies denoted $z_1 \in Z_1 $ and $z_2 \in Z_2$, respectively, and a function $c: Z_1 \times Z_2 \rightarrow \Real$ of these strategies. $P_1$ attempts to choose $z_1^* \in Z_1$ so as to minimize $c$ and $P_2$ attempts to choose $z_2^* \in Z_2$ so as to maximize this function, but the value each gets depends not only what the player chooses but also what the other player chooses. 
How would these players then play? von Neumann surmised that $P_1,P_2$ would each choose $z_1^*,z_2^*$, respectively, so as to minimize the worst case damage that the other player could do. \ie, 
\begin{equation}
z_1^* \in \arg \min_{z_1\in Z_1} \overline{c}(z_1), \quad \overline{c}(z_1):= \max_{z_2\in Z_2} c(z_1,z_2), \label{eq:sp1} 
\end{equation}
and
\begin{equation}
z_2^* \in \arg \max_{z_2\in Z_2}\underline{c}(z_2), \quad \underline{c}(z_2):=\min_{z_1\in Z_1} c(z_1,z_2). \label{eq:sp2} 
\end{equation}

Clearly, these calculations may not match up. Specifically, if $P_1$ plays $z_1^*$ it may not be optimal for $P_2$ to play $z_2^*$, or if $P_2$ plays $z_2^*$, $P_1$ may not want to stick to playing $z_1^*.$ Remarkably, von Neumann showed that in certain classes of games~\cite{neumann44games}, these calculations \textit{do} match up exactly and we get $\overline{c}(z_1^*)=\underline{c}(z_2^*)$.
In this case, we get the central solution concept in zero-sum games, proposed by von Neumann, namely, a \textit{saddle point}. $(z_1^*,z_2^*)$ are said to form a saddle point of the game if  
\begin{equation}
c(z_1^*,z_2) \leq c(z_1^*,z_2^*) \leq c(z_1,z_2^*) \quad \forall z_1\in Z_1, z_2\in Z_2.  \label{eq:sp0} 
\end{equation}
The existence of a saddle point is equivalent to the following interchangeability of the `min' and `max' operations on $c$, \ie,
\begin{equation}
\min_{z_1\in Z_1} \max_{z_2\in Z_2}c(z_1,z_2) = \max_{z_2\in Z_2}\min_{z_1\in Z_1}c(z_1,z_2).\label{eq:sp3} 
\end{equation}
In general, one always has that the left hand side of \eqref{eq:sp3} (called the \textit{upper value} of the game, and equal to $\overline{c}(z_1^*)$) is greater than or equal to the right hand side (called the \textit{lower value}, and equal to $\underline{c}(z_2^*)$). 
von Neumann's \textit{minimax theorem}~\cite{neumann44games} showed that equality holds when $Z_1,Z_2$ are finite dimensional probability simplices and $c(z_1,z_2) \equiv z_1\t A z_2$ for a  matrix $A$. 

\subsection{Communication in the presence of a jammer as a game}
We now present the problem of channel coding in presence of a jammer as a zero-sum game. Let $S$ represent the random source message distributed uniformly on $ \Sscr=\{1,\hdots,M\}$ with $M=2^{nR}$ and $R$ is the rate of transmission. 
Consider a family of channels $P_{\Ybb|\Xbb,\theta} \in \Pscr(\Bscr|\Ascr)$ with common finite input and output alphabets $\Ascr$ and $\Bscr$, respectively, parametrized by a state $\theta$ taking values in a finite set $\Theta$. 
Define  $\Xscr:=\Ascr^n$ and $\Yscr:=\Bscr^n$ as the spaces of $n$-length strings of channel inputs and outputs of any channel from this family. An encoder maps $S$ to an $n$-length ($n <\infty$) string $X \in\Xscr$ randomly according to the distribution $Q_{X|S}\in \Pscr(\Xscr|\Sscr)$. $X$ is subsequently sent through a channel in state $\theta \in \Theta$, represented by the stochastic kernel $P_{Y|X,\theta}$ to get the channel output $Y \in \Yscr$.
For each $\theta \in \Theta$, we assume that the channel is discrete and memoryless, \ie
\begin{align}
P_{Y|X,\theta}(y|x,\theta)=\prod_{i=1}^n P_{\Ybb|\Xbb,\theta}(y_i|x_i,\theta),\hspace{0.1cm} \forall x \in \Xscr, y \in \Yscr. \label{eq:DMC}
\end{align}
Subsequently, a decoder maps the channel output $Y$ randomly accordingly to $Q_{\Shat|Y} \in \Pscr(\Sscr|\Yscr)$ to get $\Shat \in \Sscr$. Here, $Q_{X|S}$ represents a \textit{stochastic encoder}, $Q_{\Shat|Y}$ represents a \textit{stochastic decoder} and together, they constitute a \textit{stochastic code} in the sense of Csiszar and Korner~\cite{csiszar2011information} or \textit{behavioral strategies} in the language of game theory~\cite{maschler2013game}. Note that this is distinct from a \textit{random code}. A deterministic code is a pair of functions $f:\Sscr \rightarrow \Xscr, g:\Yscr \rightarrow \Sscr$ and a random code is a randomly chosen pair of functions $f,g$. A jammer controls the state of the channel and chooses a state $\theta \in \Theta$ randomly according to some distribution $q \in \Pscr(\Theta)$ independent of $S,X$. State $\theta$ once chosen is held fixed through $n$-uses of the channel and the channel state chosen is not known to both the encoder and decoder.
 
This setting can be formulated as a zero-sum game  by taking $P_1$ as a \textit{team} comprising of the stochastic encoder and decoder $(Q_{X|S},Q_{\Shat|Y})$ and its strategy space as $Z_1=\Pscr(\Xscr|\Sscr) \times \Pscr(\Sscr|\Yscr)$, the set of all stochastic codes. 
 The jammer is player $P_2$ with its strategy space as $\Pscr(\Theta)$ and the cost function $c$ is the 
 average probability of error over all messages, $\Ebb[\Ibb\{S \neq \Shat\}]$ evaluated under the probability distribution induced by the strategies of the encoder-decoder team and the jammer.  
Since for each value of $\theta$, the random variables $S, X, Y,\Shat$ form a Markov chain in that order, we have
\begin{align}
\Ebb[\Ibb&\{S \neq \Shat\}|S=s,\theta] \non \\
&=\sum_{x,y,\shat}Q_{X|S}(x|s)Q_{\Shat|Y}(\shat|y)P_{Y|X,\theta}(y|x,\theta)\Ibb\{s\neq \shat\}, \label{eq:errorexp} 
\end{align}
and hence 
\begin{align*}
\Ebb[\I{S \neq \Shat}|S=s] &= \sum_{\theta\in \Theta} q(\theta)\Ebb[\Ibb\{S \neq \Shat\}|S=s,\theta],\\ 
\aur \quad \Ebb[\I{S\neq \Shat}] &= \frac{1}{M}\sum_{s \in \Sscr} \Ebb[\I{S \neq \Shat}|S=s].
\end{align*}

The central quest in this paper is to investigate  whether a minimax theorem can be claimed for this game. However, the answer to this question is in the negative in general when $n$ is finite since  while $\Ebb[\Ibb\{S \neq \Shat\}]$ is concave (in fact, linear) in $q$, it is nonconvex in $(Q_{X|S},Q_{\Shat|Y})$, the nonconvexity arising due to the presence of bilinear products $Q_{X|S}(x|s)Q_{\Shat|Y}(\shat|y)$ (see \eg,~\cite{kulkarni2014optimizer}) in~\eqref{eq:errorexp}. 
Consequently, a saddle point value may not exist for this game.
In this context, we explore an `approximate' minmax theorem for the zero-sum game, the approximation becoming increasingly accurate as blocklength $n$ increases. Towards this, we derive new upper bounds on $\ugame$ and lower bounds on $\lgame$ and show that a near-saddle point holds for the zero-sum game, \ie, for each $n \in \Nbb$, $0\leq \ugame-\lgame\leq \epsilon_n$, $\epsilon_n \in [0,1]$, where for all but finitely many values of the rate $R$, we have $\limn \epsilon_n =0$. Consequently, for such rates we get 
$$\limn \ugame=\limn\lgame=:\vartheta(R),$$  
where $\vartheta(R)$ is now the limiting  saddle-point value of the zero-sum game.
\subsection{Background on mixed and compound channels}
Our problem is intimately related to \textit{compound} and \textit{mixed} channels. We recall some background about them here and relate them to our problem~\cite{lapidoth1998reliable}. 
A compound channel is defined by a family of channels $\{P_{\Ybb|\Xbb,\theta}\}_{\theta \in \Theta}$ whose state $\theta$ is held fixed throughout the transmission. The average probability of error incurred by a code $Q_{X|S},Q_{\Shat|Y}$ over this compound channel is defined as
$$\epsilon_{\com}(Q_{X|S},Q_{\Shat|Y};n):=\max_{\theta \in \Theta} \sum_s \frac{1}{M}\Ebb[\Ibb\{S \neq \Shat\}|S=s,\theta].$$
For any $q\in \Pscr(\Theta)$, a mixed channel of blocklength $n$ is defined as $$P_{Y|X}^{(q)}(y|x)=\sum_{\theta \in \Theta}q(\theta)P_{Y|X,\theta}(y|x,\theta),$$ 
where $P_{Y|X,\theta}(y|x,\theta)$ is as defined in \eqref{eq:DMC}. The average probability of error incurred by a code in a mixed channel is defined in the usual sense as 
$$\epsilon_{\mix}(q,Q_{X|S},Q_{\Shat|Y};n):= \Ebb[\Ibb\{S\neq \Shat\}]. $$

From the linearity of $\Ebb[\Ibb\{S \neq \Shat\}]$ in $q$, it is easy to see that problem $\uP$ is equivalent to the finite blocklength channel coding of the compound channel under the average probability of error criterion (where the average is taken over all messsages), employing stochastic codes. Moreover, for each strategy $q \in \Pscr(\Theta)$ of the jammer, the inner minimization in problem $\lP$ corresponds to the finite blocklength channel coding of the mixed channel under the average probability of error criterion.
It is then evident that the bound $\epsilon_n$ mentioned at the end of the previous section can be established using a combination of a finite blocklength achievability for a compound channel and a finite blocklength converse for a mixed channel. Our main contribution is in showing that these asymptotically the same.

A rate $R$ is said to be achievable for a compound channel (or mixed channel) if there exists a sequence of stochastic codes $(Q_{X|S},Q_{\Shat|Y})$ with $M=2^{nR}$ such that $\epsilon_{\com}(Q_{X|S},Q_{\Shat|Y};n)$ (or $\epsilon_{\mix}(q,Q_{X|S},Q_{\Shat|Y};n)$)  goes to $0$ as $n \rightarrow \infty$. The supremum over all such achievable rates then gives
the \textit{capacity} of a compound channel (or mixed channel) under stochastic codes.
 Interestingly, the capacity (under both deterministic and stochastic codes) of a compound channel is (see \cite{blackwell1960capacities}) 
 $$ \Cl= \max_{P_{\Xbb}}\min_{\theta \in \Theta}I_{P_{\Xbb}}(\Xbb;\Ybb|\theta),$$  and the capacity of the mixed channel is (see~\cite{lapidoth1998reliable}) $$\max_{P_{\Xbb}}\min_{\theta: q(\theta)>0}I_{P_{\Xbb}}(\Xbb;\Ybb|\theta),$$  
where the maximization is over $P_{\Xbb}\in \Pscr(\Ascr)$ and $I_{P_{\Xbb}}(\Xbb;\Ybb|\theta):=$  
$$\sum_{a \in \Ascr, b \in \Bscr} P_{\Xbb}(a)P_{\Ybb|\Xbb,\theta}(b|a,\theta) \log \frac{ P_{\Ybb|\Xbb,\theta}(b|a,\theta)}{\sum_{a\in \Ascr}P_{\Xbb}(a)P_{\Ybb|\Xbb,\theta}(b|a,\theta)}, $$ represents the mutual information between random variables $\Xbb \in \Ascr$ and $\Ybb \in \Bscr$. Notice that the mixed channel capacity depends only on the support of $q$, and when the support is $\Theta$, it equals $\Cl$, the capacity of the compound channel. We assume throughout this paper that $\Cl>0.$

Under the average probability of error criterion, a strong converse does not hold for mixed and compound channels (see \cite{ahlswede1967certain,ahlswede1968weak}). 
However, if the average probability of error criterion is replaced with maximum probability of error (over all messages), \ie, $\max_{s\in \Sscr,\theta\in \Theta} \Ebb[\Ibb\{S \neq \Shat\}|S=s,\theta]$, a strong converse holds for the compound channel (\cite{lapidoth1998reliable}) but not the mixed channel (for the latter the criterion is $\max_{s\in \Sscr} \Ebb[\Ibb\{S \neq \Shat\}|S=s]$). We will see that this subtle difference plays an important role in our result.



\section{Finite Blocklength Lower Bounds}\label{sec:LP}
In this section, we obtain a lower bound on $\lgame$ via a linear programming relaxation of the minimization part of problem $\uP$.  For each $q \in \Pscr(\Theta)$, $\Ebb[\Ibb\{S \neq \Shat\}]$ is nonconvex in $(Q_{X|S},Q_{\Shat|Y})$, the nonconvexity resulting from the presence of bilinear products. Consequently, we relax this nonconvexity by a linear programming (LP) relaxation which results in a new min-max problem over \textit{relaxed codes}.

Towards this, we resort to a lift-and-project-like idea as illustrated in \cite{jose2016linearIT}. For sake of completeness, we quickly outline the technique here. In this approach, we introduce new variables,  $W(s,x,y,\shat)$ to replace the bilinear product terms $Q_{X|S}(x|s)Q_{\Shat|Y}(\shat|y)$ for each $s\in \Sscr, x \in \Xscr, y \in \Yscr, \shat \in \Sscrhat$, thereby lifting the nonconvex problem to a higher dimensional space of variables, that now includes $(Q_{X|S},Q_{\Shat|Y},W)$. New valid inequalities in terms of $W$ are obtained in this space. Towards this, for each $s \in \Sscr$, we multiply both sides of the equation $\sum_x Q_{X|S}(x|s)=1$ with $Q_{\Shat|Y}(\shat|y)$ for all $\shat,y$ and for each $y \in \Yscr$, multiply both sides of $\sum_{\shat}Q_{\Shat|Y}(\shat|y)=1$ with $Q_{X|S}(x|s)$ for all $x,s$. Replace the bilinear products in the resulting set of equations with the new variable $W(s,x,y,\shat)$ and add these to the original constraints. 
Thus, a \textit{lower bound} on $\ugame$ is given by the optimal value of the following problem,
$$\problemsmalla{$\LP(n;R)$\hspace{0.1cm}}
	{Q_{X|S},Q_{\Shat|Y},W}
	{\displaystyle \max_q \ \ \err(q,W)}
				 {\begin{array}{r@{\ }c@{\ }l}
Q_{X|S} \in \Pscr(\Xscr|\Sscr), Q_{\Shat|Y} &\in &\Pscr(\Sscrhat|\Yscr), q \in \Pscr(\Theta) \\
				 				 \sum_x W(z)-Q_{\Shat|Y}(\shat|y)&=& 0\hspace{0.1cm} \hspace{0.25cm} \forall s,\shat,y,\\
				 				 \sum_{\shat}W(z)-Q_{X|S}(x|s)&= &0 \hspace{0.1cm} \hspace{0.2cm} \forall s,x,y,
	\end{array}}$$ 
where $$\err(q,W)=\sum_{z,\theta}\frac{1}{M}W(z)\Ibb\{s \neq \shat\}q(\theta)P_{Y|X,\theta}(y|x,\theta).$$  A collection	$(Q_{X|S},Q_{\Shat|Y},W)$ satisfying the constraints above is called a \textit{relaxed code}.  Clearly $\ugame \geq \OPT(\LP(n;R)).$
Note that $\err(q,W)$ is convex in $(Q_{X|S},Q_{\Shat|Y},W)$ and linear in $q$. Moreover, the set of relaxed codes is convex and compact, and so is the set of $\Pscr(\Theta)$. Consequently, a minmax theorem holds, and the order of the minimization and maximization can be interchanged in $\LP(n;R)$. Thus,
\begin{align}
\OPT(\LP(n;R))=\max_{q} \min_{(Q_{X|S},Q_{\Shat|Y},W)} \err(q,W), \label{eq:interchange} 
\end{align}
where the minimization is over relaxed codes and the maximization is over $q \in \Pscr(\Theta).$ Now, for each $q \in \Pscr(\Theta),$ the inner  minimization  over relaxed codes in \eqref{eq:interchange} is a lower bound on the inner minimization $\min_{Q_{X|S},Q_{\Shat|Y}}\Ebb[\I{S\neq \Shat}]$ in $\lP$; it is precisely what one would obtain if one applied the above relaxation directly to this minimization. Consequently, $\OPT(\LP(n;R))$ is a lower bound \textit{also} on $\lgame$, whereby,
\begin{align}
\ugame&\geq \lgame
\geq \OPT(\LP(n;R)). \label{eq:lb} 
\end{align}
This substantiates our motivation -- if $\LP(n;R)$ is a tight relaxation of $\uP$, then the upper and lower values of the game ought to approach each other.

To obtain a lower bound from $\LP(n;R)$, we replace the inner minimization over relaxed codes in \eqref{eq:interchange} by its dual. 
The dual of the inner minimization in \eqref{eq:interchange} can be written as,
$$
\maxproblemsmall{$\DP(q,n;R)$}
	{\gamma_a^{(q)},\gamma_b^{(q)},\lambda_{\ssf}^{(q)},\lambda_{\ssf}^{(q)}}
	{\displaystyle \sum_{s}\gamma_a^{(q)}(s)+\sum_{y}\gamma_b^{(q)}(y)}
				 {\hspace{-.8cm}\begin{array}{r@{\ }c@{\ }l}
				 				 \gamma_a^{(q)}(s)- \sum_y \lambda_{\csf}^{(q)}(s,x,y) &\leq& 0 \quad \hspace{0.3cm} \hspace{0.3cm}\quad\forall x,s \quad (D1)\\
\gamma_b^{(q)}(y)- \sum_s \lambda_{\ssf}^{(q)}(s,\shat,y)&\leq& 0  \quad \hspace{0.3cm}\hspace{0.3cm} \quad\forall \shat,y\quad (D2)\\
\lambda_{\ssf}^{(q)}(s,\shat,y)+\lambda_{\csf}^{(q)}(s,x,y)& \leq & \Lambda^{(q)}(z) \quad \quad \forall z \quad (D3)\\
 	\end{array}}\vspace{-0.1cm}
	$$ where $$\Lambda^{(q)}(z)=\frac{1}{M}\Ibb\{s \neq \shat\}\sum_{\theta}P_{Y|X,\theta}(y|x,\theta)q(\theta),$$
	and $\lambda_{\ssf}^{(q)}:\Sscr \times \Sscrhat \times \Yscr \rightarrow \Real$, $\lambda_{\csf}^{(q)}: \Sscr \times \Xscr \times\Yscr \rightarrow \Real$, $\gamma_a^{(q)}: \Sscr \rightarrow \Real$ and $\gamma_b^{(q)}: \Yscr \rightarrow \Real$ are Lagrange multipliers (see~\cite{jose2016linearIT} for details). 
	From duality of linear programming, it then follows that $\OPT(\LP(n;R))=\max_{q \in \Pscr(\Theta)} \OPT(\DP(q,n;R))$. Towards evaluating $\OPT(\DP(q,n;R))$, 
note that it is optimal to take $\gamma_a^{(q)}(s)=\min_x \sum_y \lambda_{\csf}^{(q)}(s,x,y)$ and $\gamma_b^{(q)}(y)=\min_{\shat}\sum_s \lambda_{\ssf}^{(q)}(s,\shat,y)$. Consequently, we get $\OPT(\DP(q,n;R))=$
\begin{align*}
&\max_{\lambda_{\csf}^{(q)},\lambda_{\ssf}^{(q)}}\biggl \lbrace \sum_s \min_x \sum_y \lambda_{\csf}^{(q)}(s,x,y)+\sum_y \min_{\shat}\sum_s \lambda_{\ssf}^{(q)}(s,\shat,y)\biggr \rbrace \non\\
&\mbox{s.t} \qquad \lambda_{\ssf}^{(q)}(s,\shat,y)+\lambda_{\csf}^{(q)}(s,x,y) \leq  \Lambda^{(q)}(z) \quad \forall z.
\end{align*}
The following lemma then outlines our framework for deriving lower bound on $\lgame$.
\begin{lemma}\label{lem:framework}
Any choice of functions $\lambda_{\ssf}^{(q)}:\Sscr \times \Sscrhat \times \Yscr \rightarrow \Real$, $\lambda_{\csf}^{(q)}:\Xscr \times \Sscr \times \Yscr \rightarrow \Real$ satisfying constraint (D3) yields the following lower bound on $\lgame$,
\begin{align*}
& \lgame
\geq \max_{q}\OPT(\DP(q,n;R)) \non\\& \geq \max_{q} \biggl \lbrace \hspace{-0.1cm} \sum_s \min_x \sum_y \lambda_{\csf}^{(q)}(s,x,y) +\sum_y \min_{\shat}\sum_s \lambda_{\ssf}^{(q)}(s,\shat,y)\biggr \rbrace. 
\end{align*}
\end{lemma}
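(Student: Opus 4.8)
The plan is to assemble the lemma from the pieces already in place in this section, so the proof is essentially bookkeeping around linear‑programming duality. The first displayed inequality is immediate: \eqref{eq:lb} gives $\lgame \geq \OPT(\LP(n;R))$, and the LP‑duality identity established just above, $\OPT(\LP(n;R)) = \max_{q\in\Pscr(\Theta)} \OPT(\DP(q,n;R))$, then yields $\lgame \geq \max_q \OPT(\DP(q,n;R))$. If one wants to be fully explicit about that identity, one notes that \eqref{eq:interchange} (a minimax theorem on the convex, compact set of relaxed codes and the simplex $\Pscr(\Theta)$, applicable since $\err$ is linear in $q$ and convex in the relaxed code) turns $\OPT(\LP(n;R))$ into $\max_q \min_{(Q_{X|S},Q_{\Shat|Y},W)} \err(q,W)$, and that for each fixed $q$ the inner program is a feasible (take $W(z)=Q_{X|S}(x|s)Q_{\Shat|Y}(\shat|y)$) and bounded‑below linear program, so strong duality equates its value with $\OPT(\DP(q,n;R))$.

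For the second inequality I would work at a fixed $q$. As already observed, the optimal choice of the remaining dual variables in $\DP(q,n;R)$ is $\gamma_a^{(q)}(s)=\min_x\sum_y\lambda_{\csf}^{(q)}(s,x,y)$ and $\gamma_b^{(q)}(y)=\min_{\shat}\sum_s\lambda_{\ssf}^{(q)}(s,\shat,y)$, which leaves (D1) and (D2) satisfied and makes the $\DP$ objective as large as possible given the $\lambda$'s; hence
\begin{align*}
\OPT(\DP(q,n;R)) = \max_{\lambda_{\csf}^{(q)},\lambda_{\ssf}^{(q)}} \biggl\lbrace \sum_s \min_x \sum_y \lambda_{\csf}^{(q)}(s,x,y) + \sum_y \min_{\shat} \sum_s \lambda_{\ssf}^{(q)}(s,\shat,y) \biggr\rbrace,
\end{align*}
the maximum taken over all $\lambda_{\csf}^{(q)},\lambda_{\ssf}^{(q)}$ obeying (D3). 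Any single admissible pair satisfying (D3) is therefore a feasible point of this maximization, so its objective value is no larger than $\OPT(\DP(q,n;R))$. Taking $\max_q$ of both sides --- legitimate precisely because the dual multipliers $\lambda^{(q)}$ are allowed to depend on $q$ --- and concatenating with the first inequality gives the claimed chain.

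I do not anticipate a real obstacle: the statement is LP weak duality for the second inequality and LP strong duality plus the minimax interchange for the first. The only steps that merit a sentence of care are checking that \eqref{eq:interchange} is a bona fide application of a minimax theorem (the feasible sets are convex and compact and the payoff is bilinear, so even von Neumann's theorem suffices) and that every inner LP satisfies the mild feasibility/boundedness needed for strong duality --- both trivially true in this finite‑dimensional setting. It is also worth flagging, as above, that the $q$‑dependence of the dual multipliers is exactly what lets the outer maximization be carried through the second inequality intact.
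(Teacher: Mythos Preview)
Your proposal is correct and matches the paper's approach exactly: the paper does not even give a separate proof of this lemma, instead deriving all the ingredients---\eqref{eq:lb}, the minimax interchange \eqref{eq:interchange}, LP strong duality, and the optimal choice of $\gamma_a^{(q)},\gamma_b^{(q)}$---in the text immediately preceding the lemma statement, so that the result is an immediate consequence. Your write-up simply makes this assembly explicit, with the same justifications (bilinearity/compactness for the minimax step, LP duality for the identification with $\DP(q,n;R)$, and weak duality for the final inequality).
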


 In fact, a particular choice of functions $\lambda_{\ssf}^{(q)}$, $\lambda_{\csf}^{(q)}$  satisfying constraint (D3) of $\DP(q,n;R)$ for each $q \in \Pscr(\Theta)$ gives the following lower bound on $\lgame$.
\begin{theorem}
The following lower bound on $\lgame$ holds,
\begin{align}
&\lgame\geq \max_{q}\OPT(\DP(q,n;R))\non\\
&\geq \max_{q} \sup_{P_{\Ybar|\theta},\gamma>0} \biggl \lbrace \min_x \sum_y  \sum_{\theta}q(\theta) \min\biggl \lbrace P_{Y|X,\theta}(y|x,\theta), \non \\&\qquad P_{\Ybar|\theta}(y|\theta)M\exp(-\gamma)\biggr \rbrace  -\exp(-\gamma)\biggr \rbrace, \label{eq:conversegen}
\end{align}where $P_{\Ybar|\theta}$ is an arbitrary probability distribution in the space of $\Yscr$ for $\theta \in \Theta$.
\end{theorem}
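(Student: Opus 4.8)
The plan is to prove~\eqref{eq:conversegen} by instantiating the framework of Lemma~\ref{lem:framework} with an explicit, hypothesis-testing--converse--inspired choice of the dual multipliers $\lambda_{\csf}^{(q)},\lambda_{\ssf}^{(q)}$. The first inequality, $\lgame \geq \max_q \OPT(\DP(q,n;R))$, is already available from LP duality and~\eqref{eq:lb}, so all the work is in the second inequality; by Lemma~\ref{lem:framework} it suffices to exhibit, for each $q\in\Pscr(\Theta)$, functions $\lambda_{\ssf}^{(q)}:\Sscr\times\Sscrhat\times\Yscr\to\Real$ and $\lambda_{\csf}^{(q)}:\Sscr\times\Xscr\times\Yscr\to\Real$ satisfying constraint $(D3)$ whose associated objective in Lemma~\ref{lem:framework} equals the bracketed quantity in~\eqref{eq:conversegen}, and then to optimize over $P_{\Ybar|\theta},\gamma,q$.

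Fix $q\in\Pscr(\Theta)$, an auxiliary output distribution $P_{\Ybar|\theta}\in\Pscr(\Yscr)$ for each $\theta$, and a scalar $\gamma>0$. Guided by the single-state ($|\Theta|=1$) meta-converse, I would take $\lambda_{\csf}^{(q)}$ independent of $s$, namely
\[
\lambda_{\csf}^{(q)}(s,x,y)=\frac{1}{M}\sum_{\theta}q(\theta)\,\min\bigl\{P_{Y|X,\theta}(y|x,\theta),\,M\exp(-\gamma)\,P_{\Ybar|\theta}(y|\theta)\bigr\},
\]
together with $\lambda_{\ssf}^{(q)}(s,\shat,y)=-\exp(-\gamma)\sum_\theta q(\theta)P_{\Ybar|\theta}(y|\theta)$ when $s=\shat$ and $\lambda_{\ssf}^{(q)}(s,\shat,y)=0$ when $s\neq\shat$. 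The point that makes this feasible for $(D3)$ is that the constraint splits, $z$-wise, on whether $s=\shat$: when $s\neq\shat$ the right side $\Lambda^{(q)}(z)$ equals $\tfrac1M\sum_\theta q(\theta)P_{Y|X,\theta}(y|x,\theta)$ and the inequality follows from $\min\{a,b\}\le a$ applied termwise (since $\lambda_{\ssf}^{(q)}=0$ there); when $s=\shat$ we have $\Lambda^{(q)}(z)=0$ and the inequality follows from $\min\{a,b\}\le b$ applied termwise, which gives $\lambda_{\csf}^{(q)}(s,x,y)\le\exp(-\gamma)\sum_\theta q(\theta)P_{\Ybar|\theta}(y|\theta)=-\lambda_{\ssf}^{(q)}(s,s,y)$.

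It then remains to evaluate the objective of Lemma~\ref{lem:framework} for this pair. Since $\lambda_{\csf}^{(q)}$ does not depend on $s$, the sum over the $M=2^{nR}$ values of $s$ merely multiplies by $M$, cancelling the $\tfrac1M$, so $\sum_s\min_x\sum_y\lambda_{\csf}^{(q)}(s,x,y)=\min_x\sum_y\sum_\theta q(\theta)\min\{P_{Y|X,\theta}(y|x,\theta),M\exp(-\gamma)P_{\Ybar|\theta}(y|\theta)\}$. For the second term, for each fixed $y$ and $\shat$ only the $s=\shat$ summand is nonzero, so $\sum_s\lambda_{\ssf}^{(q)}(s,\shat,y)=-\exp(-\gamma)\sum_\theta q(\theta)P_{\Ybar|\theta}(y|\theta)$, which is independent of $\shat$; hence $\min_{\shat}$ is vacuous and, summing over $y$ and using that $q$ and each $P_{\Ybar|\theta}$ are probability distributions, $\sum_y\min_{\shat}\sum_s\lambda_{\ssf}^{(q)}(s,\shat,y)=-\exp(-\gamma)$. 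Adding the two contributions yields exactly the bracketed expression in~\eqref{eq:conversegen} for this $(q,P_{\Ybar|\cdot},\gamma)$; taking the supremum over $P_{\Ybar|\theta}$ and $\gamma>0$ and then the maximum over $q$ finishes the proof.

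The only genuinely creative step is guessing the multipliers; the feasibility check and the evaluation of the objective are routine. The reason this particular form is the right one is that averaging over $\theta$ \emph{inside} the $\min$ (rather than outside) produces a converse for the \emph{mixed} channel $P^{(q)}_{Y|X}$ attached to $q$, which is precisely the object in the inner minimization of $\lP$; the later asymptotic analysis then optimizes $\gamma$ and $q$ against this expression and relates it to the compound-channel capacities $C(\Theta')$ of subsets $\Theta'\subseteq\Theta$.
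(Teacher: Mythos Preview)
Your proposal is correct and follows essentially the same approach as the paper: the dual multipliers you choose are identical to the paper's (the paper writes $\lambda_{\ssf}^{(q)}(s,\shat,y)=-\tfrac{1}{M}\sum_\theta q(\theta)P_{\Ybar|\theta}(y|\theta)M\exp(-\gamma)\Ibb\{s=\shat\}$, which simplifies to exactly your expression), and the feasibility check via the termwise inequalities $\min\{a,b\}\le a$ and $\min\{a,b\}\le b$ is the intended argument. In fact you spell out the $(D3)$ verification and the evaluation of the two objective terms more explicitly than the paper, which defers these details to~\cite{jose2018improvedSW}.
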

\begin{proof}
For each $q \in \Pscr(\Theta)$, consider the following choice of functions $\lambda_{\ssf}^{(q)}, \lambda_{\csf}^{(q)}$,
\begin{align*}
\lambda_{\csf}^{(q)}(s,x,y)&=\sum_{\theta}\frac{q(\theta)}{M}\min\biggl \lbrace P_{Y|X,\theta}(y|x,\theta), M\frac{P_{\Ybar|\theta}(y|\theta)}{\exp(\gamma)}\biggr \rbrace,\\
\lambda_{\ssf}^{(q)}(s,\shat,y)&=-\frac{1}{M}\sum_{\theta}q(\theta)P_{\Ybar|\theta}(y|\theta)M\exp(-\gamma) \Ibb\{s=\shat\}.
\end{align*}
The feasibility of these functions with respect to (D3) of $\DP(q,n;R)$ can be verified as in the proof of Theorem~2.1 \cite{jose2018improvedSW} and we skip the proof here.
Taking supremum over $P_{\Ybar|\theta}(y|\theta)$ and $\gamma > 0$ of the resulting dual cost and applying Lemma~\ref{lem:framework} gives the required lower bound.
\end{proof}
 Lower bounding $\min\biggl \lbrace P_{Y|X,\theta}(y|x,\theta), P_{\Ybar|\Theta}(y|\theta)M\exp(-\gamma)\biggr \rbrace$ in \eqref{eq:conversegen} with $P_{Y|X,\theta}(y|x,\theta)\Ibb \biggl \lbrace i_{X;\Ybar|\theta}(x;Y|\theta)\leq \log M -\gamma \biggl \rbrace,$
where $$i_{X;\Ybar|\theta}(x;y|\theta)=\log \frac{P_{Y|X,\theta}(y|x,\theta)}{P_{\Ybar|\theta}(y|\theta)},$$  then yields the following bound,
\begin{align}
&\max_{q}\sup_{P_{\Ybar|\theta},\gamma>0}\min_x \biggl \lbrace  \sum_{\theta}q(\theta)\biggl[ \Pbb_{\theta}[ i_{X;\Ybar|\theta}(x;Y|\theta)\leq \log M -\gamma] \non \\& -\exp(-\gamma) \biggr \rbrace. \label{eq:converse1}
\end{align}
\eqref{eq:converse1} in turn results in the following lower bound on $\lgame$.
\begin{corollary}\label{cor:typebasedlb}
The following lower bound on $\lgame$ holds,
\begin{align}
&\lgame\geq \max_{q}\OPT(\DP(q,n;R))\non\\
&\geq \max_{q \in \Pscr(\Theta)} \sum_{\theta}\hspace{-0.1cm}q(\theta)\Ibb\biggl\lbrace \hspace{-0.1cm}I_{\Pbar_n(q)}(\Xbb;\Ybb|\theta)\hspace{-0.1cm} \leq R -2\xi-\frac{\log |\Tscr^n|}{n}\hspace{-0.05cm}\biggr\rbrace \non\\&\qquad \qquad - \frac{A(\xi)}{n}-\exp(-n \xi), \label{eq:typelb}
\end{align}where $\xi>0$ is an arbitrary constant,
\begin{align} P_{\Ybar|\theta}(y|\theta)\equiv \frac{1}{|\Tscr^n|}\sum_{P_n\in \Tscr^n} (P_nP_{\Ybb|\Xbb,\theta})^{\times n}(y),\label{eq:pybar}\end{align} and  $\Pbar_n(q)$ is the type of $x \in \Ascr^n$ that minimizes
\begin{align}
\sum_{\theta}q(\theta) \sum_y P_{Y|X,\theta}(y|x,\theta) \Ibb\biggl \lbrace \log \frac{P_{Y|X,\theta}(y|x,\theta)}{P_{\Ybar|\theta}(y|\theta)} \leq nR-n\xi\biggr \rbrace. \label{eq:Pnq}
\end{align} 
\end{corollary}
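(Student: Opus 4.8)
The plan is to show that the quantity in \eqref{eq:converse1} is itself at least the type-based expression \eqref{eq:typelb}; since \eqref{eq:converse1} already lower-bounds $\lgame$ (via the preceding theorem and Lemma~\ref{lem:framework}), this will finish the proof. Fix $q\in\Pscr(\Theta)$ and an arbitrary $\xi>0$. Inside the supremum in \eqref{eq:converse1} I would specialize $\gamma=n\xi$ and take for $P_{\Ybar|\theta}$ the specific law \eqref{eq:pybar} --- the equal-weight mixture over all $n$-types $P_n$ of the products $(P_nP_{\Ybb|\Xbb,\theta})^{\times n}$; this can only decrease the supremum, so it is legitimate for a lower bound.

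The first step is to reduce the inner $\min_x$ to a minimum over types. Both $P_{Y|X,\theta}(\cdot|x,\theta)$ and this $P_{\Ybar|\theta}$ are invariant under a common permutation of the $n$ coordinates (the former because it is a product, the latter because it is a mixture of products), so the law of $i_{X;\Ybar|\theta}(x;Y|\theta)$ under $Y\sim P_{Y|X,\theta}(\cdot|x,\theta)$ depends on $x$ only through its type. Hence $\min_x\sum_\theta q(\theta)\Pbb_\theta[i_{X;\Ybar|\theta}(x;Y|\theta)\le \log M-\gamma]$ is a minimum over types, attained at a string of type $\Pbar_n(q)$; with $\gamma=n\xi$ and $\log M=nR$ the object being minimized is exactly the quantity in \eqref{eq:Pnq} that defines $\Pbar_n(q)$.

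The second step bounds the inner probability for a string $x$ of type $P:=\Pbar_n(q)$. Keeping only the $P_n=P$ term in the mixture \eqref{eq:pybar} gives $P_{\Ybar|\theta}(y|\theta)\ge |\Tscr^n|^{-1}(PP_{\Ybb|\Xbb,\theta})^{\times n}(y)$, hence the pointwise estimate $i_{X;\Ybar|\theta}(x;y|\theta)\le \log|\Tscr^n|+\sum_{i=1}^n\log\frac{P_{\Ybb|\Xbb,\theta}(y_i|x_i,\theta)}{(PP_{\Ybb|\Xbb,\theta})(y_i)}$. Under $Y\sim P_{Y|X,\theta}(\cdot|x,\theta)$ the $n$ summands are independent, uniformly bounded (restricting to the effective support of the channel), and have total mean $\sum_i D(P_{\Ybb|\Xbb,\theta}(\cdot|x_i,\theta)\,\|\,PP_{\Ybb|\Xbb,\theta})=n I_P(\Xbb;\Ybb|\theta)$. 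Therefore, whenever $I_P(\Xbb;\Ybb|\theta)\le R-2\xi-\frac{\log|\Tscr^n|}{n}$, the threshold $nR-n\xi-\log|\Tscr^n|$ exceeds this mean by at least $n\xi$, so Chebyshev's inequality yields $\Pbb_\theta[i_{X;\Ybar|\theta}(x;Y|\theta)\le nR-n\xi]\ge 1-\frac{A(\xi)}{n}$, with $A(\xi)$ depending only on $\xi$ and the channel. When the mutual-information hypothesis fails this bound is trivially valid with nonnegative left side, so in all cases $\Pbb_\theta[i\le nR-n\xi]\ge \Ibb\{I_P(\Xbb;\Ybb|\theta)\le R-2\xi-\frac{\log|\Tscr^n|}{n}\}-\frac{A(\xi)}{n}$.

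Combining: multiply by $q(\theta)$, sum over $\theta\in\Theta$, subtract the term $\exp(-\gamma)=\exp(-n\xi)$ present in \eqref{eq:converse1}, and maximize over $q\in\Pscr(\Theta)$; this gives exactly \eqref{eq:typelb}. The step I expect to demand the most care is the uniform bound on the variance of the single-letter log-likelihood ratios that makes $A(\xi)$ independent of $n$ and of the minimizing type $\Pbar_n(q)$, including the treatment of zero transition probabilities --- precisely the bookkeeping done in the proof of Theorem~2.1 of \cite{jose2018improvedSW}, on which the feasibility of the multipliers used in the preceding theorem also rests. The permutation-symmetry reduction of $\min_x$ to a minimum over types is the conceptual step that makes \eqref{eq:Pnq} the right object to track.
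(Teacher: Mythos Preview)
Your proposal is correct and follows essentially the same route as the paper: specialize \eqref{eq:converse1} with $\gamma=n\xi$ and the type-mixture output law \eqref{eq:pybar}, identify the minimizing $x$ with its type $\Pbar_n(q)$, and then use concentration of the normalized information density around $I_{\Pbar_n(q)}(\Xbb;\Ybb|\theta)$ together with the bound $P_{\Ybar|\theta}(y|\theta)\ge |\Tscr^n|^{-1}(PP_{\Ybb|\Xbb,\theta})^{\times n}(y)$. The only cosmetic difference is that the paper packages your Chebyshev step as a black-box concentration lemma (Lemma~\ref{lemma:converseyagi}, taken from \cite{yagi2014single}) and points to \cite[Proof of Theorem~2.2]{jose2017linearitw} for how to combine it with the type-mixture bound, whereas you spell the argument out directly; your explicit permutation-symmetry justification for why the $\min_x$ depends only on the type is a detail the paper leaves implicit.
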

\begin{proof}
The proof is included in Appendix~\ref{app:3}.
\end{proof}
Having obtained lower bounds on the lower value of the game, we now move on to analysing the asymptotic equality of $\ugame$ and $\lgame$ in the limit as $n \rightarrow \infty$. Towards this, we divide the rate region into two different cases -- $(a)$ $R>\Cu$ or $R<\Cl$, $(b)$ $\Cl<R<\Cu$ -- and we obtain new upper bounds on the $\ugame$ in these regions. Together with the LP-based lower bound, we then show in Sections~\ref{sec:asymptextreme} and \ref{sec:asymtinter} that the min-max and max-min values of the game indeed approach each other in the limit as $n \rightarrow \infty$ in these regions.
\section{Asymptotic Equality of $\ugame$ and $\lgame$ when $R<\Cl$ and $R>\Cu$}\label{sec:asymptextreme}
In this section, we analyze the asymptotic equality of $\ugame$ and $\lgame$ in the limit as $n \rightarrow \infty$ when $R<\Cl$ and $R>\Cu$. Since a lower bound on $\lgame$ follows from the LP relaxation-based bound in \eqref{eq:typelb}, we first derive a new finite blocklength upper bound (or achievability bound) on the min-max value $\ugame$.

Towards this, note that the objective value of $\uP$ corresponding to any choice of distributions $Q_{X|S},Q_{\Shat|Y}$ yields an upper bound on $\ugame$. In particular, we construct a deterministic code $(f,g)$ so as to obtain the following upper bound on $\ugame$.

\begin{theorem}\label{thm:achievability}
Let $P_X \in \Pscr(\Xscr)$ be any input distribution. Then, for any stochastic code, the following upper bound holds, 
\begin{align}
\ugame&\leq \max_{\theta \in \Theta, s\in \Sscr} \Ebb[\Ibb\{S \neq \Shat\}|S=s,\theta] \non \\ 
&\leq \max_{\theta \in \Theta}\Pbb_{\theta}\biggl[ \log \frac{P_{Y|X,\theta}(Y|X,\theta)}{P_{Y|\theta}(Y|\theta)}\leq \alpha+\delta \biggr]\non\\&\qquad +\exp(-\delta)+\frac{M|\Theta|^2}{\exp(\alpha)},\label{eq:improvedfeinstein}
\end{align}
where $\Ebb[\Ibb\{S \neq \Shat\}|S=s,\theta]$ is the probability of error in transmitting message $s$ under this code when the channel state is $\theta$, 
$\Pbb_{\theta}$ is the probability with respect to $P_XP_{Y|X,\theta}$, $\alpha,\delta>0$ are arbitrary constants and $P_{Y|\theta}(y|\theta)=\sum_{x \in \Xscr} P_X(x) P_{Y|X,\theta}(y|x,\theta)$.
\end{theorem}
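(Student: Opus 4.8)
The plan is to establish the two inequalities in~\eqref{eq:improvedfeinstein} separately: the first by exploiting linearity of the cost in the jammer's distribution together with an averaging bound over messages, and the second by a random-coding achievability argument adapted to the state-uncertainty of the compound channel.

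For the first inequality, fix any stochastic code $(Q_{X|S},Q_{\Shat|Y})$. Since $q\mapsto \Ebb[\Ibb\{S\neq\Shat\}]=\sum_{\theta\in\Theta}q(\theta)\,\Ebb[\Ibb\{S\neq\Shat\}\mid\theta]$ is linear on the simplex $\Pscr(\Theta)$, its maximum over $q$ is attained at a vertex, i.e.\ at a point mass on some state, so $\max_{q}\Ebb[\Ibb\{S\neq\Shat\}]=\max_{\theta\in\Theta}\Ebb[\Ibb\{S\neq\Shat\}\mid\theta]$. Writing $\Ebb[\Ibb\{S\neq\Shat\}\mid\theta]=\tfrac1M\sum_{s\in\Sscr}\Ebb[\Ibb\{S\neq\Shat\}\mid S=s,\theta]\le\max_{s\in\Sscr}\Ebb[\Ibb\{S\neq\Shat\}\mid S=s,\theta]$ and taking the infimum over all codes (recall $\ugame=\OPT(\uP)$), the claim reduces to producing a single deterministic code whose worst-case error over $(\theta,s)$ obeys the right-hand side of~\eqref{eq:improvedfeinstein}.

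For the second inequality I would run a random-coding argument with a \emph{state-oblivious} threshold decoder. Let $i_{\theta}(x;y):=\log\frac{P_{Y|X,\theta}(y|x,\theta)}{P_{Y|\theta}(y|\theta)}$, draw $2M$ codewords independently from $P_{X}$, and let the decoder output the unique message $m$ for which some $\theta$ satisfies $i_{\theta}(x_{m};y)>\alpha+\delta$ while no other message $m'$ satisfies $i_{\theta'}(x_{m'};y)>\alpha$ for any $\theta'$; the gap between the accept threshold $\alpha+\delta$ and the reject threshold $\alpha$ is the device that makes the decoder robust to not knowing the state. Conditioning on message $m$ being sent over the true state $\theta^{*}$, the error is contained in the union of: (i) the event that the transmitted codeword fails the accept test, which lies inside $\{i_{\theta^{*}}(x_{m};Y)\le\alpha+\delta\}$ and, after the worst choice of true state, contributes $\max_{\theta}\Pbb_{\theta}[i_{\theta}(X;Y)\le\alpha+\delta]$ with $\Pbb_{\theta}$ taken under $P_{X}P_{Y|X,\theta}$; (ii) a slack event produced by the $\delta$-gap between the two thresholds, contributing a term of order $e^{-\delta}$; and (iii) the event that some wrong codeword clears the reject threshold for some state, which by a union bound over the $M-1$ wrong messages and the states, together with the change-of-measure estimate $\Pbb_{X'\sim P_{X}}\!\big[i_{\theta'}(X';y)>\alpha\big]=\Pbb_{X'}\!\big[P_{Y|X,\theta'}(y|X',\theta')>e^{\alpha}P_{Y|\theta'}(y|\theta')\big]\le e^{-\alpha}$ (using that $X'$ is independent of the transmitted codeword and of $Y$, and that $P_{Y|\theta'}$ is exactly the $P_{X}$-induced output law, so that the bound is free of alphabet sizes), contributes a term of the form $M\,|\Theta|^{2}\,e^{-\alpha}$, the extra factor of $|\Theta|$ beyond the naive $M|\Theta|e^{-\alpha}$ appearing when the union over the decoder's state scan is combined with the passage to the worst true state. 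Finally, expurgating the worse half of the $2M$ codewords (equivalently, a Feinstein-type greedy selection) yields a deterministic code of $M$ messages whose maximal error over $(\theta,s)$ is bounded by~\eqref{eq:improvedfeinstein}, which with the reduction of the first paragraph proves the theorem.

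I expect the main obstacle to be exactly the compound nature of the channel: because the decoder cannot use $\theta$ it must decode correctly simultaneously for every state, which forces the double union bound over $\Theta$ and the two-threshold hysteresis, and one must ensure that the change-of-measure step in event (iii) uses the genuine $P_{X}$-induced output law $P_{Y|\theta}$ so that the resulting estimate does not depend on $|\Xscr|$ or $|\Yscr|$. A secondary, more routine, difficulty is converting the random code, which controls only the message-averaged error, into a deterministic code whose \emph{maximal} error over both messages and states is small; this is where the expurgation/greedy step and the polynomial-in-$|\Theta|$ bookkeeping enter, all of which is asymptotically harmless once $\alpha$ and $\delta$ are taken to grow logarithmically in $n$.
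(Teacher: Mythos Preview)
Your high-level strategy (random coding plus expurgation) is not the route the paper takes, and your sketch misidentifies the origin of two of the three terms in~\eqref{eq:improvedfeinstein}. The paper constructs the deterministic code directly by a Feinstein-type greedy maximal-code argument, and the crucial device is that the threshold sets are defined against the \emph{mixture} output law $P_Y=\tfrac1{|\Theta|}\sum_\theta P_{Y|\theta}$, not against the per-state $P_{Y|\theta}$: one admits $u_k$ only if $\min_\theta \Pbb_\theta[B_\theta(u_k)\setminus\cup_{j<k}\Dscr_j\mid u_k]\ge 1-\lambda$ with $B_\theta(x)=\{y:\log\tfrac{P_{Y|X,\theta}(y|x,\theta)}{P_Y(y)}>\alpha\}$ and $\Dscr_k=\cup_\theta[B_\theta(u_k)\setminus\cup_{j<k}\Dscr_j]$. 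The mixture is what makes the stopping-condition calculation go through uniformly in $\theta$, since $P_{Y|\theta}\le|\Theta|P_Y$; together with the union $\Dscr_j=\cup_\theta B_\theta(u_j)$ this is exactly where the factor $|\Theta|^2$ comes from. The $e^{-\delta}$ term is then produced in a separate change-of-measure step at the end, replacing $P_Y$ by $P_{Y|\theta}$ in the information density: on the event $\{\log\tfrac{P_{Y|X,\theta}}{P_Y}\le\alpha\}\cap\{\log\tfrac{P_{Y|X,\theta}}{P_{Y|\theta}}>\alpha+\delta\}$ one has $P_{Y|\theta}(Y)<e^{-\delta}P_Y(Y)$, which has probability $<e^{-\delta}$ under $P_{Y|\theta}$. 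Neither the $|\Theta|^2$ nor the $e^{-\delta}$ arises from a two-threshold ``hysteresis'' in the decoder.

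Two concrete gaps in your proposal. First, in a random-coding argument with $i_\theta$ defined via $P_{Y|\theta}$ (as you do), a single threshold already gives the message-and-code-averaged bound $\max_\theta\Pbb_\theta[i_\theta(X;Y)\le\alpha]+(M-1)|\Theta|e^{-\alpha}$ with no $e^{-\delta}$ and only one factor of $|\Theta|$; your event (ii) has no clear meaning, and your explanation for the second $|\Theta|$ (``passage to the worst true state'') is not correct since the false-alarm bound is already uniform in the true state. Second, and more seriously, expurgation is \emph{not} equivalent to Feinstein greedy selection for a compound channel: once the random code is fixed, the set of ``bad'' codewords depends on $\theta$, so discarding half of $2M$ codewords need not leave a codebook with small maximal error uniformly over $\Theta$. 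A correct expurgation here requires an additional Markov/union step that multiplies the leading term by a $|\Theta|$-dependent constant, so your route would yield a bound of the right order for the paper's asymptotics but not the exact inequality~\eqref{eq:improvedfeinstein}. The Feinstein construction avoids this issue entirely because each codeword is admitted only if it is already good for every $\theta$.
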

\begin{proof}
Proof is included in Appendix~\ref{app:2}.
\end{proof}
In particular, fixing 
$\alpha=\log M+ \frac{n\Delta}{2}$ , $\delta=\frac{n\Delta}{2}$ with $\Delta>0$ an arbitrary positive constant, \eqref{eq:improvedfeinstein} results in the following upper bound,
\begin{align}
\ugame&\leq \max_{\theta \in \Theta}\Pbb_{\theta}\biggl[ \log \frac{P_{Y|X,\theta}(Y|X,\theta)}{P_{Y|\theta}(Y|\theta)}\leq \log M+n\Delta \biggr]\non\\&\qquad +\exp(-n\Delta/2)[|\Theta|^2+1].\label{eq:improvedfeinstein1}
\end{align}

Recall from Section~\ref{sec:infobackground} that evaluating $\ugame$ is equivalent to the problem of minimizing the average probability of error  over all stochastic codes  in a compound channel.
Earliest upper bound known for such compound channels has been obtained by Blackwell \etal \cite{blackwell1959capacity}. 
 Our new bound in \eqref{eq:improvedfeinstein} improves on the bound of Blackwell \etal by replacing $\sum_{\theta}\Pbb_{\theta}\biggl[ \log \frac{P_{Y|X,\theta}(Y|X,\theta)}{P_{Y|\theta}(Y|\theta)}\leq \alpha+\delta \biggr]$ therein with $\max_{\theta}\Pbb_{\theta}\biggl[ \log \frac{P_{Y|X,\theta}(Y|X,\theta)}{P_{Y|\theta}(Y|\theta)}\leq \alpha+\delta \biggr]$. 
 
Employing this newly derived upper bound on $\ugame$ together with the lower bound on $\lgame$ obtained in \eqref{eq:typelb}, the following theorem then shows
that 
the difference between the upper and lower values of the game vanishes asymptotically as $n \rightarrow \infty$ for $R<\Cl$ and $R>\Cu$.
\begin{theorem}\label{thm:tightregions}
Consider the zero-sum game with  $M=2^{nR}$ and channel conditional distributions as given in \eqref{eq:DMC}. Then, 
\begin{align*}
\limn \ugame = \limn \lgame = 0 \qquad \forall R < \Cl, \\
\limn \ugame = \limn \lgame = 1 \qquad \forall R > \Cu. 
\end{align*}
\end{theorem}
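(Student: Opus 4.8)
The plan is to prove the two limits separately, in each case squeezing $\lgame$ and $\ugame$ between matching asymptotic bounds using the results already available in the excerpt, namely the achievability bound~\eqref{eq:improvedfeinstein1} (equivalently Theorem~\ref{thm:achievability}) for the upper value and the LP-based converse~\eqref{eq:typelb} of Corollary~\ref{cor:typebasedlb} for the lower value. Since~\eqref{eq:minmax} already gives $\ugame \ge \lgame \ge 0$, in each regime it suffices to exhibit the correct one-sided bound on the \emph{harder} of the two quantities and invoke monotonicity.

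\emph{Case $R < \Cl$.} Here I would show $\limn \ugame = 0$; then $\lgame \le \ugame$ forces $\limn \lgame = 0$ as well. Pick an input distribution $P_X$ that is a capacity-achieving distribution for the compound channel, i.e.\ one attaining $\max_{P_\Xbb}\min_{\theta}I_{P_\Xbb}(\Xbb;\Ybb|\theta) = \Cl$; use the product input distribution $P_X = P_\Xbb^{\times n}$ in~\eqref{eq:improvedfeinstein1}. Choose $\Delta>0$ small enough that $R + \Delta < \Cl$. For each $\theta$, the random variable $\tfrac1n \log \frac{P_{Y|X,\theta}(Y|X,\theta)}{P_{Y|\theta}(Y|\theta)}$ is (under $P_X P_{Y|X,\theta}$) an average of $n$ i.i.d.\ terms with mean $I_{P_\Xbb}(\Xbb;\Ybb|\theta) \ge \Cl > R+\Delta$, so by the weak law of large numbers each term $\Pbb_\theta[\,\cdot \le \log M + n\Delta\,] = \Pbb_\theta[\tfrac1n\log(\cdots) \le R + \Delta] \to 0$; since $|\Theta|$ is finite, the max over $\theta$ also tends to $0$. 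The remaining term $\exp(-n\Delta/2)(|\Theta|^2+1) \to 0$. Hence $\ugame \to 0$, as desired. (One should note that because $\Theta$ is finite, the single-letter information densities are bounded, so the LLN applies with no integrability issues.)

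\emph{Case $R > \Cu$.} Here I would show $\limn \lgame = 1$; then $\ugame \ge \lgame$ forces $\limn \ugame = 1$ (and both are $\le 1$). Let $\theta^\star$ achieve $\Cu = \min_\theta \max_{P_\Xbb} I_{P_\Xbb}(\Xbb;\Ybb|\theta)$, so the DMC $P_{\Ybb|\Xbb,\theta^\star}$ has capacity $C(\{\theta^\star\}) = \Cu < R$. In Corollary~\ref{cor:typebasedlb}, restrict the outer maximization to the point mass $q = \delta_{\theta^\star}$; then the lower bound~\eqref{eq:typelb} collapses to a single-channel converse, namely $\Ibb\{I_{\Pbar_n(q)}(\Xbb;\Ybb|\theta^\star) \le R - 2\xi - \tfrac{\log|\Tscr^n|}{n}\} - \tfrac{A(\xi)}{n} - \exp(-n\xi)$. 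Fix $\xi>0$ small so that for all types $P$ one has $I_P(\Xbb;\Ybb|\theta^\star) \le \max_{P_\Xbb} I_{P_\Xbb}(\Xbb;\Ybb|\theta^\star) = \Cu < R - 3\xi$, say; since $\tfrac{\log|\Tscr^n|}{n} \le \tfrac{|\Ascr|\log(n+1)}{n} \to 0$ by~\eqref{eq:typecounting}, for all $n$ large the indicator is identically $1$ regardless of which type $\Pbar_n(q)$ is selected. The error terms $\tfrac{A(\xi)}{n}$ and $\exp(-n\xi)$ vanish. Hence $\liminf_n \lgame \ge 1$, and combined with $\lgame \le 1$ this gives the claim. (This is essentially a restatement of the strong converse for a single DMC recovered from the LP converse; it is the compound/mixed-channel analogue remarked upon in the text, with the jammer simply locking onto the worst channel.)

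\emph{Main obstacle.} Neither case is genuinely hard once the right specializations are made --- the proof is essentially bookkeeping plus the law of large numbers. The only point requiring a little care is verifying that the converse lower bound in~\eqref{eq:typelb} really does dominate the single-channel strong converse after substituting $q = \delta_{\theta^\star}$, in particular that the (adversarially chosen) type $\Pbar_n(q)$ cannot sneak the mutual information above $R$; this is handled by noting $I_P(\Xbb;\Ybb|\theta^\star) \le \Cu$ uniformly over types $P$ and that $R > \Cu$ strictly, so the slack $R - \Cu > 0$ absorbs all the vanishing correction terms for $n$ large. The analogous point for the achievability side --- that the product capacity-achieving input distribution makes every $\Pbb_\theta[\cdots]$ term vanish --- uses only that $\Cl = \max_{P_\Xbb}\min_\theta I_{P_\Xbb}$ is attained and that $\Theta$ is finite.
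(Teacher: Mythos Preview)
Your proposal is correct and follows essentially the same strategy as the paper: for $R<\Cl$ use the achievability bound~\eqref{eq:improvedfeinstein1} with the compound-capacity-achieving product input, and for $R>\Cu$ specialize the LP converse~\eqref{eq:typelb} to the point mass $q=\delta_{\theta^\star}$ on the worst channel. The only cosmetic differences are that the paper packages both cases into a single difference bound $\ugame-\lgame\le\cdots$ and, on the achievability side, invokes the quantitative concentration Lemma~\ref{lem:achievabilityyagi} (giving an $O(1/n)$ rate) in place of your bare weak law of large numbers; neither change alters the substance of the argument.
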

\begin{proof}
Proof is included in Appendix~\ref{app:1}.
\end{proof}
Recall from the introduction that this result is along expected lines. After all, $\Cl$ being the capacity of the compound channel, we must have that for $R<\Cl$, $\ugame$ and hence $\lgame$ approach $0$ for large $n$. Similarly $\Cu$ being the smallest among the capacities of the individual DMCs corresponding to states $\theta \in \Theta$, by the strong converse of channel coding for $R>\Cu$, $\lgame$ and hence $\ugame$ must approach unity. An additional subtlety here is that we have finite blocklength estimates on the ``approximate'' minimax value, which follow from Theorem~\ref{thm:achievability} and Corollary~\ref{cor:typebasedlb}. Consequently, even though for each finite $n$  the game may not admit a saddle point value, for large $n$ and $R<\Cl$ and $R>\Cu$, it admits near-saddle point value (as  is evident from \eqref{eq:gendifference} in the Appendix). 
Moreover, for the class of channels where $C=\Cl=\Cu$, Theorem~\ref{thm:tightregions} shows that a near-saddle point value exists for all $R$ except when $R=C$. Examples where $\Cl=\Cu$ holds are families of discrete memoryless channels where the capacity achieving input distribution is independent of $\theta$~\cite{lapidoth1998reliable}, that includes the binary symmetric channels and binary erasure channels amongst others.

\section{Asymptotic Equality of $\ugame$ and $\lgame$ for $\Cl<R<\Cu$}\label{sec:asymtinter}
We now come to the asymptotic equality of the min-max and max-min values of the game when the rate of communication $R$ lies in the range $\Cl <R<\Cu$ and exact characterization of this limiting value. Towards this, we first establish a fundamental lower bound on the max-min value of the game by employing the LP relaxation from Section~\ref{sec:LP}. From the lower bound we then inspire a finite blocklength achievability scheme whose performance asymptotically matches this lower bound 
as $n \rightarrow \infty$ for all but finitely many values of the rate $R.$

\subsection{An LP-Relaxation Based Fundamental Lower Bound on Max-min problem}
We employ the LP relaxation to derive a lower bound on $\ds \liminf_{n\rightarrow \infty} \lgame$. Recall that for each $q \in \Pscr(\Theta)$, the LP relaxation yields that $\lgame \geq \OPT(\DP(q))$ whereby the following limiting bound holds:
$$ \liminf_{n\rightarrow \infty} \lgame \geq \max_{q \in \Pscr(\Theta)} \liminf_{n\rightarrow \infty} \OPT(\DP(q)).$$ Together with the converse obtained in \eqref{eq:typelb}, the LP relaxation thus gives the following limiting bound on the max-min value of the game.
\begin{theorem}\label{thm:LPfundlb}
The LP-relaxation yields the following fundamental lower bound on the max-min value of the game,
\begin{align}
\liminf_{n\rightarrow \infty} \lgame &\geq \max_{q \in \Pscr(\Theta)}\liminf_{n\rightarrow \infty} \OPT(\DP(q)) \geq L(R),\label{eq:LPfundlb} \\
\where \hspace{0.1cm} L(R)=&\max_{q \in \Pscr(\Theta)}\min_{\Theta' \subseteq \Theta} \biggl \lbrace 1-\sum_{\theta \in \Theta'} q(\theta) \mid R \leq  C(\Theta')\biggr \rbrace, \label{eq:LPfundlb1}
\end{align}
for $R \in [0, \Cu]$ and, 
\[C(\Theta')=\max_{P_X}\min_{\theta \in \Theta'}I_{P_{\Xbb}}(\Xbb;\Ybb|\theta), \]
is the capacity of the compound channel formed by $\Theta' \subseteq \Theta.$
\end{theorem}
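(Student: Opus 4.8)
The plan is to start from the corollary already proved, namely the type-based lower bound~\eqref{eq:typelb}, and take the limit $n\to\infty$ inside it, carefully tracking which terms vanish. Recall that~\eqref{eq:typelb} gives, for every $\xi>0$,
\[
\lgame\;\geq\;\max_{q\in\Pscr(\Theta)}\sum_{\theta}q(\theta)\,\Ibb\!\Bigl\{I_{\Pbar_n(q)}(\Xbb;\Ybb|\theta)\leq R-2\xi-\tfrac{\log|\Tscr^n|}{n}\Bigr\}\;-\;\tfrac{A(\xi)}{n}\;-\;\exp(-n\xi).
\]
Since $|\Tscr^n|\leq(n+1)^{|\Ascr|}$ by~\eqref{eq:typecounting}, the term $\tfrac{\log|\Tscr^n|}{n}\to 0$, and the additive correction terms $\tfrac{A(\xi)}{n}+\exp(-n\xi)\to 0$ as $n\to\infty$ for each fixed $\xi>0$. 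Hence
\[
\liminf_{n\to\infty}\lgame\;\geq\;\liminf_{n\to\infty}\;\max_{q}\sum_{\theta}q(\theta)\,\Ibb\!\Bigl\{I_{\Pbar_n(q)}(\Xbb;\Ybb|\theta)\leq R-2\xi-o(1)\Bigr\},
\]
and then we let $\xi\downarrow 0$ at the very end.

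The heart of the argument is to show that for each $q$ and the minimizing type $\Pbar_n(q)$ defined in~\eqref{eq:Pnq}, the indicator sum converges, along a subsequence, to something no smaller than $\min_{\Theta'\subseteq\Theta}\{1-\sum_{\theta\in\Theta'}q(\theta)\mid R\leq C(\Theta')\}$. First I would pass to a subsequence along which $\Pbar_n(q)\to P_X^\star$ for some $P_X^\star\in\Pscr(\Ascr)$ (the simplex is compact; a diagonal/subsequence argument handles the dependence on $q$ if one wants uniformity, but since the outer $\max_q$ can be handled by fixing any near-optimal $q$ first, it suffices to argue pointwise in $q$). By continuity of single-letter mutual information $I_{P_X}(\Xbb;\Ybb|\theta)$ in $P_X$, and because for a product (memoryless) channel the normalized information density concentrates, one shows that the event $\{I_{\Pbar_n(q)}(\Xbb;\Ybb|\theta)\leq R-2\xi-o(1)\}$ has, in the limit, indicator value $1$ whenever $I_{P_X^\star}(\Xbb;\Ybb|\theta)<R$ and value $0$ only when $I_{P_X^\star}(\Xbb;\Ybb|\theta)>R$; the boundary case $=R$ is pushed to the exceptional side by the $\xi$ slack and handled when $\xi\downarrow 0$. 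Define $\Theta^\star:=\{\theta: I_{P_X^\star}(\Xbb;\Ybb|\theta)\geq R\}$ (the states that are ``good'' for $P_X^\star$ at rate $R$). Then the limiting indicator sum is at least $\sum_{\theta\notin\Theta^\star}q(\theta)=1-\sum_{\theta\in\Theta^\star}q(\theta)$. Moreover $R\leq\min_{\theta\in\Theta^\star}I_{P_X^\star}(\Xbb;\Ybb|\theta)\leq\max_{P_X}\min_{\theta\in\Theta^\star}I_{P_X}(\Xbb;\Ybb|\theta)=C(\Theta^\star)$, so $\Theta^\star$ is admissible in the inner minimization of~\eqref{eq:LPfundlb1}, and therefore $1-\sum_{\theta\in\Theta^\star}q(\theta)\geq\min_{\Theta'\subseteq\Theta}\{1-\sum_{\theta\in\Theta'}q(\theta)\mid R\leq C(\Theta')\}$. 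Taking the outer maximum over $q$ yields $\liminf_n\lgame\geq L(R)$. Finally, the first inequality in~\eqref{eq:LPfundlb}, $\liminf_n\lgame\geq\max_q\liminf_n\OPT(\DP(q))$, is immediate by swapping $\max_q$ out of the $\liminf$ (the bound $\lgame\geq\OPT(\DP(q))$ holds for every $q,n$), and the chain $\max_q\liminf_n\OPT(\DP(q))\geq L(R)$ is exactly what the type-based estimate~\eqref{eq:typelb} delivers in the limit.

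The main obstacle I anticipate is the convergence analysis of the minimizing type $\Pbar_n(q)$: the quantity minimized in~\eqref{eq:Pnq} involves the information-density threshold event rather than the mutual information directly, so one must argue that the minimizer of that finite-blocklength quantity has limit points whose single-letterization makes the ``$\leq R$'' information-density event asymptotically reduce to the mutual-information comparison $I_{P_X^\star}(\Xbb;\Ybb|\theta)$ vs.\ $R$. This is where a concentration/large-deviations estimate for the memoryless information density $i_{X;\Ybar|\theta}$ under a type-$P_n$ input, together with the specific choice~\eqref{eq:pybar} of $P_{\Ybar|\theta}$ as a mixture over types, is needed; the extra $\log|\Tscr^n|/n$ slack is precisely what absorbs the mixture-over-types penalty. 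A secondary (minor) point is the interchange of $\liminf_n$ with $\max_q$, which goes the easy direction ($\max$ of liminf $\leq$ liminf of max is false in general, but here we only need the claimed inequality, which is the easy one). Routine details — continuity of $I_{P_X}(\Xbb;\Ybb|\theta)$, compactness of the simplex, the $o(1)$ bookkeeping, and letting $\xi\downarrow 0$ — I would relegate to the appendix, consistent with the paper's style.
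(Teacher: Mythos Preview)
Your approach is correct and reaches the same conclusion, but it takes a detour that the paper avoids entirely. You extract a convergent subsequence of the minimizing type $\Pbar_n(q)\to P_X^\star$, define $\Theta^\star=\{\theta: I_{P_X^\star}(\Xbb;\Ybb|\theta)\geq R\}$, and then check that $\Theta^\star$ is feasible for the inner minimum in $L(R)$. This works, but the convergence analysis you flag as the ``main obstacle'' is genuinely extra work: you need compactness of the simplex, continuity of $I_{P_X}(\Xbb;\Ybb|\theta)$, care at the boundary $I_{P_X^\star}=R$, and attention to the fact that $\Pbar_n(q)$ (as defined via~\eqref{eq:Pnq}) depends on $\xi$.

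The paper's proof sidesteps all of this with a purely finite-$n$ combinatorial observation. At each $n$, let $\Theta'_n:=\{\theta: I_{\Pbar_n(q)}(\Xbb;\Ybb|\theta)\leq R-2\xi-\tfrac{\log|\Tscr^n|}{n}\}$ be exactly the set where the indicator in~\eqref{eq:typelb} fires, so the sum equals $\sum_{\theta\in\Theta'_n}q(\theta)$. By definition of the complement, $R<I_{\Pbar_n(q)}(\Xbb;\Ybb|\theta)+2\xi+\tfrac{\log|\Tscr^n|}{n}$ for every $\theta\in(\Theta'_n)^c$, hence $R<\min_{\theta\in(\Theta'_n)^c}I_{\Pbar_n(q)}(\Xbb;\Ybb|\theta)+\text{slack}\leq C((\Theta'_n)^c)+\text{slack}$. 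Thus $\Theta'_n$ is feasible for the minimum over $\{\widehat\Theta: R<C(\widehat\Theta^c)+\text{slack}\}$, and one can lower-bound by that minimum \emph{before} taking any limit. Since $\Theta$ is finite there are only finitely many subsets, so letting $n\to\infty$ and then $\xi\downarrow 0$ is trivial and never requires knowing anything about the limit of $\Pbar_n(q)$.

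In short: both arguments arrive at the same feasibility certificate $R\leq C(\Theta')$, but you produce it from a limit point of the types, whereas the paper produces it directly from $\Pbar_n(q)$ at each $n$ via the inequality $\min_\theta I_{\Pbar_n(q)}(\Xbb;\Ybb|\theta)\leq C(\cdot)$. Your concern about the information-density-to-mutual-information passage is also moot: that step was already absorbed into Corollary~\ref{cor:typebasedlb}, so you can work with~\eqref{eq:typelb} as a black box in either approach.
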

\begin{proof}
For any $q \in \Pscr(\Theta)$, we have from \eqref{eq:typelb},
\begin{align}
&\lgame \geq \OPT(\DP(q))\stackrel{(a)}{\geq} 
\sum_{\theta \in \Theta'_n}q(\theta)- \frac{A(\xi)}{n}-\exp(-n \xi),\non
\end{align} 
where in $(a)$, $\xi>0$ is an arbitrary constant, $A(\xi)$ is independent of $n$ and
$ \Theta'_n:=\biggl \lbrace \theta \in \Theta | R \geq I_{\Pbar_n(q)}(\Xbb;\Ybb|\theta)\hspace{-0.1cm}+2\xi+\frac{\log |\Tscr^n|}{n}\biggr \rbrace $.  Since 
\[R < I_{\Pbar_n(q)}(\Xbb;\Ybb|\theta)\hspace{-0.1cm}+2\xi+\frac{\log |\Tscr^n|}{n} \quad \forall \theta \in (\Theta'_n)^c,\]
we must have 
\[R < \min_{\theta \in (\Theta'_n)^c} I_{\Pbar_n(q)}(\Xbb;\Ybb|\theta)\hspace{-0.1cm}+2\xi+\frac{\log |\Tscr^n|}{n}.\]
It is clear that $\Theta'_n$ satisfies that $R<C((\Theta'_n)^c)+2\xi+\frac{\log |\Tscr^n|}{n}$, whereby for any $q \in \Pscr(\Theta)$,
\begin{align}
\lgame &\geq \min_{\widehat{\Theta} \subseteq \Theta} \biggl \lbrace \sum_{\theta \in \widehat{\Theta}}q(\theta) | R< C( \widehat{\Theta}^c)+2\xi +\frac{\log |\Tscr^n|}{n} \biggr \rbrace \non 
\\&\qquad - \frac{A(\xi)}{n}-\exp(-n \xi), 
\label{eq:ineq2}
\end{align}
Subsequently, taking limit $n \rightarrow \infty$ on both sides of \eqref{eq:ineq2}, noticing that there are only finitely many $\Thetah\subseteq \Theta$, and then maximizing over $q \in \Pscr(\Theta)$ yields that
\begin{align}
\liminf_{n\rightarrow \infty} \lgame
\geq \max_{q \in \Pscr(\Theta)}\min_{\widehat{\Theta} \subseteq \Theta} \biggl \lbrace \sum_{\theta \in \widehat{\Theta}}q(\theta) |  R <  C( \widehat{\Theta}^c) +2\xi  \biggr \rbrace \non.
\end{align}
Our result then follows by taking $\xi \rightarrow 0$. 
\end{proof}
\subsubsection{The quantities $L(R)$ and $U(R)$} 
$L(R)$ defined in \eqref{eq:LPfundlb1} plays a key role in our analysis. This section is devoted to understanding its properties. 
By making a slight change in the definition of $L$, we define the following quantity,
\begin{align}
U(R):=\max_{q \in \Pscr(\Theta)}\min_{\Theta' \subseteq \Theta} \biggl \lbrace 1-\sum_{\theta \in \Theta'} q(\theta) \mid R < C(\Theta')\biggr \rbrace, \label{eq:epsilonR}
\end{align}
for $R \in [0,\Cu].$ 
For convenience, we extend the definition of $L$ and $U$ to $R>\Cu$ by defining,
$$L(R) = U(R):=1 \qquad \forall\ R>\Cu.$$
Observe that we always have $L(R)\leq U(R)$. To see this, note that for any rate $R$, 
\begin{align} \Uu&:=\{\Theta'\subseteq \Theta|R\leq C(\Theta')\}\non \\
& \supseteq \Uo:=\{\Theta'\subseteq \Theta|R< C(\Theta')\} \label{eq:thetaLUR} 
\end{align} 
holds in general, whereby for any $q \in \Pscr(\Theta)$, $\min_{\Theta' \in \Uu}[1-\sum_{\theta \in \Theta'}q(\theta)] \leq \min_{\Theta' \in \Uo}[1-\sum_{\theta \in \Theta'}q(\theta)]$. In particular, for those rates $R$ such that $R=C(\Thetah)$ for some $\Thetah \subseteq \Theta$, it is easy to see that $\Thetah \in \Uu$ and $\Thetah \not \in \Uo$, whereby $\Uu \supset \Uo$, a strict inclusion. In some cases this implies $L(R)<U(R)$ as we see in the example below.

Employing $\Uo$, $U(R)$ can be equivalently written as,
\begin{align}
U(R)=\max_{q \in \Pscr(\Theta)}\min_{\Theta' \in \Uo}\biggl \lbrace 1-\sum_{\theta \in \Theta'}q(\theta) \biggr \rbrace \label{eq:equivalentUR}.
\end{align}
Interestingly, the following lemma shows that $U(R)$ defined above is in fact equal to the following expression,
\begin{align}
\widetilde{U}(R)= \max_{q \in \Pscr(\Theta)}\inf \biggl \lbrace \epsilon \in [0,1) | R <C_{\epsilon}^{(q)} \biggr\rbrace, \label{eq:epsilonR1}
\end{align} 
where $C_{\epsilon}^{(q)}$ is the 
$\epsilon$-capacity definition of a mixed channel $P_{Y|X}^{(q)}$. Under the average probability  of error criterion, the $\epsilon$-capacity of this channel has been obtained in \cite[Thm 1 and Lem 1(a)]{yagi2014single}  for $\epsilon \in [0,1)$ as
\begin{align}
C_{\epsilon}^{(q)}=\max_{P_{\Xbb}} \sup \biggl \lbrace R \mid \sum_{\theta}q(\theta)\Ibb\{I_{P_{\Xbb}}(\Xbb;\Ybb|\theta)\leq R\}\leq \epsilon \biggr \rbrace, \label{eq:epscapacity}
\end{align}
where $I_{P_{\Xbb}}(\Xbb;\Ybb|\theta)$ is the mutual information between $\Xbb\sim P_{\Xbb} \in \Pscr(\Ascr)$ and $\Ybb \sim \sum_{x \in \Ascr} P_{\Ybb|\Xbb,\theta}(y|x)P_{\Xbb}(x)$.
\begin{lemma}
$\widetilde{U}(R)$ in \eqref{eq:epsilonR1} evaluates to $U(R)$ for $R \in [\Cl,\Cu)$, \ie
\begin{align*}
&\widetilde{U}(R)= \max_{q \in \Pscr(\Theta)}\inf \biggl \lbrace \epsilon \in [0,1) | R <C_{\epsilon}^{(q)} \biggr\rbrace=U(R).
\end{align*}
\end{lemma}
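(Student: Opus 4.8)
The plan is to prove the identity $q$ by $q$, \emph{before} taking the outer maximum: I claim that for every fixed $q\in\Pscr(\Theta)$,
\[
\inf\Bigl\{\epsilon\in[0,1)\ \Bigm|\ R<C_{\epsilon}^{(q)}\Bigr\}\;=\;\min_{\Theta'\in\Uo}\Bigl(1-{\textstyle\sum_{\theta\in\Theta'}}q(\theta)\Bigr),
\]
where $\Uo=\{\Theta'\subseteq\Theta\mid R<C(\Theta')\}$ as in \eqref{eq:thetaLUR}. Once this is shown, maximizing both sides over $q\in\Pscr(\Theta)$ and comparing with the expression \eqref{eq:equivalentUR} for $U(R)$ immediately yields $\widetilde{U}(R)=U(R)$.

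The first and main step is to re-express the $\epsilon$-capacity $C_{\epsilon}^{(q)}$ of \eqref{eq:epscapacity} in terms of the compound-channel capacities $C(\Theta')$. Fix $P_{\Xbb}$ and examine the inner supremum over the dummy rate (call it $r$) in \eqref{eq:epscapacity}: the map $r\mapsto\sum_{\theta}q(\theta)\Ibb\{I_{P_{\Xbb}}(\Xbb;\Ybb|\theta)\le r\}$ is a nondecreasing, right-continuous step function whose jumps occur exactly at the finitely many distinct values of $\{I_{P_{\Xbb}}(\Xbb;\Ybb|\theta)\}_{\theta\in\Theta}$. Hence its supremum over $\{r:\text{value}\le\epsilon\}$ is the smallest such value $v$ for which the $q$-mass of $\{\theta:I_{P_{\Xbb}}(\Xbb;\Ybb|\theta)\le v\}$ strictly exceeds $\epsilon$, equivalently the \emph{largest} $v$ for which the super-level set $\Theta'_v:=\{\theta:I_{P_{\Xbb}}(\Xbb;\Ybb|\theta)\ge v\}$ has $q(\Theta'_v)\ge 1-\epsilon$. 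Taking $\Theta'=\Theta'_v$ gives one inequality; conversely, any $\Theta'$ with $q(\Theta')\ge 1-\epsilon$ satisfies $\Theta'\subseteq\{\theta:I_{P_{\Xbb}}(\Xbb;\Ybb|\theta)\ge\min_{\theta\in\Theta'}I_{P_{\Xbb}}(\Xbb;\Ybb|\theta)\}$, so it is dominated by a super-level set of at least the same mass and the same minimum value, giving the reverse inequality. Thus the inner supremum equals $\max\{\min_{\theta\in\Theta'}I_{P_{\Xbb}}(\Xbb;\Ybb|\theta)\mid q(\Theta')\ge 1-\epsilon\}$, and since the feasible family $\{\Theta':q(\Theta')\ge 1-\epsilon\}$ does not depend on $P_{\Xbb}$, I may interchange the maximum over $P_{\Xbb}$ with the maximum over $\Theta'$ to obtain, for $\epsilon\in[0,1)$,
\[
C_{\epsilon}^{(q)}\;=\;\max\Bigl\{\,C(\Theta')\ \Bigm|\ \Theta'\subseteq\Theta,\ {\textstyle\sum_{\theta\in\Theta'}}q(\theta)\ge 1-\epsilon\,\Bigr\}.
\]

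The remaining steps are elementary. Using the displayed identity and finiteness of $2^{\Theta}$, the condition $R<C_{\epsilon}^{(q)}$ is equivalent to the existence of $\Theta'\in\Uo$ with $\sum_{\theta\in\Theta'}q(\theta)\ge 1-\epsilon$, i.e.\ to $\epsilon\ge 1-\max_{\Theta'\in\Uo}\sum_{\theta\in\Theta'}q(\theta)=\min_{\Theta'\in\Uo}(1-\sum_{\theta\in\Theta'}q(\theta))=:\beta(q)$. Hence $\{\epsilon\in[0,1):R<C_{\epsilon}^{(q)}\}=[\beta(q),1)$ with infimum $\beta(q)$, \emph{provided} $\beta(q)<1$ so the set is nonempty; this is exactly where the hypothesis $R<\Cu$ is used: by $\Cu=\min_{\theta}C(\{\theta\})$ from \eqref{eq:clcu}, every singleton $\{\theta\}$ belongs to $\Uo$, so $\Uo\neq\emptyset$ and $0\le\beta(q)\le 1-\max_{\theta}q(\theta)\le 1-\tfrac1{|\Theta|}<1$. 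Taking $\max_{q\in\Pscr(\Theta)}$ of $\beta(q)$ and invoking \eqref{eq:equivalentUR} gives $\widetilde{U}(R)=U(R)$ for $R\in[\Cl,\Cu)$. The only delicate point, and the step I expect to require the most care, is matching the ``$\le\epsilon$ / supremum'' formulation of $C_{\epsilon}^{(q)}$ in \eqref{eq:epscapacity} with the ``$R<C(\Theta')$ / max'' formulation defining $\Uo$, in particular tracking the strict-versus-nonstrict inequalities and ties among the $I_{P_{\Xbb}}(\Xbb;\Ybb|\theta)$ (these are what force the ``$\ge 1-\epsilon$'' rather than ``$>1-\epsilon$'' in the identity for $C_{\epsilon}^{(q)}$); everything after that is bookkeeping over the finite lattice of subsets of $\Theta$.
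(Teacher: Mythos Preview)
Your argument is correct and follows essentially the same route as the paper's proof: both unwind the definition of $C_\epsilon^{(q)}$ from \eqref{eq:epscapacity} and reduce the infimum over $\epsilon$ to a minimum over subsets $\Theta'\subseteq\Theta$ with $R<C(\Theta')$. The one organizational difference is that you first isolate the clean intermediate identity $C_\epsilon^{(q)}=\max\{C(\Theta')\mid q(\Theta')\ge 1-\epsilon\}$ and then read off $\{\epsilon:R<C_\epsilon^{(q)}\}=[\beta(q),1)$, whereas the paper manipulates the condition $R<C_\epsilon^{(q)}$ directly without naming this formula; your version makes the strict/nonstrict bookkeeping and the interchange of $\max_{P_{\Xbb}}$ with $\max_{\Theta'}$ more transparent, and it explicitly verifies (via $R<\Cu$ and the singletons $\{\theta\}\in\Uo$) that $\beta(q)<1$ so the infimum set is nonempty, a point the paper leaves implicit.
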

\begin{proof}
From the definition of $\epsilon$-capacity in \eqref{eq:epscapacity}, it follows that when $R<C_{\epsilon}^{(q)}$ for a given $q \in \Pscr(\Theta)$, there exists $P_{\Xbb}\in \Pscr(\Ascr)$ such that $\sum_{\theta}q(\theta)\Ibb\{I_{P_{\Xbb}}(\Xbb;\Ybb|\theta)\leq R\}\leq \epsilon$. 
Consequently, 
\begin{align*}
\widetilde{U}(R)&=\max_{q \in \Pscr(\Theta)}\inf \biggl \lbrace \epsilon \in [0,1) \mid \exists P_{\Xbb}\in \Pscr(\Ascr)\hspace{0.1cm} \sub \hspace{0.1cm} \non \\& \qquad  \sum_{\theta}q(\theta)\Ibb\{I_{P_{\Xbb}}(\Xbb;\Ybb|\theta)\leq R\}\leq \epsilon \biggr \rbrace.
\end{align*}
It is easy to see that for a given $q \in \Pscr(\Theta)$, the inner infimum of $\epsilon$ in the above expression for $\widetilde{U}(R)$ is equivalent to finding the minimum of sums $\sum_{\theta \in \Theta'}q(\theta)$ over strict subsets $\Theta' \subset \Theta$ corresponding to which there exists a $P_{\Xbb}  \in \Pscr(\Ascr)$ such that $ R \geq I_{P_{\Xbb}}(\Xbb;\Ybb|\theta)$ for all  $\theta \in \Theta'$ and $ R<\min_{\theta \in \Theta'^c } I_{P_{\Xbb}}(\Xbb;\Ybb|\theta). $
Since a minimum is taken over subsets $\Theta'$, the first condition is redundant. 
Consequently, $\widetilde{U}(R)$ can be equivalently written as the following optimization problem,
\begin{align*}
&\max_{q \in \Pscr(\Theta)}\min_{\Theta' \subset \Theta} \biggl \lbrace \hspace{-0.1cm}\sum_{\theta \in \Theta'}\hspace{-0.1cm}q(\theta)\bigg |\exists P_{\Xbb}\in \Pscr(\Ascr) \hspace{0.1cm}\sub\hspace{0.1cm} R<\hspace{-0.1cm}\min_{\theta \in \Theta'^c } I_{P_{\Xbb}}(\Xbb;\Ybb|\theta)  \biggr \rbrace.
\end{align*} 
Now, $R <\min_{\theta \in \Theta'^c } I_{P_{\Xbb}}(\Xbb;\Ybb|\theta) $ for some $P_{\Xbb}$ holds if and only if $R< C(\Theta'^c)$,  which gives the following equivalent expression,
\begin{align*}
&\max_{q \in \Pscr(\Theta)}\min_{\Theta' \subset \Theta} \biggl \lbrace 1-\sum_{\theta \in \Theta'}\hspace{-0.1cm}q(\theta)\bigg |\hspace{0.1cm} R<C(\Theta') \biggr \rbrace.
\end{align*} It is then easy to see that the above expression is equivalent to $U(R)$ for $R\geq \Cl$, which excludes the feasibility of $\Theta'=\Theta$ in \eqref{eq:epsilonR}.
\end{proof}

 The following proposition describes the form of $L$ and $U$ precisely and gives the condition when $L(R)=U(R).$
\begin{proposition}\label{prop:continuousR}
\begin{enumerate}
\item $L$ and $U$ are non-decreasing step functions. $U$ is right-continuous and $L$ is left-continuous.
\item  $L$ and $U$ are discontinuous at the same points. 
\item $L(R)=U(R)$ holds for all rates $R$ where these functions are continuous. In particular, $L(R)=U(R)$ for all rates $R$ such that $R \neq C(\Theta')$ for any $\Theta' \subseteq \Theta$.
\end{enumerate}

\end{proposition}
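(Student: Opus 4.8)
The plan is to reduce the proposition to the combinatorics of the two set-valued maps of the rate underlying $L$ and $U$, namely $R\mapsto\Uu=\{\Theta'\subseteq\Theta : R\le C(\Theta')\}$ and $R\mapsto\Uo=\{\Theta'\subseteq\Theta : R<C(\Theta')\}$. Both are piecewise constant in $R$, with breakpoints confined to the finite set $\mathscr{C}:=\{C(\Theta') : \emptyset\neq\Theta'\subseteq\Theta\}$. We use the convention $C(\emptyset)=+\infty$, so $\emptyset$ is always feasible and contributes the value $1$ to the inner minimum; hence $L(R),U(R)\le 1$ for all $R$. A preliminary observation is that for $R>\Cu$ every feasible $\Theta'$ must omit some capacity-minimizing state $\theta_0$ (one with $\max_{P_{\Xbb}}I_{P_{\Xbb}}(\Xbb;\Ybb|\theta_0)=\Cu$), so a $q$ concentrated at $\theta_0$ forces the inner minimum to equal $1$; thus the formulas defining $L$ and $U$ on $[0,\Cu]$ would themselves return $1$ on $(\Cu,\infty)$, matching the extension, and it suffices to analyse a single formula on $[0,\infty)$.

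Part 1. If $R_1\le R_2$ then $\{\Theta' : R_2\le C(\Theta')\}\subseteq\{\Theta' : R_1\le C(\Theta')\}$, so for each fixed $q$ the inner minimum at $R_2$ is at least that at $R_1$, and applying $\max_q$ gives $L(R_1)\le L(R_2)$; the same argument gives monotonicity of $U$. On any open interval $(a,b)\subseteq[0,\Cu]$ disjoint from $\mathscr{C}$, the conditions ``$R\le C(\Theta')$'' and ``$R<C(\Theta')$'' are, for $R\in(a,b)$, each equivalent to ``$C(\Theta')\ge b$'' (no capacity value lies in $(a,b)$), so $\Uu$ and $\Uo$ are constant on $(a,b)$, whence $L$ and $U$ are; since $\mathscr{C}$ is finite, $L$ and $U$ are non-decreasing step functions. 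At a breakpoint $c\in\mathscr{C}$, for $R$ in a small enough left-neighbourhood of $c$ one has $\{\Theta':C(\Theta')\ge R\}=\{\Theta':C(\Theta')\ge c\}$, the feasible set defining $L(c)$, so $\lim_{R\uparrow c}L(R)=L(c)$; dually $\{\Theta':C(\Theta')>R\}=\{\Theta':C(\Theta')>c\}$ for $R$ just above $c$, the feasible set defining $U(c)$, so $\lim_{R\downarrow c}U(R)=U(c)$. Hence $L$ is left-continuous and $U$ right-continuous, finishing Part 1.

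Parts 2 and 3. Running the same stabilization the other way: for $R$ in a small right-neighbourhood of $c\in\mathscr{C}$, $L(R)$ uses $\{\Theta':C(\Theta')\ge R\}=\{\Theta':C(\Theta')>c\}$, which is the set defining $U(c)$, so $\lim_{R\downarrow c}L(R)=U(c)$; symmetrically $\lim_{R\uparrow c}U(R)=L(c)$. Combine this with $L(R)\le U(R)$ (the inclusion $\Uu\supseteq\Uo$ of \eqref{eq:thetaLUR}) and the one-sided continuity from Part 1: since $L$ is non-decreasing and left-continuous, it is discontinuous at $c$ iff $L(c)<\lim_{R\downarrow c}L(R)=U(c)$; since $U$ is non-decreasing and right-continuous, it is discontinuous at $c$ iff $\lim_{R\uparrow c}U(R)=L(c)<U(c)$. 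These are the single condition $L(c)<U(c)$, proving Part 2. Part 3 follows: $L(R)=U(R)$ precisely when both functions are continuous at $R$, and $\Uu=\Uo$ whenever $R\notin\mathscr{C}$, so equality holds for every rate with $R\neq C(\Theta')$ for all $\Theta'\subseteq\Theta$, the finitely many possible exceptions lying in $\mathscr{C}$.

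The substance is modest; the one point needing care is the ``stabilization'' step — that on a one-sided neighbourhood of a breakpoint the relevant feasible set is frozen and equals the appropriate one-sided limit of feasible sets — which is a direct consequence of the finiteness (hence discreteness) of the set of capacity values $\{C(\Theta')\}$, but should be spelled out explicitly; everything else is routine bookkeeping with monotone step functions.
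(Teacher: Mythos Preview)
Your proof is correct and follows essentially the same approach as the paper: both arguments reduce the proposition to the piecewise constancy of the feasible-set maps $R\mapsto\Uu$ and $R\mapsto\Uo$, with breakpoints in the finite set $\{C(\Theta')\}$, and exploit the identity between $\Uo$ at a breakpoint and $\Uu$ just to its right (and vice versa). Your presentation is somewhat cleaner in that you make the cross-limit identities $\lim_{R\downarrow c}L(R)=U(c)$ and $\lim_{R\uparrow c}U(R)=L(c)$ explicit, which immediately gives the single characterization ``discontinuous at $c$ iff $L(c)<U(c)$'' for both functions; the paper establishes the same facts but leaves the final deduction of Part~2 from Part~3 more implicit.
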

\begin{proof}
\textit{1)} We first argue that $U$ is a non-decreasing step function. Note that for rates $R_1,R_2$ such that $R_1\leq R_2$, $\overline{\Upsilon}(R_1)\supseteq \overline{\Upsilon}(R_2)$ (where $\overline{\Upsilon}(R)$ is as defined in \eqref{eq:thetaLUR}), which in turn gives that $U(R_1)\leq U(R_2)$. Thus, $U(R)$ is non-decreasing. Moreover, if for a given $R$, $\Theta_1,\Theta_2 \subseteq \Theta$ are such that $C(\Theta_1)=\max_{\Theta'\subseteq \Theta : R\geq C(\Theta')} C(\Theta')$ and $C(\Theta_2)= \min_{\Theta'\subseteq \Theta: R<C(\Theta')}C(\Theta')$, then $U(R)$ is constant in the interval $[C(\Theta_1),C(\Theta_2))$. Together, we thus have that $U(R)$ is a non-decreasing step function which is continuous in the range $C(\Theta_1)<R<C(\Theta_2)$ and is right-continuous everywhere.
Similarly, it is easy to verify that $L(R)$ is a non-decreasing function. Further, if $\Theta_3,\Theta_4 \subseteq \Theta$ are such that $C(\Theta_3)=\max_{\Theta'\subseteq \Theta : R> C(\Theta')} C(\Theta')$ and $C(\Theta_4)= \min_{\Theta'\subseteq \Theta: R\leq C(\Theta')}C(\Theta')$, $L(R)$ is constant in the interval $(C(\Theta_3),C(\Theta_4)]$. Hence, $L(R)$ is a non-decreasing step function which is continuous in the range $C(\Theta_3)<R<C(\Theta_4)$ and is left-continuous everywhere.

%

\textit{2 and 3)} We first show that $L(R)=U(R)$ for $R$ where $U$ is continuous. Consider the rates $R$ where $U$  is continuous. Two cases arise -- $(a)$ $R\neq C(\Theta')$ for any $\Theta' \subseteq \Theta$ 
in which case it is easy to verify that $\Uu=\Uo$, whereby $L(R)=U(R)$, and $(b)$  $R=C(\Theta')$ for some $\Theta'\subseteq \Theta$ and $U(R)$ is continuous. In the second case, the continuity of $U(R)$ ensures that $U(R)=U(R-\delta)$ for some small $\delta>0$. However, $\overline{\Upsilon}(R-\delta)=\Uu$ whereby $U(R-\delta)=L(R)$.
 Thus $U(R)=L(R)$ for rates $R$ where $U(R)$ is continuous. It follows that $L$ and $U$ have the same points of discontinuity, thereby proving the second claim.
Moreover, $U(R)$ and $L(R)$ in turn coincide for all rates $R$ such that $R \neq C(\Theta')$ for any $\Theta' \subseteq \Theta$.
\end{proof}
To illustrate this relation between $L(R)$ and $U(R)$, we consider the following example of a three state jammer.

\begin{examplee} Consider a three state jammer with $\Theta=\{1,2,3\}$. Let $C(\Theta')$ for $\Theta' \subseteq \Theta$ be such that,
\begin{align*}
\Cl&<C(\{1,2\})<C(\{1,3\})<C(\{2,3\})<\overline{C}=C(\{1\}).
\end{align*}
Evaluating $U(R)$ for this three-state jammer then results in,
\begin{align*}
U(R)=\begin{cases} 
0 &\mbox{if}\hspace{0.1cm} R< \Cl\\
\frac{1}{3} &\mbox{if}\hspace{0.1cm} \Cl\leq R<C(\{1,2\})\\
\frac{1}{2} &\mbox{if}\hspace{0.1cm} C(\{1,2\})\leq R<C(\{1,3\})\\
\frac{1}{2} &\mbox{if}\hspace{0.1cm} C(\{1,3\}) \leq R<C(\{2,3\})\\
\frac{2}{3} &\mbox{if}\hspace{0.1cm} C(\{2,3\})\leq R<\Cu\\
1 &\mbox{if}\hspace{0.1cm} R \geq \Cu\
\end{cases}
\end{align*}
To see this, let us evaluate $U(R)$ when $C(\{1,3\}) \leq R<C(\{2,3\})$. From the definition in \eqref{eq:epsilonR}, we get that, $U(R)=$
\begin{align*}
&\max_{q \in \Pscr(\Theta)} \min \biggl \lbrace q(1), q(2)+q(3),q(1)+q(3),q(1)+q(2) \biggr \rbrace =\frac{1}{2},
\end{align*} which follows by taking $q(1)=\frac{1}{2}$ and $q(2)=q(3)=\frac{1}{4}$.
We now evaluate $L(R)$ for the considered three-state jammer. This results in
\begin{align*}
L(R)=\begin{cases}
0 & \mbox{if}\hspace{0.1cm} R \leq \Cl \\
 \frac{1}{3} &\mbox{if}\hspace{0.1cm} \Cl< R \leq C(\{1,2\})\\
\frac{1}{2} &\mbox{if} \hspace{0.1cm}C(\{1,2\})< R \leq C(\{1,3\})\\
\frac{1}{2} &\mbox{if}\hspace{0.1cm} C(\{1,3\}) < R \leq C(\{2,3\})\\
\frac{2}{3} &\mbox{if}\hspace{0.1cm} C(\{2,3\})< R \leq \Cu \\
1 &\mbox{if}\hspace{0.1cm} R> \Cu
\end{cases}
\end{align*}
Note that $U(R) \neq L(R)$ at points of discontinuity of $U(R)$ and a strict inequality holds. To see this, consider the case when $R=C(\{2,3\})$.  While $U(R)=\frac{2}{3}$, $L(R)$ yields $\frac{1}{2}$. However, at rates $R$ where $U(R)$ is continuous, say $R=C(\{2,3\})-\Delta$, where $\Delta>0$ is  small enough, $U(R)=L(R)=\frac{1}{2}$. Thus, $L(R)$ and $U(R)$ coincide for all rates $R$ where $U(R)$ is continuous. Figure~\ref{fig:epsR} shows the $U(R)$ and Figure~\ref{fig:LR} shows  $L(R)$ for the three-state jammer.
\end{examplee}

\begin{figure}[!h]
\centering
  \includegraphics[scale=0.45,clip=true, trim=0.5in 6.2in 0in 0in]{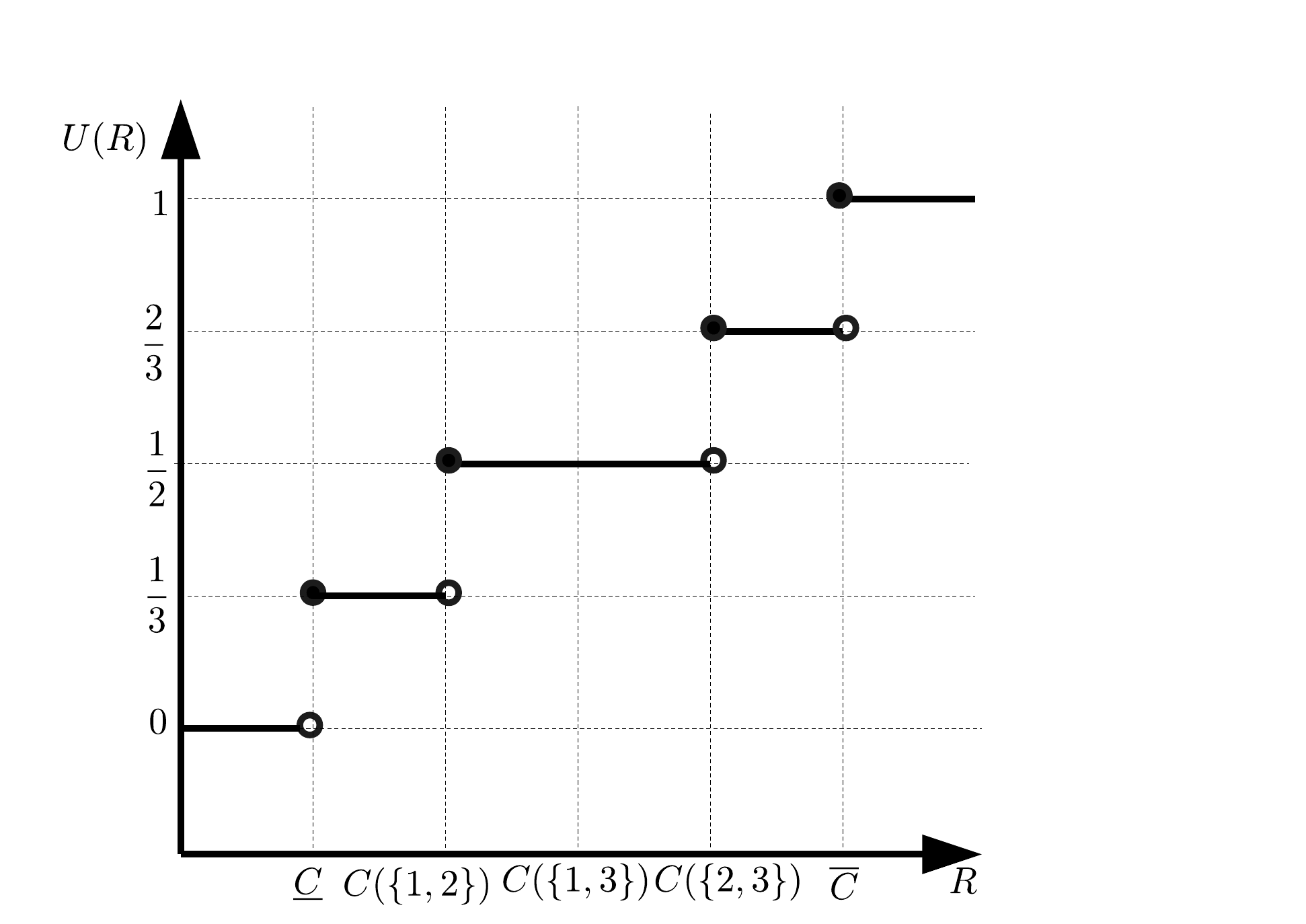}
  \caption{$U(R)$ for three-state jammer. A hollow circle indicates an excluded end-point rate and a filled circle indicates that the end-point rate is included.}
  \label{fig:epsR}
  \includegraphics[scale=0.45,clip=true, trim=0.5in 6.2in 0in 0in]{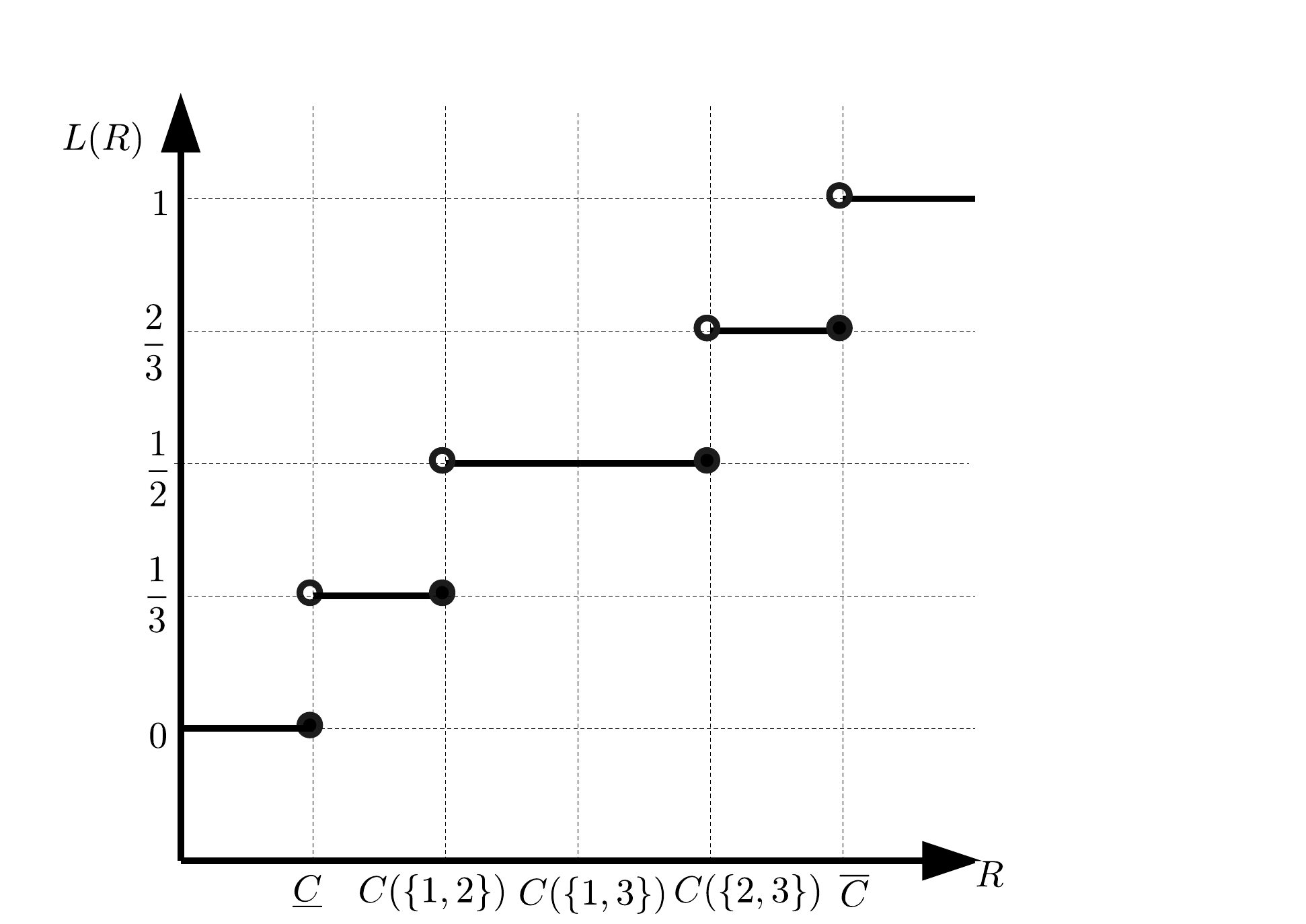}
  \caption{$L(R)$ for three-state jammer. A hollow circle indicates an excluded end-point rate and a filled circle indicates that the end-point rate is included.}
  \label{fig:LR}
\end{figure}
\subsection{Finite Blocklength Achievability Upper Bound on $\ugame$}
In this section, we present a  finite blocklength achievability scheme for a compound channel to obtain an upper bound on the min-max value $\ugame$ of the game when rate $R$ lies in the range $\Cl<R<\Cu$. While Theorem~\ref{thm:LPfundlb} showed a lower bound of $L(R)$ in this range, our scheme will achieve an upper bound of $U(R)$, implying, thanks to Proposition~\ref{prop:continuousR}, that the upper and lower bounds are equal, and equal to $L(R)$, everywhere except at finitely many points.

The proposed scheme has two key features which distinguishes itself from the achievability scheme in Theorem~\ref{thm:achievability} employed for the cases when $R<\Cl$ and $R>\Cu$. Firstly, in contrast to the deterministic encoder employed in Theorem~\ref{thm:achievability}, the encoder in our new scheme avails a local randomization strategy, with its output depending on the outcome of a random experiment. 
Precisely stating, for sending each message $s \in \Sscr$, the encoder performs a random experiment independent of $s$, the outcome of which is represented by the random variable $V$ taking values in a known finite space $\Vscr$ and distributed according to a known distribution $P_V \in \Pscr(\Vscr)$. 
 Corresponding to the input message $s \in \Sscr$ and depending on the outcome $V$ of the random experiment, the encoder then outputs $x \in \Xscr$ as $x=f(V,s)$  where $f$ is a function that maps $\Vscr \times \Sscr $ to $\Xscr$. Note however that $V$ is not available to the decoder, whereby the randomization is local to the encoder. Thus, the stochastic encoder in our achievability scheme is taken to be,
\begin{align}
Q_{X|S}(x|s)=\sum_{v \in \Vscr}\Ibb\{x=f(v,s)\}P_V(v).
\label{eq:randomencoder}
\end{align}
 To develop an achievability scheme we employ a split-achievability technique -- the second key feature of the scheme. 
In the scheme we consider, the space $\Vscr$ is taken as $\Uo$, \ie, the collection of subsets $\Theta' \subseteq \Theta$ that each define a compound channel $\{P_{Y|X,\theta}\}_{\theta \in \Theta'}$ with $C(\Theta')>R$, and the  random experiment chooses a compound channel from the above collection.
The value of $V$ specifies the codebook that is used for encoding the messages; when $V=v$, the codebook for compound channel $\{P_{Y|X,\theta}\}_{\theta \in v}$ is used to encode messages.  
$f$ is designed such that it encodes both, the value of $V$ and the message. $V$ is encoded into a string of length $n_1$; this string is prefixed before each codeword from the codebook for the compound channel formed by $V$ to get the actual channel input string. 
From the received output, the decoder attempts to decode $V$, and then decodes the message based on the codebook associated with $V$. 
We get an error if $(a)$ the  channel state $\theta \in \Theta$ does not lie in the choice of the compound channel chosen by the random experiment, \ie, $\theta \not \in V$, or $(b)$ the channel state $\theta \in V$, but the decoder fails to correctly decode either the message sent or the choice of the compound channel $V$.

To implement the split-achievability technique outlined above, we split the channel input space as $\Xscr=\Ascr^{n_1} \times \Ascr^{n-n_1}$ and the channel output space as $\Yscr=\Bscr^{n_1} \times \Bscr^{n-n_1}$ where $n_1$ also agreed upon by the decoder. 
We use the following notation. Let $\Xscrh=\Ascr^{n_1}$ and $\Xscrt=\Ascr^{n-n_1}$.  A generic channel input $X\in \Xscr$ is written as $X=(\Xh,\Xt)$ where  $\Xh \in \Xscrh$, $\Xt \in \Xscrt$ and similarly $x=(\xhat,\xtilde)$ where $\xhat \in \Xscrh,\xtilde \in \Xscrt$.
 Similarly, 
let $\Yscrh=\Bscr^{n_1}$ represent the channel output space corresponding to $\Xscrh$ and  $\Yscrt=\Bscr^{n-n_1}$ represent the channel output space corresponding to $\Xscrt$. Let $Y=(\Yh,\Yt)$ with $\Yh \in \Yscrh$, $\Yt \in \Yscrt$ and $y=(\yhat,\ytilde)$ with  $\yhat\in \Yscrh$, $\ytilde\in \Yscr$. 
With this notation we have that  for each $\theta \in \Theta$,
\begin{align}
P_{Y|X,\theta}(y|x,\theta)&=\prod_{i=1}^n \Pbb_{\Ybb|\Xbb,\theta}(y_i|x_i,\theta)\non\\&=P_{\widehat{Y}|\widehat{X},\theta}(\yhat|\xhat,\theta)P_{\Yt|\Xt,\theta}(\ytilde|\xtilde,\theta),
\end{align}
where 
\begin{align}
P_{\widehat{Y}|\widehat{X},\theta}(\yhat|\xhat,\theta)&=\prod_{i=1}^{n_1} \Pbb_{\Ybb|\Xbb,\theta}(\yhat_i|\xhat_i,\theta) \label{eq:chan1} \\
\aur \quad P_{\Yt|\Xt,\theta}(\ytilde|\xtilde,\theta)&=\prod_{i=n_1+1}^{n} \Pbb_{\Ybb|\Xbb,\theta}(y_i|x_i,\theta)\non \\ &=\prod_{i=1}^{n-n_1} \Pbb_{\Ybb|\Xbb,\theta}(\ytilde_i|\xtilde_i,\theta) \label{eq:chan2} 
\end{align}

\def\fv{\widetilde{f}_v}
\def\gv{\widetilde{g}_v}

The following theorem gives our finite blocklength achievability result.
\begin{theorem}\label{thm:achievinterrange}
For any rate $R$ lying in $\Cl<R<\Cu$, 
let $\Vscr=\Uo$ and $V \in \Vscr$ be a random variable  distributed according to
 \begin{align}
 P_V \in \arg \min_{\bar{P}_V \in \Pscr(\Vscr)} \biggl \lbrace  \max_{\theta \in \Theta} \sum_{v \in \Vscr}\bar{P}_V(v)\Ibb\{\theta \not \in v\}\biggr \rbrace. \label{eq:Pv}
 \end{align}
  Then, for any $\widetilde{\alpha},\widetilde{\delta},\alphah,\deltah>0$, and any $n_1<n$, the following upper bound holds,
\begin{equation}
\ugame \leq U(R)+2\lambda, 
 \label{eq:achievinterrange} 
\end{equation}
where 
\begin{align}
\lambda&=\max\biggl \lbrace \max_{v\in \Vscr} \err_S^{(v)}, \err_V \biggr \rbrace,\quad \mbox{with}\label{eq:lambdavalue}\\
\err_V&=\max_{\theta \in \Theta}\widehat{\Pbb}_{\theta}\biggl[ \log \frac{P_{\Yh|\Xh,\theta}(\Yh|\Xh,\theta)}{P_{\Yh|\theta}(\Yh|\theta)}\leq \alphah +\deltah \biggr]\non \\
&+|\Vscr||\Theta|^2\exp(-\alphah)+\exp(-\deltah),\\
\err_S^{(v)}&=\max_{\theta \in v}\tilde{\Pbb}_{\theta|V=v}\biggl[\log \frac{P_{\Yt|\Xt,\theta}(\Yt_v|\Xt_v,\theta)}{P_{\Yt_v|\theta}(\Yt_v|\theta)} \leq \talpha+\widetilde{\delta} \biggr]+ \non \\
& |v|^2M\exp(-\talpha)+\exp(-\tilde{\delta})\hspace{0.1cm} \mbox{for each} \hspace{0.1cm} v \in \Vscr.
\end{align}
 Here for each $v \in \Vscr$,   $\Xt_v \sim P_{\Xt_v} $ where $P_{\Xt_v}$ is any distribution in $\Pscr(\Xscrt)$,  
$\Yt_v$ denotes the output of $\Xt_v$ under the channel $P_{\Yt|\Xt,\theta}$ defined in \eqref{eq:chan2}, and $P_{\Yt_v|\theta}(\ytilde|\theta) \equiv \sum_{\xtilde \in \Xscrt} P_{\Xt_v}(\xtilde)P_{\Yt|\Xt,\theta}(\ytilde|\xtilde,\theta)$. 
  Likewise, $\Xh \sim P_{\Xh} \in \Pscr(\Xscrh)$ and $\Yh$ is the output of $\Xh$ under the channel $P_{\Yh|\Xh,\theta}$ defined in \eqref{eq:chan1} and $P_{\Yh|\theta}(\yhat|\theta) \equiv \sum_{\xhat \in \Xscrh} P_{\Xh}(\xhat)P_{\Yh|\Xh,\theta}(\yhat|\xhat,\theta)$. Finally, 
 $\widetilde{\Pbb}_{\theta|V=v}$ denotes the probability with respect to $P_{\Yt|\Xt,\theta}P_{\Xt_v}$ 
 and $\widehat{\Pbb}_{\theta}$ is with respect to $P_{\Yh|\Xh,\theta}P_{\Xh}$.
\end{theorem}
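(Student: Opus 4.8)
The plan is to exhibit one stochastic code implementing the split-achievability scheme described above, decompose its worst-case error (over messages and states) into a ``wrong codebook'' term plus a ``decoding failure given the right codebook'' term, control the second by two applications of Theorem~\ref{thm:achievability}, and identify the first with $U(R)$ using von Neumann's minimax theorem. Concretely: first I would apply Theorem~\ref{thm:achievability} to the prefix compound channel $\{P_{\Yh|\Xh,\theta}\}_{\theta\in\Theta}$ of blocklength $n_1$ carrying $|\Vscr|$ messages, with input distribution $P_{\Xh}$ and constants $\alphah,\deltah$; this yields a deterministic code $(\fhat,\ghat)$, $\fhat\colon\Vscr\to\Xscrh$, $\ghat\colon\Yscrh\to\Vscr$, whose maximal error over $(\theta,v)$ is at most $\err_V$ (the term $M|\Theta|^2/\exp(\alpha)$ of Theorem~\ref{thm:achievability} becoming $|\Vscr||\Theta|^2\exp(-\alphah)$). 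Then, for each $v\in\Vscr$, I would apply Theorem~\ref{thm:achievability} to the suffix compound channel $\{P_{\Yt|\Xt,\theta}\}_{\theta\in v}$ of blocklength $n-n_1$ carrying $M$ messages, with input distribution $P_{\Xt_v}$ and constants $\talpha,\widetilde{\delta}$; this yields $(\fv,\gv)$, $\fv\colon\Sscr\to\Xscrt$, $\gv\colon\Yscrt\to\Sscr$, whose maximal error over $(\theta,s)$ with $\theta\in v$ is at most $\err_S^{(v)}$. The overall code is the stochastic encoder \eqref{eq:randomencoder} with $f(v,s)=(\fhat(v),\fv(s))$, $V\sim P_V$ drawn from \eqref{eq:Pv} independently of $s$, together with the deterministic decoder that, on $y=(\yhat,\ytilde)$, forms $\vhat=\ghat(\yhat)$ and outputs $\Shat=\widetilde{g}_{\vhat}(\ytilde)$.

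\emph{Error bound for the scheme.} Next I would fix $s,\theta$, condition on $V=v$, and use memorylessness (\eqref{eq:chan1}--\eqref{eq:chan2}) to make $\Yh$ and $\Yt$ conditionally independent given the input and $\theta$. The union bound $\I{\widetilde{g}_{\ghat(\yhat)}(\ytilde)\neq s}\leq\I{\ghat(\yhat)\neq v}+\I{\ghat(\yhat)=v}\,\I{\gv(\ytilde)\neq s}$, followed by summing over the two output blocks in turn, bounds $\Ebb[\I{S\neq\Shat}|S=s,\theta,V=v]$ by the prefix-decoding error plus the suffix message error of codebook $v$; this is $\leq\err_V+\err_S^{(v)}\leq 2\lambda$ when $\theta\in v$ and trivially $\leq 1$ when $\theta\notin v$. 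Averaging over $V\sim P_V$ then gives, uniformly in $s,\theta$,
\[
\Ebb[\I{S\neq\Shat}|S=s,\theta]\ \leq\ \sum_{v\in\Vscr}P_V(v)\,\I{\theta\notin v}\ +\ 2\lambda .
\]
Since the objective of problem $\uP$ is $\max_q\Ebb[\I{S\neq\Shat}]$, which by linearity in $q$ equals $\max_{\theta}\Ebb[\I{S\neq\Shat}|\theta]\leq\max_{\theta,s}\Ebb[\I{S\neq\Shat}|S=s,\theta]$, this code already certifies $\ugame\leq\max_{\theta\in\Theta}\sum_{v\in\Vscr}P_V(v)\I{\theta\notin v}+2\lambda$.

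\emph{Identifying the leading term, and the main obstacle.} It then remains to show $\max_{\theta\in\Theta}\sum_{v\in\Vscr}P_V(v)\I{\theta\notin v}=U(R)$ for $P_V$ as in \eqref{eq:Pv}, i.e. that $\min_{\bar{P}_V\in\Pscr(\Vscr)}\max_{\theta\in\Theta}\sum_{v}\bar{P}_V(v)\I{\theta\notin v}=U(R)$. Viewing $\sum_{v,\theta}\bar{P}_V(v)q(\theta)\I{\theta\notin v}$ as a matrix game on the simplices $\Pscr(\Vscr)$ and $\Pscr(\Theta)$, von Neumann's minimax theorem equates its upper and lower values; maximizing the bilinear form over $q$ at a vertex identifies the upper value with $\min_{\bar{P}_V}\max_{\theta}\sum_v\bar{P}_V(v)\I{\theta\notin v}$, and minimizing over $\bar{P}_V$ at a vertex identifies the lower value with $\max_{q}\min_{v\in\Vscr}\{1-\sum_{\theta\in v}q(\theta)\}$, which is $U(R)$ by \eqref{eq:equivalentUR} since $\Vscr=\Uo$. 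Substituting this into the previous display yields $\ugame\leq U(R)+2\lambda$. I expect the delicate step to be the split-error bookkeeping --- isolating the prefix-decoding error from the nested suffix decoder, using the conditional independence correctly, and matching the constants emerging from Theorem~\ref{thm:achievability} to the exact forms of $\err_V$ and $\err_S^{(v)}$ --- whereas the minimax identification of the wrong-codebook probability with $U(R)$, which is the conceptual core, is a short argument.
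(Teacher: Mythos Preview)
Your proposal is correct and follows essentially the same route as the paper's proof: both construct the split code $(f,g)$ by invoking Theorem~\ref{thm:achievability} once on the $n_1$-length prefix compound channel over $\Theta$ (yielding $(\fhat,\ghat)$ with maximal error $\leq\err_V$) and once per $v$ on the $(n-n_1)$-length suffix compound channel over $v$ (yielding $(\fv,\gv)$ with maximal error $\leq\err_S^{(v)}$), then decompose the error according to whether $\theta\in V$ and whether $\ghat(\Yh)=V$, and finally identify $\min_{P_V}\max_\theta\sum_v P_V(v)\I{\theta\notin v}$ with $U(R)$ via von Neumann's minimax theorem. The only cosmetic difference is that you bound the conditional error given $V=v$ pointwise before averaging over $P_V$, whereas the paper sums over $v$ with weights $P_V(v)$ first; the resulting inequalities are identical.
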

\begin{proof}
To obtain the required bound, we consider the randomized encoder $Q_{X|S}$ as in \eqref{eq:randomencoder} with
 $f:\Vscr \times \Sscr \rightarrow \Xscrh \times \Xscrt$ and $P_V$ as defined in \eqref{eq:Pv}, 
and a deterministic decoder,
$g:\Yscrh \times \Yscrt \rightarrow \{1,\hdots,M\}$. 
We adopt the following split-achievability strategy. 
For coding the choice of $V$, we consider a deterministic code   of size $|\Vscr|$ in the space of $\Xscrh$ and $\Yscrh$, \ie an encoder $\fhat:\Vscr \rightarrow \Xscrh$ and a decoder $\ghat:\Yscrh \rightarrow \Vscr$, such that the maximum  probability of erroneous transmission of any $v \in \Vscr$ over the compound channel $\{P_{Y|X,\theta}\}_{\theta \in \Theta}$ is $\lambda$. Since $\lambda \geq \err_V$, Theorem~\ref{thm:achievability} applied with $M$ replaced by $|\Vscr|$ guarantees the existence of such a code. Subsequently, for each choice of $v \in \Vscr$, we consider another deterministic code of size $M$ in the space of $\Xscrt$ and $\Yscrt$, \ie, an encoder $\fv:\Sscr \rightarrow \Xscrt$, and a decoder $\gv:\Yscrt \rightarrow \Sscr$,  such that the maximum average probability of erroneous transmission over the compound channel $\{P_{Y|X,\theta}\}_{\theta \in v}$ is $\lambda$. Since $\lambda\geq \err_S^{(v)}$, Theorem~\ref{thm:achievability} applied with $\Theta$ replaced by $v$, then guarantees the existence of such a code. The encoder $f$ and decoder $g$ are then assembled as follows,
\begin{align*}
f(v,s) &= (\fhat(v),\fv(s)) \qquad \forall v\in \Vscr, s\in \Sscr, \\
g(\yhat,\ytilde) & = \widetilde{g}_{\ghat(\yhat)}(\ytilde) \qquad \qquad \forall \yhat \in \Yscrh, \ytilde \in \Yscrt.
\end{align*}
In other words, the encoder $f$ encodes the value of $V$ using $\fhat$ and the value of $S$ using $\fv$ when $V=v.$
The decoder $g$ maps $\Yt$ to a message in $\Sscr$ using a function $\gv$ where $v$ is obtained as $\ghat(\Yhat)$. Thus the decoder first decodes $v$ from $\Yhat$ and then uses the resulting value of $v$ to choose $\gv$, which is then used to decode the message from $\Yt$.

We now show that this scheme achieves the bound in \eqref{eq:achievinterrange}. Recall that $\Vscr=\Uo$, whereby $\Vscr$ comprises of compound channels formed from those subsets $v$ of $\Theta$ such that $R<C(v).$ Thus when $V=v$, error in transmission of $S$ occurs when either the channel state $\theta \notin v$ or when $\theta \in v$ but either $v$ is not correctly decoded or, $v$ is correctly decoded, but $S$ is not correctly decoded. 
Concretely, let $\Pbb_\theta(S \neq \Shat|V=v)$
 represent the average probability of error in transmitting $S$ given that $V=v$ under the  code $(f,g)$ constructed above when the channel state is $\theta$. Clearly, $\Pbb_\theta(S\neq \Shat|V=v)$ equals the sum
\begin{equation}
\Pbb_\theta(S\neq \Shat, \theta \notin v |V=v)+ \Pbb_\theta(S\neq \Shat, \theta \in v |V=v). \label{eq:master} 
\end{equation}
We upper bound the first term in \eqref{eq:master} by $\I{\theta \notin v}$. For the second term, note that 
\begin{align*}
\Pbb_\theta(S\neq \Shat, \theta \in v |V=v) &= \Pbb_\theta(S\neq \Shat, \theta \in v, \ghat(\Yhat) =v |V=v) \\ 
&+ \Pbb_\theta(S\neq \Shat, \theta \in v, \ghat(\Yhat) \neq v |V=v)
\end{align*}
which are probabilities corresponding to the event that the decoder correctly decodes $V$ but makes an error in decoding $S$, and the event that decoder incorrectly decodes $V$ and $S$. Consequently,
\begin{align*}
&\sum_v P_V(v)\Pbb_{\theta}(S \neq \Shat,\theta \in v|V=v) \leq \\
& \sum_v P_V(v) \biggl[\Pbb_\theta (\gtilde_v(\Yt)\neq S|V=v)+\Pbb_\theta(\ghat(\Yh)\neq V|V=v)\biggr]
\\
&\leq 2 \lambda,
\end{align*}  
where the last inequality follows from the construction of our codes.
 Consequently,
\begin{align}
\ugame&\leq \max_{\theta \in \Theta}\biggl \lbrace \sum_{v \in \Vscr} P_V(v) \biggl[\Pbb_{\theta}(S\neq \Shat,\theta \in v|V=v)+ \non\\&\quad \Pbb_{\theta}(S \neq \Shat,\theta \not \in v|V=v) \biggr]\biggr \rbrace \non\\
&\leq 2\lambda+\max_{\theta\in \Theta}\biggl \lbrace \sum_{v \in \Vscr} P_V(v) \Ibb\{\theta \not \in v\}\biggr \rbrace \non. 
\end{align} 
By the choice of $P_V$ in \eqref{eq:Pv}, the second term in the above bound can be equivalently written as,
\begin{align}
\max_{\theta\in \Theta}&\biggl \lbrace \sum_{v \in \Vscr} P_V(v) \Ibb\{\theta \not \in v\}\biggr \rbrace \label{eq:beautiful}  \\ 
&=\min_{P'_V \in \Pscr(\Vscr)}\max_{\theta\in \Theta}\biggl \lbrace \sum_{v \in \Vscr} P'_V(v) \Ibb\{\theta \not \in v\}\biggr \rbrace \non \\ 
&\stackrel{(a)}{=} \min_{P'_V \in \Pscr(\Vscr)}\max_{q \in \Pscr(\Theta)}\biggl \lbrace \sum_{v \in \Vscr}\sum_{\theta \in \Theta}q(\theta) P'_V(v) \Ibb\{\theta \not \in v\}\biggr \rbrace \non\\
&\stackrel{(b)}{=} \max_{q \in \Pscr(\Theta)}\min_{P'_V \in \Pscr(\Vscr)}\biggl \lbrace \sum_{v \in \Vscr}\sum_{\theta \in \Theta}q(\theta) P'_V(v) \Ibb\{\theta \not \in v\}\biggr \rbrace \non \\
&\stackrel{(c)}{=}\max_{q \in \Pscr(\Theta)}\min_{v \in \Uo}\biggl \lbrace \sum_{\theta \in \Theta}q(\theta) \Ibb\{\theta \not \in v\}\biggr \rbrace \non \\
&= \max_{q \in \Pscr(\Theta)}\min_{v \in \Uo}\biggl \lbrace 1-\sum_{\theta \in v}q(\theta) \biggr \rbrace = U(R), \non
\end{align}
where $(a)$ holds due to the linearity of  the expression in the curly braces in $q \in \Pscr(\Theta)$, 
 $(b)$ follows from  von Neumann's minimax theorem and $(c)$ follows again from the linearity of the expression in curly braces in $P_V'$. Comparing the resulting expression with \eqref{eq:equivalentUR}, then yields the required bound.
\end{proof}

Observe that the bound in Theorem~\ref{thm:achievinterrange} is valid for any $n_1<n$. We now show that $U(R)$ is achievable asymptotically for rates lying in the range $\Cl <R<\Cu$ by letting $n_1$ grow to infinity such that $n_1=o(n)$. 
\begin{theorem} \label{thm:asymptoticub} 
Let $\Cl >0$. For $\Cl<R<\Cu$, the upper bound on the min-max value of the game in \eqref{eq:achievinterrange} yields the following limiting value as $n \rightarrow \infty$, \ie,
\begin{align}
\limsup_{n\rightarrow \infty} \ugame \leq U(R). \label{eq:asymptoticub}
\end{align}
\end{theorem}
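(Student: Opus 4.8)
The plan is to pass to the limit in the finite blocklength bound of Theorem~\ref{thm:achievinterrange}: that theorem gives $\ugame \le U(R) + 2\lambda$ for \emph{every} $n_1<n$ and every admissible choice of the auxiliary input distributions and slack parameters, with $\lambda = \max\{\max_{v\in\Uo}\err_S^{(v)},\ \err_V\}$. Since $\Uo = \Vscr$ is a finite set, it therefore suffices to exhibit a regime in which $\err_V\rightarrow 0$ and $\err_S^{(v)}\rightarrow 0$ for each fixed $v\in\Uo$; then $\lambda\rightarrow 0$ and $\limsup_{n\rightarrow\infty}\ugame\le U(R)$ follows at once.

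First I would fix a prefix length $n_1=n_1(n)$ with $n_1\rightarrow\infty$ and $n_1=o(n)$ (e.g. $n_1=\lceil\sqrt n\rceil$). For the $V$-block, take $P_{\Xh}=(P_\Xbb^{*})^{\times n_1}$ with $P_\Xbb^{*}$ attaining $\Cl=\max_{P_\Xbb}\min_{\theta\in\Theta}I_{P_\Xbb}(\Xbb;\Ybb|\theta)$ (which exists and is positive by hypothesis), and set $\alphah=\deltah=\tfrac{1}{3}n_1\Cl$. Under $\widehat\Pbb_\theta=P_{\Yh|\Xh,\theta}P_{\Xh}$ the normalized quantity $\tfrac{1}{n_1}\log\frac{P_{\Yh|\Xh,\theta}(\Yh|\Xh,\theta)}{P_{\Yh|\theta}(\Yh|\theta)}$ is an average of $n_1$ i.i.d.\ terms with mean $I_{P_\Xbb^{*}}(\Xbb;\Ybb|\theta)\ge\Cl$, so since $(\alphah+\deltah)/n_1=\tfrac{2}{3}\Cl<\Cl$ the weak law of large numbers forces $\widehat\Pbb_\theta[\log\tfrac{P_{\Yh|\Xh,\theta}(\Yh|\Xh,\theta)}{P_{\Yh|\theta}(\Yh|\theta)}\le\alphah+\deltah]\rightarrow 0$ for each of the finitely many $\theta\in\Theta$; the residual terms $|\Vscr||\Theta|^2\exp(-\alphah)$ and $\exp(-\deltah)$ vanish because $\alphah\rightarrow\infty$. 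Hence $\err_V\rightarrow 0$.

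Next, for each $v\in\Uo$ take $P_{\Xt_v}=(P_{\Xbb,v}^{*})^{\times(n-n_1)}$ with $P_{\Xbb,v}^{*}$ attaining $C(v)=\max_{P_\Xbb}\min_{\theta\in v}I_{P_\Xbb}(\Xbb;\Ybb|\theta)$, and pick a single constant $\zeta$ with $0<\zeta<\tfrac{1}{2}\min_{v\in\Uo}(C(v)-R)$ (possible since $\Uo$ is finite and $C(v)>R$ for every $v\in\Uo$). Set $\talpha=\log M+(n-n_1)\zeta$ and $\widetilde{\delta}=(n-n_1)\zeta$. Then $|v|^2 M\exp(-\talpha)=|v|^2\exp(-(n-n_1)\zeta)\rightarrow 0$ and $\exp(-\widetilde{\delta})\rightarrow 0$, while the normalized threshold $(\talpha+\widetilde{\delta})/(n-n_1)=\tfrac{\log M}{n-n_1}+2\zeta\rightarrow R+2\zeta$ because $n_1=o(n)$, and $R+2\zeta<C(v)\le I_{P_{\Xbb,v}^{*}}(\Xbb;\Ybb|\theta)$ for every $\theta\in v$. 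Applying the weak law of large numbers to the average under $\widetilde\Pbb_{\theta|V=v}=P_{\Yt|\Xt,\theta}P_{\Xt_v}$ of the i.i.d.\ information densities $\log\frac{P_{\Yt|\Xt,\theta}(\Yt_v|\Xt_v,\theta)}{P_{\Yt_v|\theta}(\Yt_v|\theta)}$, which have mean $I_{P_{\Xbb,v}^{*}}(\Xbb;\Ybb|\theta)$, again drives the probability term in $\err_S^{(v)}$ to $0$; since $v$ is finite this gives $\err_S^{(v)}\rightarrow 0$ for each $v\in\Uo$, and hence $\max_{v\in\Uo}\err_S^{(v)}\rightarrow 0$.

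Combining, $\lambda\rightarrow 0$, so passing to the $\limsup$ in $\ugame\le U(R)+2\lambda$ yields \eqref{eq:asymptoticub}. The one point requiring care is the \emph{simultaneous} choice of $n_1$: it must grow without bound so that the $O(1)$-sized index set $\Vscr$ for $V$ is transmitted reliably over the length-$n_1$ prefix (which needs $\alphah\rightarrow\infty$ while keeping $\alphah+\deltah<n_1\Cl$, exploiting $\Cl>0$), yet it must be $o(n)$ so that the effective message rate $\log M/(n-n_1)\rightarrow R$ stays strictly below $C(v)$ for every $v\in\Uo$; any $n_1$ with $n_1\rightarrow\infty$ and $n_1/n\rightarrow 0$ reconciles the two requirements, after which everything reduces to the finiteness of $\Theta$ and $\Uo$ together with a routine law-of-large-numbers (information-spectrum) estimate for the two probability terms.
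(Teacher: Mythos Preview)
Your proposal is correct and follows essentially the same approach as the paper: let $n_1\to\infty$ with $n_1=o(n)$, and show separately that $\err_V\to 0$ (since the prefix carries a constant-size alphabet $|\Vscr|\le 2^{|\Theta|}$ at asymptotically zero rate over a compound channel with capacity $\Cl>0$) and that each $\err_S^{(v)}\to 0$ (since the payload has asymptotic rate $R<C(v)$ for every $v\in\Uo$). The paper's write-up simply invokes ``arguing as in the proof of Theorem~\ref{thm:tightregions}'' (which in turn rests on Lemma~\ref{lem:achievabilityyagi}) in place of your explicit i.i.d./WLLN argument with concrete choices $\alphah=\deltah=\tfrac{1}{3}n_1\Cl$, $\talpha=\log M+(n-n_1)\zeta$, $\widetilde\delta=(n-n_1)\zeta$; the underlying mechanism is identical.
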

\begin{proof}
It suffices to argue that for $\Vscr = \Uo$, we can choose $P_{\Xh},P_{\Xt_v}, \alphah,\widetilde{\alpha},\deltah,\widetilde{\delta}$ and $n_1$ such
that for each $v\in \Vscr$, $\err_S^{(v)} \rightarrow 0$ and $\err_V \rightarrow 0$ as $n \rightarrow \infty$. Towards this, as $n$ increases, we let $n_1$ grow to infinity at a rate of $o(n)$.

$\err_V$ is an upper bound guaranteed by Theorem~\ref{thm:achievability} for sending $|\Vscr|$ messages over the compound channel $\{P_{Y|X,\theta}\}_{\theta \in \Theta}$ using a code of blocklength $n_1$. Since $n_1 \rightarrow \infty$ as $n \rightarrow \infty$, and $|\Vscr| \leq 2^{|\Theta|}$, a constant, the rate of this code is asymptotically zero. Since $\Cl>0$, arguing as in the proof of Theorem~\ref{thm:tightregions}, we can choose $P_{\Xh},\alphah,\deltah$ such that $\err_V \buildrel{n}\over\rightarrow 0.$

For each $v \in \Vscr$, $\err_S^{(v)}$ is an upper bound guaranteed by Theorem~\ref{thm:achievability} for sending $M$ messages over the compound channel $\{P_{Y|X,\theta}\}_{\theta \in v}$ using a code of blocklength $n-n_1$. Since $n_1=o(n)$, the rate of such a code is asymptotically $R$. Now since $\Vscr = \Uo$, we have from \eqref{eq:thetaLUR} that $R<C(v)$ for each $v \in \Vscr.$ Consequently, arguing again as in the proof of Theorem~\ref{thm:tightregions}, we can choose $\Xt_v,\widetilde{\alpha},\widetilde{\delta}$ such that $\err_S^{(v)} \buildrel{n}\over\rightarrow 0$ for each $v \in \Vscr.$
\end{proof}

That gives the upper bound we were looking for.

\begin{remarkc}
We find it rather pleasing to note how, via von Neumann's minimax theorem, the achievable error term for the compound channel in \eqref{eq:beautiful}, \ie,  $\max_{\theta\in \Theta}\bigl \lbrace \sum_{v \in \Vscr} P_V(v) \Ibb\{\theta \not \in v\}\bigr \rbrace $ becomes exactly the required quantity $U(R)$, which we arrived at from the converse for the \textit{mixed} channel -- a different channel altogether. As outlined in the introduction, our intuitions for this are grounded in the near-convexity of coding problems we discovered in~\cite{jose2016linearIT}. There may be other operational interpretations of this phenomenon that are probably worthy of further investigation. The above scheme, though natural in hindsight, occurred to us only after first deriving the converse expression and reinterpreting it in these `dual' terms. It would be illuminating to find an \textit{ab initio} operational justification for the optimality of this scheme. 
\end{remarkc}

\subsection{Asymptotic Tightness of the Min-max and Max-min Values}
We now tie  our story together by consolidating the consequences of Theorem~\ref{thm:asymptoticub}, Theorem~\ref{thm:LPfundlb} and Theorem~\ref{thm:tightregions}. 
The following result then holds.
\begin{theorem}\label{thm:erlr}
Let $\Cl>0$. For rates $R\geq 0$ such that $U(R)$ is continuous at $R$, the min-max and max-min values of the game approach $\vartheta(R):=U(R)=L(R)$ as $n \rightarrow \infty$, \ie,
\begin{align*}
\vartheta(R):=\limn \ugame = \limn \lgame = L(R) =U(R).
\end{align*}
In particular, for rates $R$ such that $R \neq C(\Theta')$ for any $\Theta' \subseteq \Theta$, the above equation holds.
\end{theorem}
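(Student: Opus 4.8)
The plan is to deduce the theorem by assembling the estimates already proved, together with the structural facts about $L$ and $U$ from Proposition~\ref{prop:continuousR}; no fundamentally new argument is required. The first step is to record that, by \eqref{eq:minmax}, $\ugame\ge\lgame$ for every $n$, so that passing to $\liminf$ and $\limsup$ in $n$ yields $\liminf_{n\rightarrow\infty}\lgame\le\liminf_{n\rightarrow\infty}\ugame$ and $\limsup_{n\rightarrow\infty}\lgame\le\limsup_{n\rightarrow\infty}\ugame$; since all of these quantities lie in $[0,1]$ the liminf's and limsup's are finite and these manipulations are legitimate. I would then split according to where $R$ sits relative to $\Cl$ and $\Cu$.

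For $R<\Cl$ one has $\Theta\in\Uo$ (because $C(\Theta)=\Cl>R$), so $\min_{\Theta'\in\Uo}\{1-\sum_{\theta\in\Theta'}q(\theta)\}\le 1-\sum_{\theta\in\Theta}q(\theta)=0$ for every $q$, whence $U(R)=L(R)=0$; Theorem~\ref{thm:tightregions} gives $\limn\ugame=\limn\lgame=0$, so the claim holds, and $U$ is trivially continuous there. For $R>\Cu$ the convention adopted after \eqref{eq:epsilonR} sets $L(R)=U(R)=1$, Theorem~\ref{thm:tightregions} gives $\limn\ugame=\limn\lgame=1$, and again $U$ is continuous. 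In the main range $\Cl<R<\Cu$ I would chain the two available bounds: Theorem~\ref{thm:asymptoticub} gives $\limsup_{n\rightarrow\infty}\ugame\le U(R)$ and Theorem~\ref{thm:LPfundlb} gives $\liminf_{n\rightarrow\infty}\lgame\ge L(R)$, so combining with the two inequalities coming from $\ugame\ge\lgame$ produces the sandwich
\begin{align*}
L(R)\ \le\ \liminf_{n\rightarrow\infty}\lgame\ \le\ \liminf_{n\rightarrow\infty}\ugame\ \le\ \limsup_{n\rightarrow\infty}\ugame\ \le\ U(R),
\end{align*}
and likewise $L(R)\le\liminf_{n\rightarrow\infty}\lgame\le\limsup_{n\rightarrow\infty}\lgame\le\limsup_{n\rightarrow\infty}\ugame\le U(R)$. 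Since $U$ is continuous at $R$ by hypothesis, Proposition~\ref{prop:continuousR}(3) gives $L(R)=U(R)$, and this collapses both chains: the limits $\limn\ugame$ and $\limn\lgame$ exist and equal the common value $\vartheta(R):=L(R)=U(R)$.

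It remains to check that these three regimes, together with the continuity hypothesis, leave nothing out. The only rates not yet addressed are the endpoints $R\in\{\Cl,\Cu\}$, which lie in the finite set $\{C(\Theta'):\Theta'\subseteq\Theta\}$ (indeed $\Cl=C(\Theta)$ and $\Cu=\min_{\theta\in\Theta}C(\{\theta\})$). I would argue that $U$ is discontinuous at both --- just below $\Cl$ one has $U=0$ whereas $U(\Cl)\ge 1/|\Theta|>0$ (take $q$ uniform and use $\Theta\notin\overline{\Upsilon}(\Cl)$), and for $R$ just below $\Cu$ every singleton lies in $\Uo$, which forces $U(R)\le 1-1/|\Theta|<1$ while $U$ equals $1$ just above $\Cu$; when instead $\Cl=\Cu$, the single point $R=\Cl=\Cu$ is likewise a jump of $U$ (from $0$ to $1$), and Theorem~\ref{thm:tightregions} covers every other rate. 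Hence $R\in\{\Cl,\Cu\}$ violates the hypothesis and the case split is exhaustive. Finally, for the ``in particular'' clause: by Proposition~\ref{prop:continuousR}(1)--(3), $U$ is a nondecreasing step function whose jumps can occur only among the finitely many rates $C(\Theta')$, $\Theta'\subseteq\Theta$; therefore $R\ne C(\Theta')$ for every $\Theta'\subseteq\Theta$ implies $U$ is continuous at $R$, and the main part applies.

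I do not foresee a genuine obstacle, since the mathematical substance is entirely contained in Theorems~\ref{thm:tightregions}, \ref{thm:asymptoticub}, \ref{thm:LPfundlb} and Proposition~\ref{prop:continuousR}. The only point that needs care is the bookkeeping just described --- making the case split exhaustive and confirming that the liminf/limsup inequalities above are genuinely two-sided, so that ordinary limits (not merely one-sided bounds) are obtained. A minor convenience worth noting, should one wish to sidestep any discussion of the endpoints, is that the argument of Theorem~\ref{thm:asymptoticub} in fact goes through verbatim at $R=\Cl$ and $R=\Cu$ as well, since it uses only $\Cl>0$ and the defining property $R<C(v)$ for every $v\in\Uo$.
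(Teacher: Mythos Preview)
Your proposal is correct and follows essentially the same route as the paper: split into $R<\Cl$, $R>\Cu$, and $\Cl<R<\Cu$; invoke Theorem~\ref{thm:tightregions} on the outer ranges; sandwich via Theorems~\ref{thm:LPfundlb} and~\ref{thm:asymptoticub} in the middle; and collapse using Proposition~\ref{prop:continuousR}. Your treatment is in fact more careful than the paper's, which simply omits the endpoints $R\in\{\Cl,\Cu\}$ without comment, whereas you explicitly argue they are discontinuity points of $U$ and hence excluded by hypothesis.
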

\begin{proof}
Notice that for $R<\Cl$, $L(R)=U(R)=0.$ Moreover for $R>\Cu$, we defined $L(R)=U(R)=1.$ Thus, Theorem~\ref{thm:tightregions} confirms the above claim for $R<\Cl$ and $R>\Cu.$ 
By combining \eqref{eq:LPfundlb} with \eqref{eq:asymptoticub}, we get that for rates $R$ such that $\Cl<R<\Cu$, the following bound holds in the limit as $n \rightarrow \infty$, \ie,
\begin{align}
U(R)&\geq \limsup_{n\rightarrow\infty} \ugame \geq \liminf_{n\rightarrow \infty} \lgame \geq L(R). \label{eq:asymptight}
\end{align}
In particular, for rates $R$ such that $U(R)$ is continuous at $R$, $L(R)=U(R)$ from Proposition~\ref{prop:continuousR}, whereby the claim holds.
\end{proof}

Thus, except for those finitely many rates $R$ where $L$ or $U$ are discontinuous, the min-max and max-min values of the game coincide in the limit as $n \rightarrow \infty$, and they coincide to $\vartheta(R)=U(R)=L(R)$, a value one can explicitly compute.  Theorem~\ref{thm:erlr} also gives a closed-form expression for the $\epsilon$-capacity of a compound channel $\{P_{Y|X,\theta}\}_{\theta \in \Theta}$ under stochastic codes as shown in the following corollary.
\begin{corollary}
For any fixed $\epsilon \in [0,1)$, the $\epsilon$-capacity of the compound channel  $\{P_{Y|X,\theta}\}_{\theta \in \Theta}$ under stochastic codes and average error probability criterion, denoted $C_\epsilon$, is given as,
\begin{align}
C_{\epsilon}:= \sup \biggl \lbrace R | \limn \ugame \leq \epsilon \biggr \rbrace =\sup \biggl \lbrace R | L(R) \leq \epsilon \biggr \rbrace \label{eq:epscapcomp}.
\end{align}
\end{corollary}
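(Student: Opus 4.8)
The plan is to deduce the corollary from the already-proved identities $\limn\ugame=L(R)=U(R)$ at continuity points of $U$ (Theorem~\ref{thm:erlr}) together with the elementary structure of $L$ and $U$ recorded in Proposition~\ref{prop:continuousR}. The first equality in the statement is simply the operational definition of the $\epsilon$-capacity under stochastic codes and the average-error criterion, since $\ugame$ is, by the discussion in Section~\ref{sec:infobackground}, exactly the least average probability of error attainable over $\{P_{Y|X,\theta}\}_{\theta\in\Theta}$ by a stochastic code of blocklength $n$ and rate $R$; so the substance lies in the second equality. Set $R^\star:=\sup\{R : L(R)\le\epsilon\}$. Because $\Cl>0$ and $L\equiv 0$ on $[0,\Cl)$ while $\epsilon\in[0,1)$, and because $L\equiv 1$ on $(\Cu,\infty)$, we have $\Cl\le R^\star\le\Cu$; in particular $R^\star$ is finite and positive, so no degenerate case at $R=0$ arises.

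I would first show $\sup\{R:\limn\ugame\le\epsilon\}\le R^\star$. Fix $R>R^\star$; then $L(R)>\epsilon$, since otherwise $R$ would lie in $\{R:L(R)\le\epsilon\}$ and hence be at most $R^\star$. By Theorem~\ref{thm:LPfundlb} (supplemented, for $R>\Cu$, by Theorem~\ref{thm:tightregions}, where $L(R)=1$ and $\lgame\to1$) we have $\liminf_{n\to\infty}\lgame\ge L(R)$, and combining with $\ugame\ge\lgame$ from \eqref{eq:minmax} gives $\liminf_{n\to\infty}\ugame\ge L(R)>\epsilon$; hence $\ugame>\epsilon$ for all large $n$ and $R$ is not counted in $\{R:\limn\ugame\le\epsilon\}$. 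For the reverse inequality, recall from Proposition~\ref{prop:continuousR} that $U$ (equivalently $L$) has only finitely many discontinuities, so every interval $(R^\star-\eta,R^\star)$ contains a continuity point $R$ of $U$. For such an $R$ we have $R<R^\star$, hence, by monotonicity of $L$ and the definition of the supremum, $L(R)\le\epsilon$, while Theorem~\ref{thm:erlr} gives $\limn\ugame=U(R)=L(R)\le\epsilon$. Thus $R\in\{R:\limn\ugame\le\epsilon\}$, and letting $\eta\downarrow 0$ yields $\sup\{R:\limn\ugame\le\epsilon\}\ge R^\star$. The two inequalities give $C_\epsilon=R^\star=\sup\{R:L(R)\le\epsilon\}$, as claimed; if one wishes a fully explicit formula, the right-continuity of $U$ and left-continuity of $L$ together with their common (finite) jump set let one read $R^\star$ off as the level-$\epsilon$ crossing of $L$, expressed through the compound-channel capacities $C(\Theta')$, $\Theta'\subseteq\Theta$.

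The only delicate point — and the main obstacle — is the handful of rates in the finite discontinuity set, where $\limn\ugame$ is not known to exist and where $L(R)<U(R)$: the argument must be arranged so these never affect the two suprema, which is achieved by invoking Theorem~\ref{thm:erlr} only at continuity points of $U$ in the lower bound, and by using the everywhere-valid estimate $\liminf_n\ugame\ge L(R)$ in the upper bound. A minor secondary point is to check that $\sup\{R:\limn\ugame\le\epsilon\}$ — defined via codes of exactly rate $R$ at each blocklength — agrees with the conventional $\epsilon$-capacity defined via code sequences whose rate merely tends to $R$; this follows from the monotonicity and left-continuity of the limiting optimal-error function $L$.
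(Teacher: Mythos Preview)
Your proof is correct and follows essentially the same approach as the paper: both directions rest on the sandwich $L(R)\le\liminf_n\ugame\le\limsup_n\ugame\le U(R)$ together with $L=U$ off a finite set. The paper streamlines the achievability direction slightly by first observing that $\sup\{R:L(R)\le\epsilon\}=\sup\{R:U(R)\le\epsilon\}$ (since $L$ and $U$ are equal step functions except at their common jumps) and then invoking $\limsup_n\ugame\le U(R)$ directly for any $R$ below this common supremum, whereas you approximate $R^\star$ from below by continuity points of $U$ and apply Theorem~\ref{thm:erlr} there; the two arguments are equivalent.
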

\begin{proof}
Denote $\kappa(\epsilon):=\sup \bigl \lbrace R | L(R) \leq \epsilon \bigr \rbrace. $ Since $L$ and $U$ are step functions that are equal everywhere except at points of jump-discontinuity, it follows that $\kappa(\epsilon) = \sup \bigl \lbrace R | U(R) \leq \epsilon \bigr \rbrace.$
We first show that if $R$ is $\epsilon$-achievable for the compound channel, \ie $\limn \ugame \leq \epsilon$, then $R\leq \kappa(\epsilon)$. Towards this, note from \eqref{eq:asymptight} that $R$ is $\epsilon$-achievable implies that $L(R) \leq \epsilon$, which in turn gives that $R \leq \kappa(\epsilon)$.
Conversely, if $R<\kappa(\epsilon)$, then $U(R) \leq \epsilon$ which shows that $\limn \ugame \leq \epsilon$.
\end{proof}

To the best of our knowledge, ours is the first characterization of the $\epsilon$-capacity of a compound channel under stochastic codes. 

We conclude with some final remarks. For those rates $R$ at which $U(R)$ is discontinuous, Theorem~\ref{thm:erlr} gives that the limiting value of the difference between min-max and max-min values of the game amounts to at most $U(R)-L(R)$. The question then arises if we could modify our achievability scheme so as to yield a limiting value of $L(R)$ at the points of discontinuity. Let $R=C(\Theta')$, $\Theta' \subseteq \Theta$ be a point of discontinuity of $U(R)$. Towards modifying the achievability scheme, one can instead take $\Vscr=\Uu$.  Our modified scheme then gives a limiting value of $L(R)$ at the point of discontinuity $R=C(\Theta')$  \textit{if zero probability of error is achievable} at rate $R=C(\Theta')$ for the compound channel $\{P_{Y|X,\theta}\}_{\theta \in \Theta'}$. Whether this is possible depends on the specific channel laws, not on the capacity alone. This is topic of separate research, which is beyond our present scope.
\section{Conclusion} \label{sec:conclusion}
We considered a game between a team comprising of a finite blocklength encoder and decoder and a finite state jammer where the former team attempts to minimize the probability of error and the jammer attempts to maximize it. The nonclassicality of the information structure renders the team's decision problem nonconvex whereby there may not exist a saddle point value to this game. Despite this, we showed that for all but finitely many rates, an asymptotic saddle-point value exists for this game and derived an exact characterization of this value. Our results demonstrate a deeper relation between compound and mixed channels and provide an new characterization of the $\epsilon$-capacity of a compound channel under stochastic codes.
\section*{Acknowledgment}
The authors would like to thank Himanshu Tyagi from the Indian Institute of Science for helpful discussions on this topic.
\appendices 
\section{Proof of Corollary~\ref{cor:typebasedlb}}\label{app:3}
Crucial to proving Corollary~\ref{cor:typebasedlb} is the following lemma from \cite{yagi2014single}, which is useful in analyzing the asymptotic tightness of the bound in \eqref{eq:converse1}.
\begin{lemma}\label{lemma:converseyagi}
For any fixed $x \in \Ascr^n$, let $P$ denote its type. Let $\delta>0$ be an arbitrary constant and define $C_{\theta|x}^{(n)}(\delta):=$
$$\biggl\lbrace y \in \Bscr^n: \mathrel{\bigg|} \frac{1}{n}\log \frac{P_{Y|X,\theta}(y|x,\theta)}{(PP_{\Ybb|\Xbb,\theta})^{\times n}(y|\theta)} -I_{P}(\Xbb;\Ybb|\theta)\mathrel{\bigg|} \leq \delta\biggr \rbrace,$$
for all $\theta \in \Theta$, where $(PP_{\Ybb|\Xbb,\theta})^{\times n}(y)=\prod_{i=1}^n\sum_{a \in \Ascr} P(a)P_{\Ybb|\Xbb=a,\theta}(y_i|a,\theta)$ and $I_{P}(\Xbb;\Ybb|\Theta)$ is the mutual information evaluated when $X$ has the distribution $P$. Then,
$$
\Pbb[Y \in C_{\theta|x}^{(n)}(\delta)] \geq 1-\frac{A(\delta)}{n},
$$ where $\Pbb$ is with respect to $P_{Y|X=x,\theta}$, $A(\delta)$ is a constant independent of $n, P$ and $\theta$.
\end{lemma}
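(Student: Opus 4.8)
This is essentially the estimate established in \cite{yagi2014single}: a weak law of large numbers for the conditional information density, with an explicit $O(1/n)$ tail, and the plan is to prove it by a second-moment (Chebyshev) argument. Fix $\theta\in\Theta$ and $x\in\Ascr^n$ of type $P$, write $W(b|a):=P_{\Ybb|\Xbb,\theta}(b|a,\theta)$ and $Q(b):=\sum_{a\in\Ascr}P(a)W(b|a)$. By the memorylessness in \eqref{eq:DMC}, under $P_{Y|X=x,\theta}$ the coordinates $Y_1,\dots,Y_n$ are independent with $Y_i\sim W(\cdot|x_i)$, and
\[
\frac1n\log\frac{P_{Y|X,\theta}(Y|x,\theta)}{(PP_{\Ybb|\Xbb,\theta})^{\times n}(Y|\theta)}=\frac1n\sum_{i=1}^n G_i,\qquad G_i:=\log\frac{W(Y_i|x_i)}{Q(Y_i)},
\]
each $G_i$ being finite on the probability-one event that $Y_i$ lies in the support of $W(\cdot|x_i)$. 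First I would identify the mean: $\Ebb[G_i]=D\bigl(W(\cdot|x_i)\,\|\,Q\bigr)$, so since $x$ has type $P$,
\[
\Ebb\bigl[{\textstyle\tfrac1n\sum_{i=1}^n G_i}\bigr]=\sum_{a\in\Ascr}P(a)\,D\bigl(W(\cdot|a)\,\|\,Q\bigr)=I_{P}(\Xbb;\Ybb|\theta),
\]
which is exactly the centering constant appearing in the definition of $C_{\theta|x}^{(n)}(\delta)$.

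The one estimate requiring care is the variance. The naive pointwise bound is only $|G_i|=O(\log n)$ — a symbol may occur just once in $x$, so $P(x_i)$ can be as small as $1/n$ — which would yield merely a $(\log n)^2/n$ rate. To get the claimed $A(\delta)/n$ rate I would argue as follows. Let $c_\theta>0$ be the smallest positive entry of $W$; then $Q(y)\ge P(x_i)W(y|x_i)\ge P(x_i)\,c_\theta$ whenever $W(y|x_i)>0$, while $Q(y)\le 1$ and $W(y|x_i)\ge c_\theta$ there, so $|G_i|\le\log\tfrac1{P(x_i)}+\log\tfrac1{c_\theta}$ and $\Ebb[G_i^2]\le\bigl(\log\tfrac1{P(x_i)}+C_\theta\bigr)^2$ with $C_\theta:=\log(1/c_\theta)$. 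Averaging over $i$ and using that $x$ has type $P$,
\[
\frac1n\sum_{i=1}^n\Ebb[G_i^2]\le\sum_{a\in\Ascr}P(a)\Bigl(\log\tfrac1{P(a)}+C_\theta\Bigr)^2\le\frac{4|\Ascr|}{e^2}+2C_\theta\log|\Ascr|+C_\theta^2=:B_\theta,
\]
where I use $\sum_aP(a)\bigl(\log\tfrac1{P(a)}\bigr)^2\le|\Ascr|\max_{0\le t\le1}t(\log\tfrac1t)^2=4|\Ascr|e^{-2}$ and $\sum_aP(a)\log\tfrac1{P(a)}\le\log|\Ascr|$. The point is that $B_\theta$ depends only on $|\Ascr|$ and $\theta$, not on $n$ or $P$; hence, by independence, $\mathrm{Var}\bigl(\tfrac1n\sum_{i}G_i\bigr)=\tfrac1{n^2}\sum_i\mathrm{Var}(G_i)\le\tfrac1{n^2}\sum_i\Ebb[G_i^2]\le B_\theta/n$.

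To conclude, Chebyshev's inequality gives $\Pbb\bigl[\,\bigl|\tfrac1n\sum_{i}G_i-I_P(\Xbb;\Ybb|\theta)\bigr|>\delta\,\bigr]\le B_\theta/(n\delta^2)$, i.e.\ $\Pbb[Y\in C_{\theta|x}^{(n)}(\delta)]\ge 1-A(\delta)/n$ with $A(\delta):=\max_{\theta\in\Theta}B_\theta/\delta^2$, which is finite and independent of $n$, $P$ and $\theta$ because $\Theta$ is finite. The main, indeed the only, obstacle is this variance estimate: one must avoid the crude $\log n$-range bound on $G_i$ and instead exploit that $\sum_aP(a)(\log 1/P(a))^2$ is bounded uniformly in $P$, so that the rare symbols — which alone can make individual $G_i$ large — contribute only $O(1)$ to $\tfrac1n\sum_i\Ebb[G_i^2]$ and hence $O(1/n)$ to the variance of the average. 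Alternatively, one may simply invoke the corresponding lemma of \cite{yagi2014single} verbatim.
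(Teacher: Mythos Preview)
The paper does not prove this lemma at all; it is quoted from \cite{yagi2014single} and used as a black box in the proof of Corollary~\ref{cor:typebasedlb}. Your Chebyshev argument is correct and self-contained. The crucial step --- replacing the naive pointwise bound $|G_i|=O(\log n)$ by the averaged second-moment bound $\tfrac1n\sum_i\Ebb[G_i^2]\le\sum_{a}P(a)\bigl(\log\tfrac1{P(a)}+C_\theta\bigr)^2$ and then noting that $t(\log\tfrac1t)^2\le 4e^{-2}$ uniformly in $t\in[0,1]$ --- is exactly what delivers the uniform-in-$P$ variance bound and hence the $O(1/n)$ rate rather than $O((\log n)^2/n)$. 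Taking the maximum over the finite set $\Theta$ to make $A(\delta)$ independent of $\theta$ is also correct. In short, you have supplied a full proof where the paper simply cites one; your closing remark that one may ``invoke the corresponding lemma of \cite{yagi2014single} verbatim'' is precisely the route the paper takes.
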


We now prove Corollary~\ref{cor:typebasedlb}.

\begin{proofarg}[of Corollary~\ref{cor:typebasedlb}]
To obtain the bound in \eqref{eq:typelb}, lower bound \eqref{eq:converse1} by fixing some $q \in \Pscr(\Theta)$, $\gamma=n\xi$, $\xi>0$ and $P_{\Ybar|\theta}$ as in \eqref{eq:pybar}. Note that the choice of $\gamma $ and $P_{\Ybar|\theta}$ are independent of $q$. Let the minimum in the resulting expression be attained by $x_0 \in \Xscr$ with type $\bar{P}_{n}(q)$. Subsequently, apply Lemma~\ref{lemma:converseyagi} as illustrated in \cite[Proof of Theorem 2.2]{jose2017linearitw} to get the following lower bound on \eqref{eq:converse1},
\begin{align}&\sum_{\theta}q(\theta)\Ibb\biggl\lbrace \hspace{-0.1cm}I_{\Pbar_n(q)}(\Xbb;\Ybb|\theta)\hspace{-0.1cm} \leq R -2\xi-\frac{\log |\Tscr^n|}{n}\hspace{-0.05cm}\biggr\rbrace \non\\& \qquad - \frac{A(\xi)}{n}-\exp(-n \xi).
\end{align} Subsequently, take supremum over $q \in \Pscr(\Theta)$ to get the required lower bound. 
\end{proofarg}
\section{Proof of Theorem~\ref{thm:achievability}}\label{app:2}
\begin{proofarg}[of Theorem~\ref{thm:achievability}]
To obtain an upper bound on $\ugame$, we design a deterministic code $(f,g)$ for the compound channel $\{P_{Y|X,\theta}\}_{\theta \in \Theta}$. The code will be constructed so that $f(i)=u_i$, $u_i \in \Xscr, i\in \Sscr=\{1,\hdots,M\}$ and
  $g:\Yscr \rightarrow \Sscr$ partitions $\Yscr$ into $M$ disjoint decoding sets $\{\Dscr_1,\hdots, \Dscr_M\}$ to yield a $(n,M,\lambda)$ code such that
  $$P_{Y|X,\theta}(Y \in \Dscr_i|X=u_i,\theta)\geq 1-\lambda, \quad \forall \theta \in \Theta, i\in [M], $$
and 
$$\lambda=\max_{\theta}\Pbb_{\theta}\biggl[ \log \frac{P_{Y|X,\theta}(Y|X,\theta)}{P_{Y}(Y)}\leq \alpha \biggr]+\frac{M|\Theta|^2}{\exp(\alpha)},$$ with $P_{Y}(y)=\frac{1}{|\Theta|}\sum_{\theta}P_{Y|\theta}(y|\theta)=\sum_{\theta}\sum_x P_X(x)P_{Y|X,\theta}(y|x,\theta).$ We will then upper bound $\lambda$ to get the required result. Notice that this strategy ensures a bound on the maximum probability over all messages, as required by the theorem.

Towards this, 
corresponding to each $x \in \Xscr$, we associate $$B_{\theta}(x)=\biggl \lbrace y \in \Yscr : \log \frac{P_{Y|X,\theta}(y|x,\theta)}{P_Y(y)}\geq \alpha\biggr \rbrace.\hspace{0.1cm}$$  Then, for any $x \in \Xscr$ and $\theta \in \Theta$, the following relation holds:
\begin{align}
1&\geq \Pbb_{\theta}[Y \in B_{\theta}(x)|X=x] \non
\\
&\geq \sum_{y}P_{Y}(y)\exp(\alpha) \Ibb \biggl \lbrace P_{Y|X,\theta}(y|x,\theta)\geq P_{Y}(y)\exp(\alpha) \biggr \rbrace\non\\
&=\exp(\alpha) \Pbb_{Y}(B_{\theta}(x)). \label{eq:boundalpha1}
\end{align}
We now construct our codebook and decoding sets along the lines of a maximal code construction of Feinstein (see \cite{robert1990ash}).
\begin{enumerate}
\item Choose if possible $u_1 \in \Xscr$ such that
$$\min_{\theta}\Pbb_{\theta}[B_{\theta}(u_1)|X=u_1]\geq 1-\lambda.$$
Assign $\Dscr_{\theta,1}=B_{\theta}(u_1)$ as the decoding set of $u_1$ with respect to the channel parameter $\theta$. Subsequently, take $\Dscr_1=\bigcup_{\theta}\Dscr_{\theta,1}$ to be the decoding set corresponding to $u_1$. 
\item Choose $u_k  \in \Xscr \backslash \{u_1,\hdots,u_{k-1}\}$ such that
$$ \min_{\theta}\Pbb_{\theta}[B_{\theta}(u_k) /\bigcup_{j=1}^{k-1} \Dscr_j|X=u_k]\geq 1-\lambda.$$ Take $\Dscr_k=\bigcup_{\theta}\Dscr_{\theta,k}=\bigcup_{\theta}[B_{\theta}(u_k)/\bigcup_{j=1}^{k-1} \Dscr_j]$ to be the decoding region corresponding to $u_k$, where $\Dscr_{\theta,k}$ represents the decoding region for $u_k$ with respect to the channel parameter $\theta$. It is easy to verify that the sets $\Dscr_i$ are mutually disjoint.
\end{enumerate}
Suppose this process stops after $K$ steps, which results in a codebook $\{u_1,\hdots,u_K\}$ with $(\Dscr_1,\hdots,\Dscr_K)$ as the corresponding decoding regions. We now show that $K\geq M$.
 The stopping condition implies that for all $x$, there exists $\theta=\theta_0$ such that,
\begin{align*}
\Pbb_{\theta_0}[B_{\theta_0}(x) /\bigcup_{j=1}^{K} \Dscr_j|X=x]< 1-\lambda,
\end{align*}
which gives that
\begin{align*}
\lambda 
&\stackrel{(a)}{<} 1-\Pbb_{\theta_0}[B_{\theta_0}(x)|x]+\Pbb_{\theta_0} [\bigcup_{j=1}^{K} \Dscr_j|X=x]\\
&\leq 1-\Pbb_{\theta_0}[B_{\theta_0}(x)|x]+ \sum_{j=1}^K \Pbb_{\theta_0} [ \Dscr_j|X=x],
\end{align*}where the inequality in $(a)$ follows since $P(A\backslash B)\geq P(A)-P(B)$. Now averaging with respect to $x$, we get that
\begin{align*}
\lambda 
&<\Pbb_{\theta_0}[B_{\theta_0}(X)^c]+\sum_{j=1}^K \Pbb_{Y|\theta_0} [ \Dscr_j]\\
&\stackrel{(a)}{\leq} \Pbb_{\theta_0}[B_{\theta_0}(X)^c]+\sum_{j=1}^K |\Theta|\Pbb_{Y}[ \Dscr_j]\non\\
&\stackrel{(b)}{\leq}\Pbb_{\theta_0}\biggl[ \log \frac{P_{Y|X,\theta_0}(Y|X,\theta_0)}{P_{Y}(Y)}\leq \alpha \biggr]+\frac{|\Theta|^2K}{\exp(\alpha)} 
\end{align*}where $(a)$ follows since $P_{Y|\theta}(y|\theta)\leq |\Theta|P_Y(y)$ for any $\theta \in \Theta$. To get to $(b)$, upper bound $\Pbb_{Y}[ \Dscr_j]$ with $\Pbb_{Y} \biggl[ \bigcup_{\theta}[B_{\theta}(u_j)]\biggr]$, which is further bounded by $\sum_{\theta \in \Theta}\Pbb_{Y} [B_{\theta}(u_j)]$. The inequality in $(b)$ then follows by employing \eqref{eq:boundalpha1}.
It can now be verified easily that,
\begin{align*}
&\Pbb_{\theta_0}\biggl[ \log \frac{P_{Y|X,\theta_0}(Y|X,\theta_0)}{P_{Y}(Y)}\leq \alpha \biggr]+M|\Theta|^2\frac{1}{\exp(\alpha)}\\
&\leq \lambda \leq \Pbb_{\theta_0}\biggl[ \log \frac{P_{Y|X,\theta_0}(Y|X,\theta_0)}{P_{Y}(Y)}\leq \alpha \biggr]+K|\Theta|^2\frac{1}{\exp(\alpha)},
\end{align*}
which implies that $K\geq M$.

Analyzing the objective value of $\uP$ over this codebook and decoding sets, we get that
\begin{align*}
&\max_{q \in \Pscr(\Theta)} \sum_{s,y}\frac{1}{M}\sum_{\theta}q(\theta)P_{Y|X,\theta}(y|f(s),\theta)\Ibb\{s\neq g(y)\}\\
&\leq \max_{\theta,s\in \Sscr} \Pbb_{\theta}(Y \in \Dscr_{s}^c|X=u_s)\leq \lambda.
\end{align*}
Further, $\lambda$ can be again upper bounded as follows. 
 \begin{align}
&\lambda \leq  \max_{\theta \in \Theta}\biggl \lbrace\Pbb_{\theta}\biggl[ \log \frac{P_{Y|X,\theta}(Y|X,\theta)}{P_{Y}(Y)}\leq \alpha, \log \frac{P_{Y|X,\theta}(Y|X,\theta)}{P_{Y|\theta}(Y|\theta)}>\non\\&\alpha+\delta \biggr]+ \Pbb_{\theta}\biggl[  \log \frac{P_{Y|X,\theta}(Y|X,\theta)}{P_{Y|\theta}(Y|\theta)}\leq \alpha+\delta \biggr]\biggr \rbrace+\frac{M|\Theta|^2}{\exp(\alpha)}.\label{eq:achiebound1}
\end{align}
The first term in the RHS can be again upper bounded as,
\begin{align*}
&\Pbb_{\theta}\biggl[ \exp(\alpha)P_Y(Y)\geq P_{Y|X,\theta}(Y|X,\theta)>\hspace{-0.1cm}\exp(\alpha+\delta)P_{Y|\theta}(Y|\theta) \biggr]\\
&=\sum_{x,y}P_{Y|X,\theta}(y|x,\theta)P_X(x) \Ibb\biggl \lbrace \exp(\alpha)P_Y(y)\geq P_{Y|X,\theta}(y|x,\theta)\\& \qquad >\exp(\alpha+\delta)P_{Y|\theta}(y|\theta) \biggr \rbrace\\
&\leq \sum_{x,y}P_{Y|X,\theta}(y|x,\theta)P_X(x) \Ibb\biggl \lbrace \frac{\exp(\alpha)}{\exp(\alpha+\delta)}P_Y(y)\hspace{-0.1cm}>\hspace{-0.1cm}P_{Y|\theta}(y|\theta)\hspace{-0.1cm} \biggr \rbrace\\
&=\sum_{y}P_{Y|\theta}(y|\theta) \Ibb\biggl \lbrace \exp(-\delta)P_Y(y)\hspace{-0.1cm}>\hspace{-0.1cm}P_{Y|\theta}(y|\theta) \biggr \rbrace\\
&< \exp(-\delta)\sum_y P_{Y}(y)=\exp(-\delta),
\end{align*}which together with the remaining terms in the RHS of \eqref{eq:achiebound1}, yields the required bound.
\end{proofarg}
\section{Proof of Theorem~\ref{thm:tightregions}}\label{app:1}
For proving Theorem~\ref{thm:tightregions}, we first employ the following lemma from \cite{yagi2014single} to derive an upper bound on \eqref{eq:improvedfeinstein1}, 
which is particularly useful in analysing the asymptotic tightness of the bound.
\begin{lemma}\label{lem:achievabilityyagi}
Let $P_{X}(x)=\prod_{i=1}^n P_{\Xbb}(x_i)$ for all $x \in \Ascr^n$, where $P_{\Xbb} \in \Pscr(\Ascr)$. Then we have that,
\begin{align}
&\Pbb\biggl \lbrace \frac{1}{n} \log \frac{P_{Y|X,\theta}(Y|X,\theta)}{P_{Y|\theta}(Y|\theta)} \leq R+ \rho_{\theta,n} \biggr \rbrace \non\\& \leq \Ibb\{I_{\Pbb_X}(\Xbb;\Ybb|\theta)\leq R+\rho_{\theta,n}+\beta) \}+\frac{F(\beta)}{n} \quad \forall \theta,
\label{eq:ubound1}
\end{align}where $\Pbb$ is with respect to $P_{Y|X,\theta}P_X$ and $F(\beta)$ is a constant independent of $n,\theta$, $\{\rho_{\theta,n}\}\geq 0$ is such that $\lim_{n \rightarrow \infty} \rho_{\theta,n}=0$ and $\beta>0$ denotes an arbitrary constant.
Taking $\limsup$ on both sides yield that
\begin{align}
\limsup_{n\rightarrow \infty} \hspace{0.1cm} & \Pbb\biggl \lbrace \frac{1}{n} \log \frac{P_{Y|X,\theta}(Y|X,\theta)}{P_{Y|\theta}(Y|\theta)} \leq R+ \rho_{\theta,n} \biggr \rbrace \non\\& \leq \Ibb\{I_{\Pbb_X}(\Xbb;\Ybb|\theta)\leq R+\beta) \} \quad \forall \theta.
\end{align}
\end{lemma}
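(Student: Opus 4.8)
The plan is to recognize the normalized log--likelihood ratio inside the probability as an empirical average of i.i.d.\ single--letter information densities, and then to push through a Chebyshev (weak--law) estimate; the only substantive ingredient is a variance bound that is \emph{uniform} in $n$, $\theta$ and the input distribution, which is what allows $F(\beta)$ to depend on $\beta$ alone. First I would use the product structure: since $P_X=P_{\Xbb}^{\times n}$ and the channels are memoryless (cf.\ \eqref{eq:DMC}), the induced output distribution factorizes, $P_{Y|\theta}(y|\theta)=\prod_{i=1}^n P_{\Ybb|\theta}(y_i|\theta)$ with $P_{\Ybb|\theta}(b|\theta):=\sum_{a\in\Ascr}P_{\Xbb}(a)P_{\Ybb|\Xbb,\theta}(b|a,\theta)$, so that
\[
\frac1n\log\frac{P_{Y|X,\theta}(Y|X,\theta)}{P_{Y|\theta}(Y|\theta)}=\frac1n\sum_{i=1}^n \jmath_\theta(X_i;Y_i),\qquad \jmath_\theta(a;b):=\log\frac{P_{\Ybb|\Xbb,\theta}(b|a,\theta)}{P_{\Ybb|\theta}(b|\theta)},
\]
where under $\Pbb=P_XP_{Y|X,\theta}$ the pairs $(X_i,Y_i)$ are i.i.d.\ with law $P_{\Xbb}(a)P_{\Ybb|\Xbb,\theta}(b|a,\theta)$ and $\Ebb[\jmath_\theta(X_1;Y_1)]=I_{P_{\Xbb}}(\Xbb;\Ybb|\theta)=:\mu$.

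Next I would split on the value of $\mu$ relative to the threshold $R+\rho_{\theta,n}+\beta$. If $\mu\le R+\rho_{\theta,n}+\beta$, the indicator on the right of \eqref{eq:ubound1} equals $1$ and the inequality is vacuous, since any probability is at most $1\le 1+F(\beta)/n$. If $\mu> R+\rho_{\theta,n}+\beta$, the indicator is $0$, and it remains to bound the left side by $F(\beta)/n$; since then $R+\rho_{\theta,n}<\mu-\beta$,
\[
\Pbb\Big\{\tfrac1n\textstyle\sum_{i}\jmath_\theta(X_i;Y_i)\le R+\rho_{\theta,n}\Big\}\le\Pbb\Big\{\big|\tfrac1n\textstyle\sum_{i}\jmath_\theta(X_i;Y_i)-\mu\big|\ge\beta\Big\}\le\frac{\mathrm{Var}\big(\jmath_\theta(X_1;Y_1)\big)}{n\beta^2}
\]
by Chebyshev's inequality and independence. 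The main (and essentially only) obstacle is to produce a constant $V$, depending only on the finite alphabets $\Ascr,\Bscr$ and the channel family, with $\mathrm{Var}(\jmath_\theta(X_1;Y_1))\le V$ for all $\theta$ and all $P_{\Xbb}\in\Pscr(\Ascr)$; then $F(\beta):=V/\beta^2$ works.

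For that variance bound I would argue as follows. Let $p_{\min}$ be the smallest strictly positive transition probability of the family (positive since $\Theta,\Ascr,\Bscr$ are finite). For every $(a,b)$ with $P_{\Ybb|\Xbb,\theta}(b|a,\theta)>0$ one has, using $P_{\Ybb|\theta}(b|\theta)\ge P_{\Xbb}(a)P_{\Ybb|\Xbb,\theta}(b|a,\theta)$ and $P_{\Ybb|\theta}(b|\theta)\le 1$, the two--sided bound $\log p_{\min}\le\jmath_\theta(a;b)\le-\log P_{\Xbb}(a)$; hence $\jmath_\theta(a;b)^2\le 2(\log p_{\min})^2+2(\log P_{\Xbb}(a))^2$, and averaging over $(X_1,Y_1)$ together with $\sum_{a}P_{\Xbb}(a)\big(\log P_{\Xbb}(a)\big)^2\le|\Ascr|\sup_{0<p\le1}p(\log p)^2<\infty$ gives $\Ebb[\jmath_\theta(X_1;Y_1)^2]\le V$ with $V$ independent of $n,\theta,P_{\Xbb}$, whence the variance bound. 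This closes the finite--blocklength estimate \eqref{eq:ubound1}.

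Finally, for the $\limsup$ assertion I would simply take $\limsup_{n\to\infty}$ through \eqref{eq:ubound1}: the term $F(\beta)/n\to0$, and since $\rho_{\theta,n}\ge0$ with $\rho_{\theta,n}\to0$, the indicator $\Ibb\{\mu\le R+\rho_{\theta,n}+\beta\}$ equals $1$ for all $n$ when $\mu\le R+\beta$ and equals $0$ for all large $n$ when $\mu>R+\beta$; in either case $\limsup_n$ of the right--hand side equals $\Ibb\{\mu\le R+\beta\}=\Ibb\{I_{P_{\Xbb}}(\Xbb;\Ybb|\theta)\le R+\beta\}$, which is exactly the stated limiting bound.
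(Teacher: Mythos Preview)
Your proposal is correct. The paper does not actually prove this lemma---it is quoted from \cite{yagi2014single} without argument---so there is no in-paper proof to compare against; your Chebyshev-based derivation with a uniform second-moment bound on the single-letter information density is the standard route, and the case split on whether $I_{P_{\Xbb}}(\Xbb;\Ybb|\theta)$ lies above or below $R+\rho_{\theta,n}+\beta$ is exactly the device that produces the indicator term.

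One small remark: the lemma as stated only requires $F(\beta)$ to be independent of $n$ and $\theta$, not of $P_{\Xbb}$, so you have in fact proved slightly more than is needed; for the weaker statement the trivial bound $\mathrm{Var}(\jmath_\theta)\le\Ebb[\jmath_\theta^2]\le\max_{a,b:P_{\Xbb}(a)P_{\Ybb|\Xbb,\theta}(b|a,\theta)>0}\jmath_\theta(a;b)^2$ already suffices, since with $P_{\Xbb}$ fixed the right side is a finite maximum over finite alphabets and the finite set $\Theta$. Your uniform-in-$P_{\Xbb}$ bound via $\sup_{0<p\le1}p(\log p)^2=4/e^2$ is a nice touch but not strictly required here.
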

\begin{proofarg}[of Theorem~\ref{thm:tightregions}]
To show the required result, we first obtain an upper bound on the difference between the min-max and max-min values of the game. Towards this,  upper bound \eqref{eq:improvedfeinstein1} using Lemma~\ref{lem:achievabilityyagi} with $P_{X}(x)=\prod_{i=1}^n P_{\Xbb}(x_i)$ for all $x \in \Ascr^n$, where $P_{\Xbb} \in \Pscr(\Ascr)$, $\Delta,\beta>0$ are arbitrary constants and $\{\rho_{n,\theta}\geq 0\}$ is such that $\lim_{n\rightarrow \infty}\rho_{n,\theta}=0$. Together with the lower bound in \eqref{eq:typelb}, we then get that
\begin{align}
&0\leq \ugame-\lgame \non\\
&\leq \max_{\theta \in \Theta}\biggl[ \Ibb \biggl \lbrace I_{P_{\Xbb}}(\Xbb;\Ybb|\theta)\leq R+\Delta+\rho_{n,\theta}+\beta\biggr \rbrace \biggr]+\frac{F(\beta)}{n}\non\\ &+\exp(-n\Delta/2)[|\Theta|^2+1]\non \\&- \biggl ( \max_{q \in \Pscr(\Theta)}\sum_{\theta}\hspace{-0.1cm}q(\theta)\Ibb\biggl\lbrace \hspace{-0.1cm}I_{\Pbar_n(q)}(\Xbb;\Ybb|\theta)\hspace{-0.1cm} \leq R -2\xi-\frac{\log |\Tscr^n|}{n}\hspace{-0.05cm}\biggr\rbrace \non\\&- \frac{A(\xi)}{n}-\exp(-n \xi) \biggr),\label{eq:gendifference}
\end{align}where $\xi>0$ is an arbitrary constant and $\Pbar_n(q)$ is the type of $x \in \Ascr^n$ that minimizes \eqref{eq:Pnq} with $P_{\Ybar|\theta}(y|\theta)$ as 
in \eqref{eq:pybar}.
We show that the difference bound in \eqref{eq:gendifference} vanishes asymptotically for the two cases when $R<\Cl$ and $R>\Cu$.
\\
Case 1: $R>\Cu$.\\
In this case, consider $R=\Cu+\epsilon$, where $\epsilon>0$. Let $\bar{\theta} \in \arg \min_{\theta}\bigl[\max_{P_{\Xbb}}I_{P_{\Xbb}}(\Xbb;\Ybb|\theta)\bigr]$ whereby $R = \max_{P_{\Xbb}}I_{P_{\Xbb}}(\Xbb;\Ybb|\bar{\theta})+\epsilon$. Consequently, upper bounding $\max_{\theta}\bigl[ \Ibb \bigl \lbrace I_{P_{\Xbb}}(\Xbb;\Ybb|\theta)\leq R+\Delta+\rho_{n,\theta}+\beta\bigr \rbrace \bigr]$ with 1 and choosing $q(\theta)=\tilde{q}(\theta)=\Ibb\{\theta=\bar{\theta}\}$, $\xi=\frac{1}{4}\epsilon$,  \eqref{eq:gendifference} can be upper bounded as,
 \begin{align}
 & 1+\frac{F(\beta)}{n}+\exp(-n\Delta/2)[|\Theta|^2+1]+\frac{A(\xi)}{n}+\exp(-n \frac{\epsilon}{4})\non \\&- \Ibb\biggl\lbrace \hspace{-0.1cm}I_{\Pbar_n(\tilde{q})}(\Xbb;\Ybb|\bar{\theta)}\hspace{-0.1cm} \leq \max_{P_{\Xbb}} I_{P_{\Xbb}}(\Xbb;\Ybb|\bar{\theta})+\frac{1}{2}\epsilon-\frac{\log |\Tscr^n|}{n}\hspace{-0.05cm}\biggr\rbrace\non
 \end{align}
Here the negative term (and thus the lower bound on $\lgame$) goes to 1  in the limit since $I_{\Pbar_n(\tilde{q})}(\Xbb;\Ybb|\bar{\theta)} \leq \max_{P_{\Xbb}} I_{P_{\Xbb}}(\Xbb;\Ybb|\bar{\theta})$ and $\frac{\log |\Tscr^n|}{n}$ goes to zero in the limit as seen from \eqref{eq:typecounting}. Moreover, all other positive terms vanish as $n\rightarrow \infty$. Hence, the difference in \eqref{eq:gendifference} goes to zero asymptotically.\\
Case 2: $R<\Cl$.\\
In this case, we take $R=\Cl-\epsilon$, where $0<\epsilon<\Cl$. Let $P^{*}_X\in \arg \max_{P_{\Xbb}} \bigl[\min_{\theta} I_{P_{\Xbb}}(\Xbb;\Ybb|\theta)\bigr]$ and fix $\Delta=\beta=\frac{\epsilon}{4}$, $\rho_{n,\theta}=\frac{\epsilon}{n}$ in \eqref{eq:gendifference}. Further, the maximum over $q(\theta)$ term in \eqref{eq:gendifference} can be upper bounded by  0 to yield the following upper bound on \eqref{eq:gendifference} for any $\xi>0$,
\begin{align*}
&\max_{\theta}\biggl[ \Ibb \biggl \lbrace I_{P_{X^{*}}}(\Xbb;\Ybb|\theta)\leq \Cl- \frac{\epsilon}{2}+\frac{\epsilon}{n}\biggr \rbrace \biggr]+K(n)\non\\
&\stackrel{(a)}{=} \Ibb \biggl \lbrace I_{P_{X^{*}}}(\Xbb;\Ybb|\theta^{*})\leq \min_{\theta}I_{P_{X^{*}}}(\Xbb;\Ybb|\theta)- \frac{\epsilon}{2}+\frac{\epsilon}{n}\biggr \rbrace +K(n) \non\\ 
&\leq \Ibb \biggl \lbrace \frac{\epsilon}{2}\leq \frac{\epsilon}{n}\biggr \rbrace +K(n)\non,
\end{align*} where $K(n)=\frac{F(\frac{\epsilon}{4})}{n}+\exp(-n\frac{\epsilon}{8})[|\Theta|^2+1]+\frac{A(\xi)}{n}+\exp(-n \xi), $ $\theta^{*}\in \arg \max_\theta\Ibb \biggl \lbrace I_{P_{X^{*}}}(\Xbb;\Ybb|\theta)\leq \Cl- \frac{\epsilon}{2}+\frac{\epsilon}{n}\biggr \rbrace $ in $(a)$. It is clear that the above bound vanishes as $n\rightarrow \infty$. 
\end{proofarg}
\bibliographystyle{IEEEtran}
\bibliography{ref}
\end{document}